\documentclass[12pt,a4paper]{article}

\usepackage[utf8]{inputenc}      
\usepackage[T1]{fontenc}
\usepackage{lmodern}             
\usepackage{textcomp}
\usepackage{microtype}           
\usepackage[english]{babel}
\usepackage{csquotes}

\usepackage[a4paper,left=2.5cm,right=2.5cm,top=2.5cm,bottom=2.5cm]{geometry}
\usepackage{setspace}
\usepackage{amsmath,amssymb,amsfonts}
\usepackage{mathtools}           
\usepackage{bm}                  
\usepackage{amsthm}
\allowdisplaybreaks              

\usepackage{array}
\usepackage{booktabs}
\usepackage{tabularx}
\usepackage{dcolumn}
\usepackage{multirow}
\usepackage{makecell}
\usepackage{longtable}           
\usepackage{threeparttable}
\usepackage{adjustbox}
\usepackage{pdflscape}
\usepackage{rotating}

\newcolumntype{d}[1]{D{.}{.}{#1}}

\newcommand{\sym}[1]{\ifmmode^{#1}\else\(^{#1}\)\fi}

\usepackage{graphicx}            
\usepackage{epstopdf}
\usepackage{float}
\usepackage{placeins}            
\usepackage{subcaption}
\usepackage[font=small,labelfont=bf,justification=justified,
            singlelinecheck=false]{caption}
\graphicspath{{figures/}{Figures/}{graphs/}}

\usepackage{tikz}
\usetikzlibrary{shapes.geometric,arrows,arrows.meta,positioning,calc}

\usepackage{enumitem}
\usepackage{comment}
\usepackage{xcolor}
\usepackage{authblk}             
\usepackage[title]{appendix}

\usepackage[authoryear,round,longnamesfirst]{natbib}
\bibpunct{(}{)}{;}{a}{,}{,}
\usepackage{hyperref}
\hypersetup{
    colorlinks = true,
    linkcolor  = blue,
    citecolor  = blue,
    filecolor  = magenta,
    urlcolor   = blue,
    breaklinks = true,
    pdftitle   = {Complements or Substitutes? Technology Adoption and Clinical Care Utilization: Evidence from Automated Insulin Delivery},
    pdfauthor  = {Moslem Rashidi},
    pdfkeywords= {automated insulin delivery; task-based automation;
                  staggered difference-in-differences; healthcare utilization}
}
\usepackage{doi}                 
\usepackage[capitalise,noabbrev]{cleveref}   

\usepackage{thmtools}
\usepackage{thm-restate}

\theoremstyle{plain}

\newtheorem{proposition}{Proposition}

\newtheorem{corollary}{Corollary}

\theoremstyle{definition}
\newtheorem{assumption}{Assumption}

\theoremstyle{remark}

\newtheorem{repcorollary}{Corollary}

\title{Complements or Substitutes? Technology Adoption and Clinical Care Utilization: Evidence from Automated Insulin Delivery}
\author{
Moslem Rashidi\thanks{Corresponding author. Email: \texttt{moslem.rashidi2@unibo.it}.}
\quad
Cristina Ugolini\thanks{Email: \texttt{cristina.ugolini@unibo.it}.}
\quad
Gianluca Fiorentini\thanks{Email: \texttt{gianluca.fiorentini@unibo.it}.}
\\
\small Department of Economics, University of Bologna \\
\small Piazza Scaravilli, 40126 Bologna, Italy
}
\date{\today}

\begin{document}

\maketitle

\begin{abstract}
\noindent Whether medical technology reduces or increases demand for professional care is central to assessing its implications for healthcare costs. We study automated insulin delivery (AID) adoption among 1,608 adults with type~1 diabetes treated in four specialist clinics of the Italian National Health Service, including 283 adopters. To address clinically targeted and staggered adoption, we combine risk-set coarsened exact matching with a staggered event-study design. Matching retains 181 adopters with comparable untreated controls. We adjust for differential pre-adoption trends by extrapolating the estimated pre-adoption differential trend in event time. After adjustment, AID adoption is associated with increased routine outpatient engagement. The estimated increase in the probability of a diabetologist visit is 16.3 percentage points one semester after adoption and 39.3 percentage points four semesters after adoption, against an observed visit probability of 55.2\% in the reference semester before adoption. HbA1c testing and a process-of-care index show similar patterns, although the evidence is weaker. The visit estimates are sensitive to the extrapolation assumption and should be interpreted conditionally on that restriction. The results are consistent with a task-based view of medical technology: automation may substitute for routine dosing decisions while complementing clinical labor through additional interpretive, adjustment, and supervisory tasks. Patient-facing automation may therefore increase demand for sustained specialist involvement, suggesting that its economic consequences depend not only on device costs and clinical effectiveness, but also on how technology reshapes the demand for professional care.
\end{abstract}

\begin{center}

\noindent\textbf{Keywords:} Automated insulin delivery; Type 1 diabetes; Healthcare utilization; Task-based technical change; Staggered difference-in-differences; Event-study trend extrapolation  \\
\bigskip

\textbf{JEL classification:} I11, I12, O33

\end{center}

\vfill

\thispagestyle{empty}

\section{Introduction}
\label{sec:introduction}
New medical technologies are typically evaluated on clinical grounds: do they improve health outcomes? From a health economics perspective, however, the question is broader. Do they reduce the need for professional clinical services, or do they create new forms of demand for specialist input, monitoring, and follow-up? A technology that improves outcomes while also increasing specialist contact changes the cost structure of care, not just its quality. Technological change and evolving medical practice are major drivers of healthcare expenditure growth, in part because they alter the intensity and composition of care delivered to patients (Chandra and Skinner, 2012; Laudicella et al., 2022). Assessing the economic consequences of medical technology therefore requires understanding not only whether it improves health, but also how it changes the demand for professional care.

The economics of automation provides a useful framework for understanding this trade-off. Acemoglu and Restrepo (2019) develop a framework in which automation can displace workers from tasks it takes over but can also create new tasks in which labor keeps a comparative advantage, an effect that raises rather than lowers labor demand. Both displacement from automated tasks and reinstatement through the creation of new ones have been documented in manufacturing and digital services (Autor, Levy, and Murnane, 2003; Acemoglu and Restrepo, 2019). Applied to healthcare, this framework implies that medical automation may reduce some forms of professional input while increasing the need for others. Yet whether this reinstatement mechanism operates in clinical care remains less well understood.

Automated insulin delivery (AID) systems combine a continuous glucose monitor, an insulin pump, and a control algorithm that automatically adjusts insulin delivery in response to real-time glucose data. The system modulates basal insulin delivery and may also administer automatic correction boluses, while patients remain responsible for mealtime boluses and for responding to alarms and technical problems. Many patients eligible for AID already use a glucose sensor or an insulin pump, so continuous glucose data and insulin-delivery records are not new in themselves. What distinguishes AID is the closed-loop integration of these components: glucose readings, insulin delivery, and the algorithm’s dosing decisions are recorded together as part of the same treatment process. This integrated record gives clinicians information not available under conventional management, even for patients with prior sensor or pump experience, and may therefore create new needs for specialist interpretation, treatment adjustment, and follow-up.

The opposite prediction is also plausible. In Grossman’s (1972) health-production framework, a technology that raises the productivity of the patient’s own time in managing their health may reduce their demand for purchased medical care. A related mechanism is risk compensation (Peltzman, 1975): if technology lowers the consequences of imperfect self-management, patients may rationally reduce other forms of precaution, including scheduled visits and monitoring. Both mechanisms therefore predict fewer contacts with professional care after adoption, although behavioral frictions may weaken or complicate this response (Baicker, Mullainathan, and Schwartzstein, 2015). Consistent with this substitution channel, Schmid (2015) finds that greater access to consumer health information lowers demand for physician visits. In that setting, however, the information is used directly by patients, whereas AID generates data and algorithmic decisions that may themselves require clinical interpretation. 

Empirical evidence on whether medical technology substitutes for professional care is mixed. In the broader health-IT literature, Agha (2014) finds that hospital information technology raised spending without commensurate quality gains, Lee, McCullough, and Town (2013) find only modest productivity contributions from IT investment, and B\"ockerman et al. (2025a) find that nationwide e-prescribing had no evident average effect on harmful drug co-prescribing, but substantially improved prescribing quality in rural regions, where information integration facilitated coordination across physicians. Evidence from patient-facing technologies likewise provides little support for a simple substitution effect. Roubos et al.\ (2025) study a telemonitoring platform used as a substitute for outpatient visits in COPD and as a complement to them in asthma: visit frequency held steady in both groups, while remote consultations rose substantially. Similarly, Zeltzer et al.\ (2023) find that adoption of a home diagnostic device raised primary-care utilization by 12 percent while reducing use of more intensive settings, with no increase in total spending. In a related setting, B\"ockerman et al.\ (2025b) show that e-prescribing reduced the hassle costs of prescription renewal and increased medication use, illustrating how healthcare technology can expand utilization along some margins rather than simply displace existing care. Relatedly, Horn, Sacarny, and Zhou (2022) show that adopting robotic surgery can expand utilization and shift patient allocation across providers. These findings suggest that new technologies may reallocate care composition rather than simply reduce its use.

For AID specifically, existing evidence has focused on clinical outcomes and acute healthcare utilization: adopters experience lower rates of emergency and inpatient care (Manjelievskaia et al., 2025), alongside the improvements in glycemic control documented in the clinical literature (Burnside et al., 2022; St{\aa}hl et al., 2026). What remains less known is how AID affects routine outpatient engagement: the specialist visits and laboratory monitoring that constitute the recurring component of diabetes care. 

Two features of our institutional setting may further favor complementarity between AID and professional care, although neither determines the direction of the utilization response. First, prescribing physicians retain responsibility for monitoring treatment and device use after adoption. Second, within the publicly funded Italian NHS, routine diabetes care entails little or no point-of-use monetary cost for patients covered by chronic-disease exemptions. Together, these features may make additional specialist monitoring more likely to be incorporated into the care pathway following AID adoption.

Evaluating how AID adoption affects healthcare utilization is challenging because its causal effect is difficult to identify for two main reasons. First, adoption is clinically targeted rather than randomly assigned. Physicians prescribe AID in response to a patient’s evolving clinical trajectory, so adoption timing is correlated with within-patient changes in disease severity that also affect healthcare utilization. Adopters and untreated comparison patients may therefore differ not only in their clinical history at the time of adoption, but also in the trajectory of utilization leading up to it. Second, adoption occurs at different times across patients. When treatment effects are heterogeneous across cohorts and over time, conventional two-way fixed-effects event-study estimators may assign negative weight to some cohort-time effects and distort the estimated dynamic path (de Chaisemartin and D'Haultf\oe uille, 2020; Goodman-Bacon, 2021).

The design addresses these problems in three steps. First, risk-set coarsened exact matching pairs adopters with untreated comparison patients who share a comparable pre-adoption clinical history within each adoption risk set, retaining adopters whose history has support among eligible controls and reducing observed differences in clinical characteristics at the time of adoption. Second, the interaction-weighted estimator of Sun and Abraham (2021) builds the dynamic treatment path from comparisons between newly treated and not-yet-treated or never-treated patients only, avoiding the negative weighting problem that can arise in conventional two-way fixed-effects event studies. Third, following Dobkin et al.\ (2018) and Freyaldenhoven et al.\ (2025a, 2025b), we project the pre-adoption trajectory into the post-adoption window and assess whether the estimated effect remains after accounting for that trend. This projection rests on an explicit identifying restriction that residual confounding follows a low-dimensional path in event time. We treat this restriction as an assumption to be made transparent and assessed, not as a property delivered by the event study design itself.  

We find that adoption raises routine outpatient engagement once differential pre-adoption trends are taken into account. The estimated effect on the probability of a diabetologist visit is 16.3 percentage points one semester after adoption and rises to 39.3 percentage points four semesters later against a pre-adoption mean in the reference semester immediately before adoption of 55.2 percent. HbA1c testing and the broader process-of-care index move in the same direction, although the evidence for these outcomes is weaker. Because the visit estimates remain sensitive to departures from the extrapolation assumption, we interpret them as conditional on that identifying restriction rather than as design-based causal effects (Rambachan and Roth, 2023). Overall, the results indicate that AID does not reduce routine specialist involvement. This pattern is consistent with medical automation complementing, rather than simply substituting for, professional clinical inputs. More broadly, patient-facing technologies may alter the composition of healthcare production and the demand for clinical labor, with implications for the cost structure of chronic care and specialist capacity. 

The paper makes two contributions. First, we provide evidence on how AID adoption affects routine outpatient engagement, a margin not addressed by the existing AID literature. This complements previous evidence on improvements in glycemic outcomes and reductions in acute and inpatient utilization among AID users (Burnside et al., 2022; Manjelievskaia et al., 2025; Ståhl et al., 2026) and provides a more complete picture of how AID may affect the composition of healthcare use. The closest evidence concerns continuous glucose monitoring (CGM), a component technology rather than the closed-loop system itself. Karter et al.\ (2021) find that initiation of real-time CGM was associated with fewer outpatient visits and more telephone contacts, suggesting that technology may alter the mode of clinical interaction. AID differs because it integrates glucose data, insulin delivery, and algorithmic dosing decisions into a single treatment record, potentially generating additional information for clinical review. We examine whether this more integrated form of technology changes the intensity of routine specialist care. 

Second, we extend the task-based framework of automation to a patient-facing therapeutic technology, where automation occurs in patients’ day-to-day disease management while professional input is supplied by clinicians. This differs from settings in which algorithms assist clinicians directly. Evidence from those settings is mixed: Agarwal et al.\ (2023), for example, find that AI assistance does not improve radiologists’ diagnostic accuracy on average and that human and algorithmic inputs behave more like substitutes than complements on a given case. Our findings are instead consistent with a setting in which automation may complement professional clinical inputs rather than simply displace them. The estimates do not allow us to distinguish task reinstatement from an expansion in the scale or intensity of care, but, conditional on the identifying assumptions and on inelastic demand for glycemic control, they are difficult to reconcile with a pure substitution or risk-compensation mechanism. More broadly, this implies that the economic consequences of patient-facing medical technology may extend beyond device costs and clinical effectiveness to the professional resources required to support its use.

Section~\ref{sec:institutional background and conceptual framework} sets out the institutional background and the conceptual framework; Section~\ref{sec:Data} describes the data; Section~\ref{sec:empirical_strategy} details the empirical strategy; Section~\ref{sec:results} reports results and robustness; Section~\ref{sec:Conclusion} concludes.

\section{Institutional Background and Conceptual Framework}
\label{sec:institutional background and conceptual framework}

\subsection{Clinical pathway and AID adoption}
\label{subsec:background}
Our institutional setting is the Italian National Health Service (NHS), with a focus on the Emilia-Romagna region. The NHS is a universal, tax-funded system organized regionally. Regions are responsible for planning and financing healthcare, while Local Health Authorities organize service delivery through hospitals, community services, and specialist outpatient clinics. In Emilia-Romagna, type~1 diabetes (T1D) is managed primarily through specialist outpatient services, where salaried diabetologists oversee insulin therapy, glycemic control, diabetes technologies, laboratory monitoring, and complication screening according to national clinical standards. The regional public healthcare system provides visits, laboratory tests, and prescribed devices, with patient co-payments governed by chronic-disease exemption rules.

As described in the Introduction, automated insulin delivery (AID) integrates continuous glucose monitoring, insulin delivery, and algorithmic control. Hybrid closed-loop AID systems were commercially available in Italy by early 2019. Clinical standards strongly recommend AID for suitable patients with T1D, based on evidence of improvements in HbA1c, time in range, and glycemic variability (Pintaudi et al., 2026). Access to AID is clinically targeted. National guidelines in place since 2022 recommended automated pump–sensor systems for patients with inadequate glycemic control despite existing pump and continuous glucose monitoring technologies (Pintaudi et al., 2022). Regional guidance issued in December 2024 subsequently formalized eligibility criteria in Emilia-Romagna, including poor glycemic control, problematic hypoglycaemia, and high glycemic variability (Regione Emilia-Romagna, 2024). The 2024 regional guidance therefore largely formalized eligibility principles that were already reflected in national clinical recommendations during the main period of AID diffusion in our sample.

Eligibility does not, however, translate automatically into adoption. In practice, prescription reflects individual clinical assessment, including glycemic history, complication burden, prior experience with pump or continuous glucose monitoring technology, patient readiness, and capacity to manage the device (Pintaudi et al., 2026). Our empirical analysis focuses on AUSL Romagna, the largest Local Health Authority in Emilia-Romagna by population served, covering over one million residents. Within this authority, access to AID also requires a formal administrative authorization process accompanied by documentation of patient suitability and training. Only regulated, commercially registered AID systems are eligible for reimbursement; open-source and do-it-yourself systems are excluded (Regione Emilia-Romagna, 2024).

The patients in our study received care at four specialist diabetes clinics within AUSL Romagna, serving the areas of Ravenna, Lugo, Faenza, and Rimini. Patients are followed longitudinally through periodic diabetology visits, HbA1c testing, laboratory monitoring, and complication screening. Recorded AID adoptions in AUSL Romagna are concentrated in the first half of the calendar year, with first-half counts exceeding second-half counts in each year from 2021 onward (Table~\ref{tab:adoption_timing} and Figure~\ref{fig:aid_adoption_timing}). This pattern is particularly pronounced in 2022, when AID diffusion accelerated in this setting (Figure~\ref{fig:aid_penetration_calendar}). Since initiating AID involves clinical assessment, administrative authorization, and patient training, the observed adoption date may reflect both clinical and organizational factors. This timing pattern is descriptive and is not used as a source of exogenous variation. Our identification strategy instead relies on matched risk-set comparisons and the event-time restriction described in Section~\ref{sec:empirical_strategy}.

\subsection{Conceptual framework and empirical predictions}
\label{subsec:conceptual_framework}
Whether medical automation substitutes for or complements clinical care is theoretically ambiguous. In Grossman’s health-production framework (1972), technology change can increase the productivity of health-producing inputs and thereby reduce the resources required to achieve a given health outcome. Acemoglu and Restrepo's (2019) task-based framework offers a complementary perspective. Automation can reassign existing tasks from human input to technology, generating a \emph{displacement effect}, while at the same time creating new tasks in which human judgment retains a comparative advantage, generating a \emph{reinstatement effect}. Applied to healthcare, these mechanisms generate competing predictions for the demand for clinical input. AID may reduce professional involvement by automating routine management tasks, but it may also increase it if the technology creates new needs for interpretation, adjustment, and supervision. Which mechanism dominates is therefore an empirical question.

AID automates part of day-to-day insulin management, including basal-rate adjustments and some correction decisions, thereby reducing the need for clinical input tied to conventional dose management. At the same time, it generates an integrated record of glucose readings, insulin delivery, and algorithmic responses that may require clinical interpretation, treatment adjustment, patient training, and troubleshooting. AID may therefore displace some routine clinical tasks while creating or expanding others in which professional judgment remains important.

To organize these mechanisms, we distinguish between the share of management tasks requiring clinical input and the overall scale of resources devoted to glycemic control. Let $I$ denote the extent of automation and $N-I$ the measure of diabetes-management tasks still performed with clinical input; a rise in $I$ moves existing tasks onto the device, a rise in $N$ adds tasks requiring clinical input. Let $M$ denote total demand for clinical input, $c_H(I,N)$ the unit resource cost of producing glycemic control, $H^{*}$ the desired level of glycemic control, and $r_M$ the opportunity cost of clinical input. The task-production model in Appendix~\ref{app:task_model} yields
\begin{equation}
M
=
(N-I)\frac{c_H(I,N)H^{*}}{r_M}.
\label{eq:clinical_demand_simple}
\end{equation}
Equation~\eqref{eq:clinical_demand_simple} splits clinical demand into two margins: the clinical task content of diabetes management, $N-I$, and a scale term, $c_HH^{*}/r_M$, reflecting total resources spent on control relative to the cost of clinical input.

Let
\begin{equation}
\varepsilon
\equiv
-
\frac{d\ln H^{*}}{d\ln c_H}
>0
\label{eq:control_elasticity_simple}
\end{equation}
denote the elasticity of desired control with respect to its unit cost. This is not a conventional patient price elasticity: under the Italian NHS, chronic patients pay no money price for routine diabetes care. Rather, $\varepsilon$ captures how the desired intensity of glycemic control responds when achieving a given level of control becomes less resource-intensive. 

The scale channel is relevant in this setting. For example, national guidelines recommend that patients spend at least 70 percent of time within the target glucose range (70–180 mg/dL) (Pintaudi et al., 2022), an outcome that can be routinely measured through continuous glucose monitoring. This illustrates how technological change can expand the set of clinically measurable and attainable treatment targets, potentially increasing the desired intensity of glycemic control. We do not infer from this whether $\varepsilon$ exceeds one; the example simply motivates the possibility of a scale response.

Under a constant-elasticity approximation, the change in clinical demand
between the pre-adoption and post-adoption states is
\begin{equation}
\Delta\ln M
=
\underbrace{
\ln\left(
\frac{N_{1}-I_{1}}{N_{0}-I_{0}}
\right)
}_{\text{change in clinical task content}}
+
\underbrace{
(1-\varepsilon)\Delta\ln c_H
}_{\text{scale effect}}
-
\Delta\ln r_M.
\label{eq:net_clinical_demand}
\end{equation}
Because adoption happens inside the same publicly financed clinical system, we treat the short-run opportunity cost $r_M$ as locally fixed in the patient-level event study. The first term in equation~\eqref{eq:net_clinical_demand} carries the displacement-reinstatement trade-off: an outward move in the automation frontier, $I_1>I_0$, shrinks the clinical task content of care, while new interpretation, adjustment, and supervision tasks, $N_1>N_0$, expand it.

The second term works through the scale of control. Write the unit-cost
semi-elasticities as
\begin{equation}
\pi_I
\equiv
\frac{\partial\ln c_H}{\partial I},
\qquad
\pi_N
\equiv
\frac{\partial\ln c_H}{\partial N},
\qquad\text{so that}\qquad
\Delta\ln c_H
\approx
\pi_I\,\Delta I
+
\pi_N\,\Delta N .
\label{eq:cost_elasticities}
\end{equation}
Automation lowers the unit cost of achieving a given level of control,
so Assumption~\ref{ass:cost_advantage}, stated formally in Appendix~\ref{app:task_model},
implies $\pi_I \leq 0$. The sign of $\pi_N$ is not restricted, because it compares the cost of the task entering at the upper end of the task range with the cost of the task retired at its lower end. If demand for control is elastic enough ($\varepsilon>1$), a cost reduction expands the desired level of control by enough to raise demand for the remaining clinical tasks. If $\varepsilon<1$, the same cost reduction instead shrinks the scale component. The model does not sign the total effect on its own.

Equation~\eqref{eq:net_clinical_demand} yields two competing predictions.
If AID mostly reallocates existing management tasks to the device, with little new clinical work and no strong scale response, routine clinical contact should fall or hold steady---the substitution hypothesis.
If the interpretation, adjustment, training, and supervision tasks AID creates are large enough, routine clinical contact should rise---the reinstatement hypothesis.

Two features of the institutional setting may reinforce complementarity.
The prescribing diabetologist keeps responsibility for monitoring treatment
and device use after adoption, and chronic patients face no direct monetary
price for routine diabetes care, though time and attendance costs remain.
Neither feature determines the sign of the utilization effect on its own,
but both push toward embedding AID in an intensive outpatient care pathway
rather than away from it.

The timing of the response carries information too. Some new tasks appear
immediately at adoption, device setup and patient training among them.
Others grow more important gradually, as clinicians and patients build up
experience with the device and use repeated data histories to guide
treatment. We represent this as
\begin{equation}
N_m
=
N_0
+
\Delta N^{\mathrm{impact}}
+
\Delta N^{\mathrm{learning}}
\left(1-e^{-\Psi m}\right),
\qquad
m\geq 0,
\quad
\Psi>0,
\label{eq:new_tasks_dynamics_simple}
\end{equation}
where $\Delta N^{\mathrm{impact}}$ denotes tasks that appear at initiation
and $\Delta N^{\mathrm{learning}}>0$ denotes tasks that build up through
repeated use. This formulation assumes the device remains in place over the
post-adoption window, since the learning term describes clinicians
accumulating experience with a given patient's device record. Where use is
interrupted, the observed path understates the accumulation the mechanism
describes.

\begin{corollary}[Adoption dynamics]
\label{cor:impact-dynamics} 
Suppose equation~\eqref{eq:new_tasks_dynamics_simple} holds with $\Delta N^{\mathrm{learning}}>0$ and let $\Delta I>0$, so that automation occurs at adoption while the complementary clinical tasks generated by AID develop gradually with experience. If the net effect of task creation on visit demand is positive and task creation at initiation is small relative to displacement, then the causal path implied by Proposition~\ref{prop:net} is flat or negative at $m=0$ and rises monotonically toward its long-run level, with $\Psi$ governing how quickly clinicians learn to convert the device record into treatment changes. Appendix~\ref{app:task_model} states these two sufficient conditions, (C1) and (C2), formally.
\end{corollary}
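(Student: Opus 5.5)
The plan is to write the causal path explicitly as a function of event time $m$ and then read off its sign at $m=0$ and its monotonicity for $m>0$.

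\textbf{Step 1: the causal path.} Using equation~\eqref{eq:net_clinical_demand} with $r_M$ held locally fixed, and the linearization \eqref{eq:cost_elasticities}, I define the effect at event time $m$ as
\begin{equation*}
\tau(m)\equiv \Delta\ln M_m=\ln\!\left(\frac{N_m-I_1}{N_0-I_0}\right)+(1-\varepsilon)\bigl(\pi_I\,\Delta I+\pi_N\,(N_m-N_0)\bigr).
\end{equation*}
Here $I_1=I_0+\Delta I$ is attained at adoption. The task frontier $N_m$ follows equation~\eqref{eq:new_tasks_dynamics_simple}. This is the causal path implied by Proposition~\ref{prop:net}, evaluated along the adoption dynamics, which I take as given from the appendix.

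\textbf{Step 2: dependence on $m$ through $N_m$.} All dependence on $m$ runs through $N_m$. Its time derivative is $dN_m/dm=\Delta N^{\mathrm{learning}}\Psi e^{-\Psi m}>0$. By the chain rule,
\begin{equation*}
\frac{d\tau}{dm}=\left[\frac{1}{N_m-I_1}+(1-\varepsilon)\pi_N\right]\frac{dN_m}{dm}.
\end{equation*}
The bracket is the marginal effect of task creation on log visit demand. I will identify it with condition (C1), the requirement that the net effect of task creation is positive, so that it holds uniformly over the post-adoption range of $N$.

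\textbf{Step 3: monotonicity and the long-run limit.} Under (C1) the bracket is positive, and $dN_m/dm>0$, so $\tau$ is strictly increasing in $m$. As $m\to\infty$, $N_m\to N_0+\Delta N^{\mathrm{impact}}+\Delta N^{\mathrm{learning}}$, so $\tau(m)$ converges to its long-run value $\tau(\infty)$. The rate of convergence is governed by $e^{-\Psi m}$, which delivers the interpretation of $\Psi$ as the speed at which clinicians learn.

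\textbf{Step 4: the sign at $m=0$.} At $m=0$ we have $N_0+\Delta N^{\mathrm{impact}}$. Condition (C2) should then be stated as the requirement that impact task creation is small relative to displacement:
\begin{equation*}
\ln\frac{N_0+\Delta N^{\mathrm{impact}}-I_1}{N_0-I_0}+(1-\varepsilon)\bigl(\pi_I\Delta I+\pi_N\Delta N^{\mathrm{impact}}\bigr)\le 0.
\end{equation*}
A sufficient primitive version is $\Delta N^{\mathrm{impact}}\le\Delta I$ when $\varepsilon\le1$, which makes the first term nonpositive. When $\varepsilon\le1$, the scale term also has sign $(1-\varepsilon)\pi_I\Delta I\le0$ by Assumption~\ref{ass:cost_advantage}.

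\textbf{Main obstacle.} The hard part is (C1) together with the scale term. Because $\pi_N$ is unsigned and $\varepsilon$ may exceed one, the bracket in Step 2 need not be positive automatically. My plan is to state (C1) directly as positivity of that bracket on the interval $[N_0+\Delta N^{\mathrm{impact}},\,N_0+\Delta N^{\mathrm{impact}}+\Delta N^{\mathrm{learning}}]$, rather than attempting to derive it from primitives. I will also ensure that $N_m>I_1$ on this interval so the logarithm is defined; this is the appendix's interior-task-range condition.
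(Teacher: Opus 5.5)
Your argument goes through for the conditions as you formalize them, but it follows a different route from the paper. The paper linearizes Proposition~\ref{prop:net}(c) entirely around $(I_0,N_0)$, so the path is affine in $e^{-\Psi m}$: $\tau_m\approx A[\Delta N^{\mathrm{impact}}+\Delta N^{\mathrm{learning}}(1-e^{-\Psi m})]-D\,\Delta I$, with constants $A=1/(N_0-I_0)+(1-\varepsilon)\pi_N$ and $D=1/(N_0-I_0)-(1-\varepsilon)\pi_I$. In the paper, (C1) is the scalar inequality $A>0$ and (C2) is $A\,\Delta N^{\mathrm{impact}}\le D\,\Delta I$. Monotonicity and the limit are then read off the exact increment $\tau_{m+1}-\tau_m=A\,\Delta N^{\mathrm{learning}}e^{-\Psi m}(1-e^{-\Psi})$.

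You instead keep the task-content term in its exact log form from \eqref{eq:net_clinical_demand} and linearize only $\Delta\ln c_H$. Strictly, your $\tau(m)$ is therefore not the path of Proposition~\ref{prop:net}(c), which also replaces the log ratio by $(\Delta N-\Delta I)/(N-I)$. What you gain is freedom from the second-order error in that term, which the paper itself flags after the proposition. What you pay is a state-dependent marginal effect $1/(N_m-I_1)+(1-\varepsilon)\pi_N$, evaluated at the post-adoption frontier $I_1$, with $\pi_N$ still frozen at its pre-adoption value.

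Your (C1) is therefore a condition on an interval. Because $1/(N-I_1)$ decreases in $N$, it is equivalent to positivity at the long-run point $N_0+\Delta N^{\mathrm{impact}}+\Delta N^{\mathrm{learning}}$. It is not nested with the paper's $A>0$: it is stronger when total task creation exceeds $\Delta I$ and weaker when it falls short. Likewise, your convergence is at rate $\Psi$ only up to bounds on the bracket, not an exact exponential.

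One claim in Step 4 is false as written. You say that $\varepsilon\le1$ and $\Delta N^{\mathrm{impact}}\le\Delta I$ suffice for (C2), because the scale term ``has sign $(1-\varepsilon)\pi_I\Delta I\le0$''. But the scale term at $m=0$ is $(1-\varepsilon)(\pi_I\Delta I+\pi_N\Delta N^{\mathrm{impact}})$, and $\pi_N$ is unsigned. Take $\varepsilon<1$, $\Delta N^{\mathrm{impact}}=\Delta I$ and $\pi_N>-\pi_I$. Then the task-content term is zero and the scale term equals $(1-\varepsilon)(\pi_I+\pi_N)\Delta I>0$, so $\tau(0)>0$.

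You need the additional restriction the paper imposes, $\pi_I+\pi_N\le0$. With it, $0\le\Delta N^{\mathrm{impact}}\le\Delta I$ gives $\pi_I\Delta I+\pi_N\Delta N^{\mathrm{impact}}\le0$ whatever the sign of $\pi_N$. This does not affect the corollary itself, which is stated under (C1) and (C2). But the primitive sufficient condition you offer does need that extra hypothesis.
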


The corollary gives the event study a second prediction, sharper than the sign of the long-run effect: a coefficient path that starts near zero, climbs over the first several post-adoption semesters, and flattens once learning is exhausted. The prediction is conditional. Equation~\eqref{eq:new_tasks_dynamics_simple} fixes only the timing of task creation; the sign at impact and the direction of the path depend on the relative size of task creation and displacement at initiation, which the data cannot separate. A path matching the predicted shape is evidence for the joint hypothesis of reinstatement with gradual learning under the conditions of Appendix~\ref{app:task_model}, while a path departing from it does not by itself refute reinstatement, since it could instead indicate that task creation at initiation was large relative to displacement.

The predicted dynamic path is not unique to the reinstatement mechanism. A gradually increasing post-adoption effect is consistent with the accumulation of complementary clinical tasks, but the shape alone cannot distinguish a path generated by this mechanism from one generated by a time-varying confound evolving around adoption; Appendix~\ref{app:identification} states this formally as a non-identification result. Section~\ref{subsec:trend_extrapolation} develops the restriction we use to separate the two, and Section~\ref{subsec:diagnostics} reports diagnostics and sensitivity analyses quantifying how much departure from that restriction the results can tolerate.

Our empirical outcomes measure diabetologist contact, HbA1c measurement,
and a broader index of recommended diabetes monitoring. These are observed
measures of routine outpatient engagement rather than separate structural
measures of individual tasks. None of them directly measures the desired
level of control, $H^{*}$, or the scale elasticity, $\varepsilon$. HbA1c
testing records whether a measurement was taken during the semester, not
the value obtained or the target toward which the care plan is directed. It
therefore enters the model on the same side as the visit outcome, as a
clinical input into a task above the automation frontier.

What distinguishes HbA1c testing from the visit outcome is its position within the task range. HbA1c testing is not itself a task on the automation frontier, since AID does not determine whether the test is ordered. Rather, it is a complementary monitoring input that clinicians may use to assess glycemic control and treatment effectiveness alongside the information generated by the device. Laboratory demand may therefore increase with $N$ if AID is accompanied by a more intensive clinical monitoring pathway, even though testing is neither automated nor itself a reinstated task. The composite monitoring index captures whether adoption comes with a broader intensification of diabetes care rather than a change along any single margin. The event study estimates the reduced-form effect of AID
adoption on these outcomes. It does not separately identify the changes in $I$ and $N$, the scale elasticity $\varepsilon$, or the effect of adoption on the unit cost $c_H$. What it provides is an economic interpretation of the sign and timing of the response.

\section{Data}
\label{sec:Data}
We use linked administrative and clinical data for adult patients with type~1 diabetes followed by the four diabetes clinics of Ravenna, Lugo, Faenza, and Rimini. The clinics serve catchment areas covering approximately 730,000 residents, about two-thirds of the population served by AUSL Romagna. The data combine three sources: the Meteda electronic medical record system, the Auxilium diabetes-device registry, and the population registry. 

Meteda provides longitudinal clinical records since 2016, including specialist visits, laboratory tests, treatments, and diabetes complications and comorbidities. Auxilium records reimbursed diabetes-device supplies and is used to identify the first observed semester of AID-related device provision. Because Auxilium records device categories rather than individual commercial models or algorithm activation, treatment should be interpreted as the first observed supply of an AID-related integrated pump system rather than direct observation of closed-loop activation. The linked data separately identify conventional pump and other diabetes-device use. Importantly, the AID-related indicator follows a markedly different diffusion pattern from diabetes-device use more generally: broad device coverage expanded rapidly during 2020–2021, whereas AID-related adoption increased later, particularly from 2022 onward (Table~\ref{tab:a6a_sample_device_trends}). Only two of the 283 first observed AID-related supplies occur before 2021 (Table~\ref{tab:adoption_timing}). These patterns support the interpretation of the treatment indicator as a distinct technology transition, although some residual treatment misclassification cannot be ruled out.

The clinical and device records are linked to the population registry, which provides year of birth, sex, district of residence, and mortality information. We restrict the sample to adult patients with type~1 diabetes who were alive on January 1, 2020, were followed by one of the four diabetology services, and had linkable records across the three data sources. The resulting sample includes 1,608 patients, of whom 283 are observed adopting AID and 1,325 never adopt during the study period; Table~\ref{tab:sample_construction} reports the corresponding sample-construction details.

We reorganize the underlying clinical records into a patient-semester panel spanning 2016--2025, yielding 28,056 active patient-semester observations (Table~\ref{tab:sample_construction}). We aggregate the data at the semester level to balance temporal precision around adoption against the relatively low frequency of many routine monitoring activities. The final semester, 2025H2, is potentially incomplete and is therefore excluded from the estimation window.

The treatment is adoption of an automated insulin delivery system. As described in Section~\ref{subsec:background}, AID adoption became substantially more frequent from 2021 onward and was particularly concentrated in 2022H1 (Table~\ref{tab:adoption_timing}, Figures~\ref{fig:aid_adoption_timing} and~\ref{fig:aid_penetration_calendar}). Among AID adopters, median observed follow-up after adoption is 2.8 years (interquartile range: 0.9–3.5), while approximately one quarter have less than one year of post-adoption observation.

The outcomes measure routine outpatient engagement with diabetes care.\label{subsec:outcomes} Our primary outcome is an indicator for at least one recorded diabetologist visit during the semester, observed in 59.8 percent of the active patient-semesters. The second is an indicator for at least one HbA1c measurement, observed in 49.0 percent of active semesters. We also build a six-component process-of-care index intended to capture the set of routine examinations and specialist visits included in the diabetes clinical care pathway. The index counts recorded HbA1c, LDL cholesterol, albuminuria, kidney-function testing, eye examination, and diabetologist visits. The index averages 2.959 activities per semester and is intended to capture broader monitoring intensity without placing undue weight on any single, relatively infrequent process. Table~\ref{tab:main_outcomes} provides the complete outcome definition and descriptive means. 

\section{Empirical strategy}
\label{sec:empirical_strategy}
AID adoption is clinically targeted and staggered over time, creating two main challenges for causal identification. First, patients who adopt AID may differ from untreated patients not only in observed clinical characteristics, but also in the trajectory of disease management leading up to adoption. In diabetes care, glycemic control and healthcare utilization may be jointly driven by dynamic, time-varying unobserved heterogeneity, so ignoring this endogeneity can bias estimated relationships between the two (Gil, Li Donni, and Zucchelli, 2019). Second, because patients adopt at different dates, conventional two-way fixed-effects event-study estimators may be biased when treatment effects vary across cohorts and over time.

Our empirical strategy addresses these challenges in three steps. First, within each adoption risk set, we use coarsened exact matching to compare adopters with patients who remain untreated and have similar pre-adoption clinical histories. Second, we estimate cohort-specific dynamic treatment effects using an interaction-weighted event-study design that relies on not-yet-treated and never-treated patients as controls and aggregates cohort-specific effects with nonnegative weights. Third, because matching need not eliminate differential pre-adoption trajectories, we adjust the event-study estimates by extrapolating the pre-adoption trend in event time under an explicit identifying restriction. We report both the raw matched event-study estimates and their trend-adjusted counterparts.

The resulting estimand is a dynamic average treatment effect on the treated for adopters within the matched common support. Identification therefore relies on two distinct ingredients: comparability in observed pre-adoption histories within each risk set and a restriction on the evolution of residual time-varying confounding. Sections~\ref{subsec:model_identification}--~\ref{subsec:trend_extrapolation} describe the estimand, matching procedure, and trend-extrapolation assumption in turn.

\subsection{Estimand and event-study design}
\label{subsec:model_identification}
Let $E_i$ denote the semester of patient $i$'s first observed AID-related device supply, with $E_i=\infty$ for patients who do not adopt during the observation window. As discussed in Section~\ref{sec:Data}, this records device provision rather than confirmed closed-loop activation, so $E_i$ should be read as the first semester in which patient $i$ is observed to have received the technology. The estimation window ends in 2025H1. We exclude 2025H2 because process and device information for that semester is administratively truncated; patients first supplied in 2025H2 do not form an adoption cohort, but they remain available as not-yet-treated controls for earlier cohorts.

Let $g$ index adoption cohorts, so that cohort $g$ contains the patients with $E_i=g$, and define $T_{ig}=\mathbf 1\{E_i=g\}$. Let $\mathcal R_g$ denote the matched risk set for cohort $g$, consisting of those adopters together with their matched not-yet-treated and never-treated controls, constructed as described in Section~\ref{subsec:riskset_matching}. Later-treated controls contribute observations only while they remain untreated, so the risk-set observation set is
$$
\mathcal O_g=\bigl\{(i,t): i\in\mathcal R_g,\; T_{ig}=1 \text{ or } t<E_i\bigr\},
$$
where $t<E_i$ holds automatically for never-treated patients. Let $\mathcal G$ collect the cohorts with a nonempty matched risk set.

Event time for cohort $g$ is $m=t-g$, and we use a window of four semesters on either side of adoption. For $-4\le m\le 4$ define $D^m_{itg}=T_{ig}\mathbf 1\{t-g=m\}$, with binned endpoints
$$
D^{-5}_{itg}=T_{ig}\mathbf 1\{t-g\le -5\},
\qquad
D^{5}_{itg}=T_{ig}\mathbf 1\{t-g\ge 5\}.
$$
The semester immediately preceding adoption, $m=-1$, is the reference period and is omitted.

The risk-set observation sets are stacked, so a patient belonging to several risk sets appears once in each. On the stacked panel we estimate
\begin{equation}
y_{itg}
=
\alpha_{ig}+\gamma_t
+\sum_{\substack{m=-5\\ m\neq-1}}^{5}
\beta_{gm}\,D^m_{itg}
+C_{it}+u_{itg},
\qquad (i,t)\in\mathcal O_g,\ g\in\mathcal G,
\label{eq:eventstudy}
\end{equation}
weighted by the CEM weights $\omega_{ig}$ of equation~\eqref{eq:cemweights}. Here $\alpha_{ig}$ is a patient-by-risk-set fixed effect, so each copy of a patient carries its own time-invariant level, while $\gamma_t$ is a calendar-semester fixed effect common across risk sets. Because $D^m_{itg}$ is indexed by the risk set $g$ to which observation $(i,t)$ belongs, it equals one only for cohort-$g$ adopters in their own risk set. Every control copy in $\mathcal R_g$, including a later adopter serving as a not-yet-treated control, has $T_{ig}=0$ and hence $D^m_{itg}=0$ for all $m$.

Equation~\eqref{eq:eventstudy} is therefore a stacked event-study regression estimated with the interaction-weighted estimator of Sun and Abraham (2021). The cohort-by-event-time coefficients $\beta_{gm}$ are estimated jointly by CEM-weighted least squares and then aggregated across cohorts with the share weights in equation~\eqref{eq:aggregation} below. Because treated copies carry an event-time indicator in every semester except the reference semester, whose level is absorbed by $\alpha_{ig}$, the calendar effects $\gamma_t$ are identified from the untreated control copies, pooled across all risk sets. The risk sets are thus not estimated as separate regressions, and the counterfactual trend for each cohort comes from this pooled, CEM-weighted control group. The clean-control structure is preserved by construction: a patient who adopts in a later semester serves as a control for an earlier cohort only while untreated, and observations at and after that patient's own adoption are excluded from the control copy. No post-adoption observation therefore enters as a control.

The coefficient $\beta_{gm}$ is the cohort-$g$ event-time effect relative to the omitted semester $m=-1$. Suppose three conditions hold: no anticipation; parallel trends, meaning that absent adoption the untreated outcomes of matched cohort-$g$ adopters would have evolved, relative to $m=-1$, like those of the CEM-weighted pool of untreated control copies; and no residual time-varying confounding. Then $\beta_{gm}$ identifies the cohort-specific average treatment effect on the treated among matched adopters,
\begin{equation}
\tau_{gm}
=
\mathbb E\bigl[
y_{i,g+m}(g)-y_{i,g+m}(\infty)
\,\big|\,
E_i=g,\ i\in\mathcal M_g
\bigr],
\label{eq:catt}
\end{equation}
where $\mathcal M_g$ is the set of cohort-$g$ adopters retained on matched common support, $y_{it}(g)$ is the potential outcome under first adoption in semester $g$, and $y_{it}(\infty)$ is the potential outcome under no adoption.

We aggregate the cohort-specific coefficients using nonnegative cohort-share weights. Let $\mathcal G_m$ be the set of cohorts with treated support at relative time $m$, and let $N_{gm}$ be the number of matched treated patient-semester observations from cohort $g$ with $D^m_{itg}=1$. At interior event times each treated patient contributes at most one observation, so $N_{gm}$ is the number of matched cohort-$g$ adopters observed at $m$; at the binned endpoints a patient may contribute several. Define
\begin{equation}
\widetilde{w}_{gm}=\frac{N_{gm}}{\sum_{g'\in\mathcal G_m}N_{g'm}},
\qquad
\theta_m=\sum_{g\in\mathcal G_m}\widetilde{w}_{gm}\,\beta_{gm}.
\label{eq:aggregation}
\end{equation}
The weights are nonnegative and sum to one at each $m$, so $\theta_m$ places no negative weight on any cohort-by-period effect. They are distinct from the CEM weights: $\omega_{ig}$ operates within a risk set and determines how matched controls contribute to $\beta_{gm}$, whereas $\widetilde{w}_{gm}$ operates across cohorts after the $\beta_{gm}$ have been estimated. The corresponding causal aggregate is $\tau_m=\sum_{g\in\mathcal G_m}\widetilde{w}_{gm}\tau_{gm}$.

Because the panel ends in 2025H1, $\mathcal G_m$ is not constant across $m$. At $m=0$, all matched adoption cohorts with observed support at that event time can contribute to $\theta_0$, whereas at $m=4$ only cohorts adopting by 2023H1 can contribute to $\theta_4$. The aggregated path therefore tracks a narrowing and increasingly early-adopting mix of cohorts as $m$ grows, rather than a fixed population followed over a fixed horizon. For the two binary outcomes, any recorded visit and HbA1c measurement, equation~\eqref{eq:eventstudy} is a linear probability model and $\tau_m$ is in percentage points. The third outcome, the composite process index, counts HbA1c, eGFR, albuminuria, LDL, eye examination, and visit within the semester; we analyze it as $\log(1+\text{count})$ to retain zero-activity semesters. Because adoption changes the share of zeros, the magnitude and possibly the sign of an effect on this scale depend on the added constant (Chen and Roth, 2024), so we interpret only its timing. Under the reinstatement mechanism of Section~\ref{subsec:conceptual_framework}, $\{\tau_m: m\ge 0\}$ should be positive and, with learning, should rise over the first several semesters.

The term $C_{it}$ collects latent time-varying confounding. It is useful to write
\[
C_{it}=\lambda_i'F_t+\xi\,\eta_{it},
\]
where $\lambda_i'F_t$ is a low-dimensional common factor with patient-specific loadings and $\eta_{it}$ is a patient-specific component and $\xi$ is a coefficient that measures how much the patient-specific component contributes to the confounding term. Patient-by-risk-set fixed effects absorb permanent differences across patients, and semester fixed effects absorb additive calendar shocks common to all patients in a semester. Neither absorbs an interactive factor whose loading $\lambda_i$ varies across patients, nor a component of $\eta_{it}$ that evolves systematically as adoption approaches. In this setting the route to AID runs through individual eligibility review and the administrative queue rather than through an aggregate shock whose intensity differs across patients in a way correlated with adoption timing, which makes the patient-specific channel the more plausible of the two. Our baseline restriction, stated in Section~\ref{subsec:trend_extrapolation}, therefore acts on the event-time path of the aggregate residual bias. Left unrestricted, $C_{it}$ leaves the path $\{\tau_m\}$ unidentified, because a suitably chosen confound rationalizes any observed event study; Appendix~\ref{app:identification} states this formally. Matching improves comparability in observed pre-adoption clinical histories at the time of adoption, while trend extrapolation addresses residual differences that evolve systematically with event time. Both components rely on assumptions motivated by the clinical setting rather than on the event-study design itself.

\subsection{Risk-set matching}
\label{subsec:riskset_matching}
For each adoption semester $g$, we compare treated patients with $E_i=g$ only to patients still untreated at $g$, namely those with $E_i>g$ or $E_i=\infty$, so no already-treated patient enters as a control. AID eligibility is not fixed at baseline. It is produced by a patient's own evolving glycemic history, so the right comparison group for an adopter is the set of patients who are, at that same moment, equally eligible and equally untreated. A single baseline match would treat propensity to adopt as fixed and would compare patients on characteristics measured years before the decision that selected them into treatment.

Within each semester's risk set, treated patients and eligible controls are coarsened-exact-matched on pre-adoption history over $[g-4,\,g-1]$. The matching variables are mean HbA1c, mean eGFR, mean seven-item measured- and covered-process counts (HbA1c, LDL, blood pressure, BMI, albuminuria, eGFR, and eye examination), prior pump use, any complication, and prior acute care, with exact matching on sex and age band. We coarsen into clinically meaningful bins chosen in advance, which fixes the tolerated imbalance ex ante and lets a reader see what comparability means in each matched set (Iacus, King and Porro, 2012). We keep only strata with at least one treated patient and one eligible control, so common support holds by construction. Of the 283 AID adopters observed in the full sample, 181 (64.0\%) are retained in matched risk sets with at least one eligible untreated control; the remaining 102 adopters fall outside common support and do not contribute to the matched event-study estimates. The estimand is therefore an average treatment effect on the treated defined over adopters whose pre-adoption characteristics have support among eligible untreated patients in the same risk set, rather than over the full population of AID adopters. Because retention depends on finding an eligible untreated match on sex, age band, and pre-adoption process history, the adopters excluded from common support are those at the sparser regions of that distribution, and the estimates should not be extrapolated to them.

Within a retained stratum $s$ of cohort $g$, each treated patient receives weight one and each matched control the treated-to-control ratio,
\begin{equation}
\omega_{ig}
=
\begin{cases}
1,
&
\text{if } E_i=g,
\\[0.6em]
\dfrac{n^T_{s(i,g)}}{n^C_{s(i,g)}},
&
\text{if } E_i>g \text{ or } E_i=\infty,
\end{cases}
\label{eq:cemweights}
\end{equation}
where $n^T_{s(i,g)}$ and $n^C_{s(i,g)}$ are the treated and control counts in patient $i$'s stratum for cohort $g$. These weights set the reweighted control distribution of the coarsened covariates equal to the treated distribution inside each risk set. The identifying content is that, after conditioning on the risk set and the matched history, any remaining difference in the level of latent severity between adopters and controls is unrelated to adoption status. This does not require adopters and controls to be identical.

We estimate the event study on the matched, CEM-reweighted, stacked panel. Patients who adopt AID in later semesters may serve as controls for earlier adoption cohorts only until their own adoption. Consequently, the same patient may appear in multiple cohort-specific comparisons. We therefore cluster standard errors at the patient level, allowing for dependence across all observations contributed by the same patient. The CEM weights ($\omega_{ig}$) are constructed before estimation and treated as fixed. Accordingly, the reported confidence intervals capture sampling uncertainty in the event-study estimation conditional on the matched sample and weights, but they do not incorporate additional uncertainty arising from the matching procedure.

\subsection{Trend extrapolation and identifying assumptions}
\label{subsec:trend_extrapolation}

Risk-set CEM substantially improves covariate balance (Table~\ref{tab:cem_balance}). The mean absolute standardized difference falls from 0.544 to 0.013 for the exact-match variables (Panel~A) and from 0.208 to 0.043 for the coarsened clinical-history variables that define the matching strata (Panel~B). CEM does not require balance in the continuous variables that underlie the coarsened categories, and the standardized difference in mean pre-adoption HbA1c is essentially unchanged ($-0.125$ before matching, $-0.130$ after). Figure~\ref{fig:loveplot_cem} summarizes these changes. Panel~C reports baseline-semester levels of the outcome variables, which are not matching targets. Their mean absolute standardized difference falls only from 0.093 to 0.085. This does not violate the matching criterion. Level differences of this kind are absorbed by the patient-by-risk-set fixed effects in \eqref{eq:eventstudy}. The threat to identification is a difference in how outcomes evolve before adoption. Balance in levels cannot rule that out, and the raw pre-adoption event-study coefficients reported in Section~\ref{sec:results} show that it is present. Because AID eligibility evolves with a patient's clinical history, we allow the remaining confounding component to vary systematically with event time and restrict that variation to a smooth, low-dimensional path.

Let $\beta_{gm}$ denote the population coefficient on $D^m_{itg}$ in \eqref{eq:eventstudy}, and write $\beta_{gm}=\tau_{gm}+b_{gm}$, where $b_{gm}$ is the residual bias at event time $m$, measured, like $\beta_{gm}$, relative to $m=-1$. Aggregating with the cohort shares $\widetilde{w}_{gm}$ used to construct $\theta_m$ gives
\begin{equation}
\theta_m=\tau_m+c(m),
\qquad
c(m)=\sum_{g\in\mathcal G_m}\widetilde{w}_{gm}\,b_{gm}.
\label{eq:decomposition}
\end{equation}
The normalization sets $\beta_{g,-1}=0$. Under no anticipation, discussed below, $\tau_{g,-1}=0$ as well, so $b_{g,-1}=0$ for every cohort and $c(-1)=0$. Any restriction on the confounding path must respect this normalization, and we therefore state it with a reference-normalized basis, $\widetilde f(m)=f(m)-f(-1)$.

\begin{assumption}[Event-time extrapolation]
\label{ass:event_time_extrapolation}
After risk-set matching and conditioning on the patient-by-risk-set and calendar-semester fixed effects in \eqref{eq:eventstudy}, the aggregate residual confounding component follows
\begin{equation}
c(m)=\phi'\widetilde f(m),
\qquad
\widetilde f(m)=f(m)-f(-1),
\label{eq:confoundingpath}
\end{equation}
over the event-time window used for extrapolation, where $f(\cdot)$ is a known low-dimensional basis.
\end{assumption}

The assumption restricts the aggregate path, and the set of contributing cohorts $\mathcal G_m$ narrows as $m$ increases. Suppose each cohort's bias is linear with its own slope, $b_{gm}=\rho_g(m+1)$. Then $c(m)=(m+1)\sum_{g\in\mathcal G_m}\widetilde w_{gm}\rho_g$, which is linear in $m$ only if the share-weighted slope does not change as later cohorts leave the post-adoption window. Assumption~\ref{ass:event_time_extrapolation} therefore also requires that the cohorts observed at long horizons do not have systematically different pre-adoption slopes from the cohorts that drop out.

The baseline specification is linear in event time. Taking $f(m)=m$ gives $c(m)=\rho(m+1)$ for a scalar slope $\rho$, which satisfies $c(-1)=0$ by construction. Under no anticipation, $\tau_m=0$ for $m<0$, so equation~\eqref{eq:decomposition} implies $\theta_m=c(m)$ in the pre-adoption periods, and the pre-adoption coefficients identify $\rho$. We fit the slope on the coefficients at $m=-3$ and $m=-2$, with the line constrained to pass through zero at $m=-1$, so that the fitted slope reflects the trajectory closest to adoption. The coefficient at $m=-4$ lies outside the fitting window and serves as a held-out check on the fitted line. We then extrapolate the line into the post-adoption window. The trend-adjusted effect is
\begin{equation}
\widehat\tau_m=\widehat\theta_m-\widehat c(m),
\qquad m\ge 0.
\label{eq:trendadjusted}
\end{equation}
The event-study coefficients and the slope are estimated jointly by GMM, following the trend-extrapolation estimator of Dobkin et al.\ (2018) as implemented in the \texttt{xtevent} package (Freyaldenhoven et al., 2025a). Joint estimation carries the sampling uncertainty in $\widehat\rho$ into $\widehat\tau_m$. Standard errors are clustered by patient, which accounts for correlation across the stacked copies of patients who appear in several risk sets. Appendix~\ref{app:identification} gives the formal derivation.

Assumption~\ref{ass:event_time_extrapolation} is an identifying restriction, not a consequence of the event-study design or of the institutional setting. Neither matching nor fixed effects guarantee that residual untreated differences would have continued linearly after adoption. Institutional features can motivate the restriction, but they cannot establish that the extrapolated counterfactual is correct.

Nevertheless, the clinical context may help explain the observed pre-adoption pattern in routine diabetology contact. AID prescribing responds to patients’ evolving clinical histories, which may also affect the timing and type of healthcare contacts before adoption. The pre-adoption coefficients suggest declining routine diabetology contact relative to matched controls, although the pattern is not uniformly monotone. This need not imply a decline in overall healthcare use: it could reflect changes in the composition of care across settings or providers. However, our analysis does not establish whether reduced diabetology contact is offset by greater use of other services. Such a shift should therefore be regarded as a possible interpretation of the observed pattern, rather than evidence validating the extrapolated counterfactual.

For identification, the relevant distinction is between a persistent pre-adoption process and a transitory, mean-reverting disturbance. Trend extrapolation is less credible when treatment follows a temporary shock that would have reversed without treatment, as in an Ashenfelter-type dip (Ashenfelter, 1978; Heckman and Smith, 1999). The direction of the resulting bias is known here. For visits, the pre-adoption gap falls toward the reference period, and the fitted line continues that decline after adoption. If the decline were transitory, the true counterfactual would be flatter than the fitted line or would rebound, and the trend-adjusted estimates would overstate the effect of adoption. Eligibility is assessed on accumulated clinical history, which gives some reason to expect a persistent process, but it does not rule out mean reversion. We therefore treat linear extrapolation as a maintained restriction and assess departures from it directly in Section~\ref{subsec:diagnostics}.

The decomposition in \eqref{eq:decomposition} also requires no anticipation. Behavioral anticipation is likely limited for routine care, because adoption follows a clinical assessment of past glycemic history. Patients who pass eligibility review may nonetheless know that a device is coming while they wait in the administrative queue, so anticipation cannot be excluded on institutional grounds alone. The most direct threat is administrative onboarding. Training and documentation before initiation may raise visits or tests in the semester before recorded adoption. Because that semester is the reference period, onboarding is a violation of no anticipation at $m=-1$ that shifts every coefficient and enters the fitted trend through the normalization. Appendix~\ref{app:admin_effect} shows that, under the baseline fitting window, such an effect biases the trend-adjusted estimates downward, and by more at longer horizons. We assess it by re-estimating the raw event study normalized to $m=-4$ and comparing the periods immediately before adoption with a path anchored further back in event time. A separate timing issue arises because the data record device supply rather than confirmed closed-loop activation. Actual exposure therefore begins at or after the recorded adoption semester, never before it. This one-sided error leaves the pre-adoption coefficients, and hence the fitted trend, unaffected, but it attenuates the post-adoption coefficients at low event times. The estimated post-adoption path is therefore steeper than the true one, which can resemble a gradual build-up of the effect.

We assess the extrapolation restriction with three diagnostics. First, the raw pre-trends test asks whether the unadjusted pre-adoption coefficients are jointly zero. Flat pre-trends are compatible with Assumption~\ref{ass:event_time_extrapolation}, as the case $\rho=0$, but they are not required, and passing this test would not by itself validate the post-adoption extrapolation. Second, the adjusted joint pre-period test asks whether the trend-adjusted coefficients at $m=-4$, $-3$, and $-2$ are jointly zero, and bears more directly on linearity. Because one slope is fitted to the two coefficients inside the fitting window, the test has little power against curvature within that window, and much of its content comes from the held-out coefficient at $m=-4$. Third, following Rambachan and Roth (2023), we bound the change in the slope of the differential trend and report how large that change can be before the post-adoption conclusions change. Their smoothness restriction nests exact linearity at $M=0$, although it is applied to the raw coefficients over the full pre-adoption window rather than to our fitting window. We also report the re-normalization exercise described above. Section~\ref{subsec:diagnostics} reports the results, and Appendix~\ref{app:diagnostic_tests} gives the formal definitions.

The leveling-off test, which asks whether the post-adoption path stabilizes, is conceptually distinct from these identification diagnostics. It concerns the dynamic prediction of the task-based model rather than the validity of the extrapolated counterfactual, and we discuss it with the post-adoption results in Section~\ref{sec:results}.

\section{Results}
\label{sec:results}
We first report the raw matched event-study estimates and then the estimates obtained after adjusting for differential pre-adoption trends. Our primary outcome is the probability of a recorded diabetologist visit;  HbA1c measurement and the composite process-of-care index provide complementary evidence on broader outpatient monitoring. Table~\ref{tab:aid-sa-raw-process-2025h1} and Table~\ref{tab:aid-sa-trend-process-2025h1} report the raw and trend-adjusted estimates, respectively, while Figures~\ref{fig:aid_visit_eventstudy} through~\ref{fig:aid_process_index_eventstudy} show the corresponding dynamic paths.

Our main finding is that, conditional on the trend-extrapolation assumption, AID adoption is followed by a substantial increase in specialist contact. The estimated effect on the probability of at least one diabetologist visit is small and statistically insignificant at adoption, rises to 16.3 percentage points one semester later (95\% CI: 2.2 to 30.4) and reaches 39.3 percentage points four semesters after adoption (95\% CI: 12.1 to 66.5). The joint test of post-adoption coefficients rejects the null hypothesis of no effect (p = 0.044). HbA1c measurement moves broadly in the same direction, although the evidence is less robust, while estimates for the composite process-of-care index are positive but imprecise. 

The raw estimates in Table~\ref{tab:aid-sa-raw-process-2025h1} reveal substantial pre-adoption differences between future adopters and matched controls. For visits, the joint pre-period test rejects flat pre-trends (p = 0.001): future adopters were 9–15 percentage points more likely to have had a visit in the three pre-adoption semesters ($m=-4,-3,-2$). Although the pattern is not fully monotonic, it indicates that future adopters followed a different utilization trajectory before AID initiation.  The raw post-adoption estimates therefore cannot be interpreted under conventional parallel-trends assumptions and motivate the trend-adjustment strategy described in Section~\ref{subsec:trend_extrapolation}.

The raw pre-adoption patterns are less pronounced for the other two outcomes. Neither HbA1c measurement ($p=0.158$) nor the composite process index ($p=0.435$) rejects the null of flat pre-trends, although the limited number of pre-adoption coefficients reduces the power of these tests (Roth, 2022). The two outcomes, however, differ importantly once the trend adjustment is applied. For the composite index, the linear specification absorbs the observed pre-adoption dynamics well, and the adjusted joint pre-period test does not reject ($p=0.834$). For HbA1c, by contrast, the adjusted pre-period test rejects strongly ($p < 0.001$), indicating that the linear extrapolation does not adequately capture its pre-adoption path. We therefore treat the trend-adjusted evidence for HbA1c with particular caution.

The raw post-adoption estimates provide no clear evidence of a systematic increase in routine care. The joint post-period test does not reject at the 5\% level for any of the three outcomes ($p=0.269$, $0.053$, and $0.338$, respectively), although the result for HbA1c is close to conventional significance levels. As shown in Panel~A, the estimates are generally imprecise and the post-adoption paths are not uniform across outcomes. Together with the pre-adoption dynamics documented above, this reinforces the need for the trend-adjusted analysis.

Table~\ref{tab:aid-sa-trend-process-2025h1} removes the linear pre-adoption trajectory under Assumption~\ref{ass:event_time_extrapolation}, reporting each coefficient as a deviation from its extrapolation; Panel~B of each figure plots the fitted pre-adoption trend, and Panel~C plots the corresponding trend-adjusted estimates. For visits, the adjustment transforms the picture: the contemporaneous effect is statistically indistinguishable from zero and then rises monotonically to 39.3 percentage points by $m=4$ (Table~\ref{tab:aid-sa-trend-process-2025h1}), with the joint post-period test rejecting at the 5\% level ($p=0.044$). Relative to the 55.2\% reference mean, the estimate at $m=4$ corresponds to an increase of approximately 71\%. The linear trend does not fully absorb the pre-adoption dynamics in the visit outcome. The adjusted joint pre-period test still rejects a flat residual ($p=0.027$, versus $p=0.001$ raw), with the largest remaining deviation occurring at $m=-2$. The raw path is itself non-monotonic, so a linear specification provides only an approximate fit. This limitation is particularly important for visits, since the estimated effect at $m=4$ increases from 0.123 in the raw event study to an adjusted 0.393. The main visit result therefore relies most heavily on the extrapolation restriction, and Section~\ref{subsec:diagnostics} assesses how sensitive it is to departures from linearity.

In the baseline trend-adjusted specification, HbA1c measurement also shows positive post-adoption estimates, reaching 15.9 percentage points at $m=4$, relative to a 29.8\% reference-period mean. The joint post-adoption test rejects the null of no effect at the 5\% level ($p=0.043$). However, as noted above, the adjusted pre-period test strongly rejects the linear specification, and Section~\ref{subsec:diagnostics} shows that the HbA1c estimates are sensitive to alternative pre-trend choices. We therefore interpret this evidence cautiously.

The composite process-of-care index is imprecisely estimated. The trend-adjusted coefficient at $m=4$ is 0.262 on the $\log(1+\text{count})$ scale, no individual post-adoption coefficient is statistically significant, and the joint post-period test does not reject ($p=0.265$). Given the scale-dependence discussed in Section 4.1, we do not assign this outcome a magnitude and use it only to examine the timing of the response.

Corollary~\ref{cor:impact-dynamics} predicts a dynamic pattern rather than only the sign of the effect: if automation occurs at adoption while complementary clinical tasks develop gradually with experience, the effect should initially be small and increase over subsequent semesters before eventually stabilizing. 

Conditional on the trend-adjustment specification, the visit estimates are consistent with this prediction: the effect is small and statistically insignificant at adoption and increases monotonically through $m=4$. The composite index is broadly consistent with the same pattern but is imprecisely estimated, while the HbA1c path is less consistent with the predicted dynamics. As emphasized in Section~\ref{subsec:conceptual_framework}, however, the shape of the event-study path alone cannot identify the underlying mechanism. Appendix~\ref{app:identification} discusses formally how a one-time onboarding effect around treatment initiation would affect the trend-adjusted estimates.

The leveling-off test provides no clear evidence that the post-adoption response has stabilized within the observed window. For visits, the comparison between early and later post-adoption effects yields $p=0.090$, while the corresponding tests for HbA1c and the composite index yield $p=0.280$ and $p=0.590$, respectively. Confidence intervals also widen at longer horizons, limiting the precision of the estimates. We therefore interpret the estimates at $m=4$ as medium-run effects and do not infer from them that the response has reached its long-run level.
\subsection{Diagnostics, sensitivity, and robustness checks}
\label{subsec:diagnostics}
\textit{Testing the reference-period convention.} Section~\ref{subsec:trend_extrapolation} proposes a re-normalization check to assess whether an onboarding effect concentrated around the reference period affects the extrapolation; Appendix~\ref{app:admin_effect} derives the corresponding cancellation result. Re-referencing the event study to $m=-4$ allows the coefficients at $m=-2$ and $m=-1$ to be assessed relative to a trend fitted on earlier pre-adoption periods. Figure~\ref{fig:renormalization_diagnostic} shows no significant departure at $m=-2$ for any outcome, while at $m=-1$ only HbA1c departs marginally from the fitted line ($p=0.080$). For visits and the composite index, the alternative pre-trend window preserves the sign of the post-adoption estimates, although statistical evidence weakens somewhat. For HbA1c, by contrast, the alternative extrapolation reverses the sign of the post-adoption estimates and yields a statistically significant negative joint effect ($p=0.034$). This reinforces the conclusion that the HbA1c results are highly sensitive to the choice of pre-trend specification, consistent with the strong rejection of the adjusted pre-period test in the baseline model ($p<0.001$, Table~\ref{tab:aid-sa-trend-process-2025h1}).  

\textit{Sensitivity to the linearity assumption.} Section~\ref{sec:empirical_strategy} treats linear extrapolation as a maintained assumption rather than a property guaranteed by the design. Following Rambachan and Roth (2023), we construct robust confidence sets from the raw Sun--Abraham estimates (2021) under their smoothness restriction, which bounds by $M$ the change in the slope of the differential trend between consecutive semesters. The restriction uses the pre-adoption coefficients at $m=-3$ and $m=-2$ and the normalization at $m=-1$, the same window as the baseline trend fit, and the target parameter is the average effect over $m=0,\ldots,4$. At $M=0$ the differential trend is exactly linear, which is the restriction in Assumption 1. For diabetologist visits, the confidence set for the average effect excludes zero up to approximately $M=0.015$--$0.020$, so the conclusion survives a change in slope of about 1.5--2.0 percentage points per semester. This small breakdown value, together with the rejected adjusted pre-period test ($p=0.027$), indicates that the visit result is informative under linearity but sensitive to small departures from it. The other outcomes are weaker, for different reasons. For HbA1c, the interval includes zero already at $M=0$ and the adjusted pre-period test strongly rejects linearity ($p<0.001$; Table~\ref{tab:aid-sa-trend-process-2025h1}), so the average effect is indistinguishable from zero and the linear pre-fit is rejected. For the composite index, the interval also includes zero at $M=0$, but the linear pre-period specification is not rejected ($p=0.834$), so the limitation is imprecision rather than misspecification. Conditional on linearity, we therefore give most weight to the visit outcome, on grounds of precision and a nonzero breakdown value despite its weaker pre-fit, and treat the HbA1c and composite-index estimates as suggestive (Figure~\ref{fig:aid_honestdid_three_outcomes}).

\textit{Robustness to the aggregation estimator.}
Figures~\ref{fig:cs_visit}, \ref{fig:cs_hba}, and~\ref{fig:cs_process} plot raw and trend-adjusted Callaway-Sant’Anna paths (2021), using the same risk set CEM sample and extrapolation logic, with the pre-adoption trend fit on $m=-3,-2,-1$ and the normalized reference at $m=-1$. The visit path reproduces the Sun--Abraham (2021) pattern closely: small and statistically insignificant at adoption, then rising monotonically through $m=4$. The HbA1c and composite paths are also larger at $m=4$ than at adoption, but neither rises monotonically; both dip at $m=3$. At event time $m=4$, the Callaway–Sant’Anna estimates are of the same order but do not match exactly: $\approx29$ versus 39.3 points for visits, $\approx 26$ versus 15.9 points for HbA1c, $\approx34$ versus 26.2 log points for the composite index; Callaway-Sant’Anna (2021) intervals are wider throughout, especially for the composite index. The two estimators locate the raw pre-trend problem differently: Sun–Abraham rejects flat pre-trends for visits ($p=0.001$) but not HbA1c ($p=0.158$); Callaway–Sant’Anna shows the reverse ($p=0.500$, $p=0.010$). This reflects how each estimator weights the not-yet-treated comparison group, not an artifact specific to either. Callaway–Sant’Anna (2021) estimates also continue to rise through the window, though leveling-off tests do not reject stabilization ($p=0.470$ visits, $p=0.160$ HbA1c, $p=0.230$ composite), so the tests cannot confirm that the paths have flattened. For visits, the shape predicted by Corollary 1, flat or negative at impact and rising through $m=4$, is reproduced under both estimators, which indicates that the visit result is not an artifact of cohort aggregation. We do not read the agreement as equally informative for the remaining outcomes. The two estimators disagree about where the pre-trend problem lies, and for HbA1c the extrapolated counterfactual is rejected in both cases: the adjusted pre-period test rejects at $p<0.001$ under Sun--Abraham (2021), and the raw pre-trend test rejects at $p=0.010$ under Callaway--Sant'Anna (2021). Since the diagnostic fails for that outcome under either estimator, the similarity of the two HbA1c panels is not independent evidence that the predicted dynamics hold there.

\textit{Robustness to nonlinear trend extrapolation.} A concern with linear extrapolation is that the true pre-adoption trend might curve, in which case a straight-line counterfactual would misattribute that curvature to the estimated treatment effect. Figures~\ref{fig:cs_quad_visit},~\ref{fig:cs_quad_hba}, and~\ref{fig:cs_quad_process} examine one such alternative. Starting from the Callaway--Sant'Anna (2021) estimates, we replace the linear counterfactual with a quadratic path fitted over $m=-4,-3,-2$ and constrained to equal zero at the normalized period $m=-1$, so that any curvature in the pre-period is free to propagate into the extrapolated counterfactual after adoption. The curvature term is statistically indistinguishable from zero for visits ($p=0.197$) and the composite index ($p=0.189$), but is significant for HbA1c ($p=0.003$). Unlike the earlier linear approximation, however, the fitted quadratic paths bend sharply downward after the normalization period. Subtracting these extrapolated counterfactuals therefore generates increasingly large positive adjusted coefficients, especially at longer horizons. Two features of the exercise limit what it establishes. The curvature term is identified from only three pre-adoption coefficients fitted to a two-parameter curve, so both the estimated curvature and its continuation beyond the observed pre-period are weakly disciplined by the data. The quadratic specification itself is not rejected by the fitted pre-adoption coefficients ($p=0.187$ for visits, $0.852$ for HbA1c, and $0.949$ for the composite index), but this in-sample fit does not validate its steep post-adoption extrapolation. Indeed, for the binary outcomes, the resulting adjusted effects eventually exceed economically feasible probability changes, indicating that the long-run counterfactual is being driven by polynomial extrapolation rather than credible identifying variation. We therefore read this exercise narrowly: allowing unrestricted quadratic curvature does not provide a reliable alternative counterfactual in this short pre-treatment window. The broader extrapolation concern is better addressed through the Rambachan and Roth (2023) sensitivity analysis reported above, which directly quantifies how much departure from linearity the visit result can tolerate without relying on polynomial continuation alone.
\subsection{Interpretation and implications}
\label{subsec:interpretation}
The estimated increase in specialist contact is large relative to baseline. Conditional on the trend-extrapolation assumption, the probability of a diabetologist visit rises by approximately 30\% relative to the reference-period mean one semester after adoption and by about 70\% by $m=4$. The corresponding estimates for HbA1c measurement are positive in the baseline specification but, as shown above, are substantially less robust to alternative pre-trend assumptions. Throughout, the estimates describe the 181 adopters retained within the matched common support rather than all 283 adopters observed in the panel.

For the health system, the main implication concerns the composition of care. Our finding of increased routine specialist contact, considered alongside previous evidence that AID adoption is associated with lower emergency and inpatient utilization (Manjelievskaia et al., 2025), is consistent with the possibility that AID reallocates healthcare use toward planned outpatient management and away from acute care. Similar changes in the composition of utilization have been documented for other patient-facing and digital technologies (Zeltzer et al., 2023; Rodrigues et al., 2025). Our data do not allow us to assess whether such a reallocation reduces total healthcare spending.

The estimates identify utilization rather than the content or marginal value of additional contacts. We therefore cannot determine whether increased specialist involvement reflects high-value monitoring, interpretation of device-generated information, tighter clinical targets, documentation and renewal requirements, or other forms of care. Nor can we identify which specific tasks account for the additional professional input. The narrower conclusion is that automating part of day-to-day insulin management did not reduce recorded specialist involvement. This pattern is consistent with the task-based reinstatement mechanism discussed in Section~\ref{subsec:conceptual_framework}, but it does not distinguish reinstatement from other mechanisms capable of generating greater clinical engagement after adoption.

\section{Conclusion}
\label{sec:Conclusion}
This paper examines whether patient-facing medical automation substitutes for or complements professional clinical care. Using the adoption of automated insulin delivery among adults with type 1 diabetes, we find that, conditional on the event-time extrapolation assumption, AID adoption is followed by a substantial increase in routine specialist contact. The probability of a recorded diabetologist visit rises gradually after adoption, reaching 39.3 percentage points above the extrapolated counterfactual four semesters later, relative to a 55.2\% reference-period mean. Evidence from the other monitoring outcomes is weaker: HbA1c measurement is sensitive to the specification of the pre-adoption trend, while the composite process-of-care index is estimated imprecisely. The most robust conclusion is therefore that AID adoption does not reduce recorded specialist involvement.

This finding is difficult to reconcile with a simple view of automation as substituting technology for professional input. AID automates part of day-to-day insulin management, but its adoption may simultaneously generate or expand activities in which clinical judgment remains valuable, including the interpretation of device-generated information, treatment adjustment, supervision, and follow-up. Alternatively, by increasing the productivity of glycemic management, AID may increase the desired scale or intensity of care. Our data do not allow us to distinguish between task-based reinstatement and such a scale response. They do show, however, that automating activities previously performed in patients’ day-to-day disease management need not translate into lower demand for professional care.

The interpretation is subject to an important identification qualification. AID adoption is clinically targeted: physicians prescribe the technology in response to an evolving patient-specific clinical trajectory, so the factors that lead to adoption may themselves also affect subsequent healthcare use. Given this selection process, a fully comparable untreated group is unlikely to be available in observational data, even after detailed matching on observed pre-adoption histories. Risk-set matching substantially improves comparability in observed characteristics, but it cannot fully eliminate differential pre-adoption dynamics. The trend-adjusted estimates therefore rely on the additional assumption that residual confounding evolves smoothly in event time. This assumption is particularly consequential for the visit outcome: the linear specification does not fully absorb the pre-adoption path, and the sensitivity analysis shows that the estimated effect tolerates only a relatively limited departure from linearity. We therefore view the visit estimates as informative but assumption-sensitive rather than as unconditional causal effects. Absent random or plausibly exogenous variation in adoption, this identification challenge is inherent to observational evaluations of clinically targeted technologies rather than specific to the matching procedure used here. A further qualification concerns the scope of the estimand rather than its identification. Risk-set matching retains 181 of the 283 observed adopters (64.0\%), so the estimates apply to adopters whose pre-adoption characteristics have support among contemporaneously eligible untreated patients. Whether the response of the excluded adopters is similar cannot be assessed with these data.

Our data also do not reveal the content or marginal value of the additional specialist contacts, nor whether patients remain continuously exposed to AID after initiation. Greater outpatient engagement may reflect clinically valuable interpretation and treatment adjustment, but it may also include documentation, device renewal, or other administrative activities. Similarly, the sign and timing of the utilization response cannot by themselves identify the underlying mechanism. Because adoption is dated by recorded device supply rather than by confirmed activation, the gradual rise in the estimated path cannot be fully separated from staggered activation following that supply. Distinguishing among these explanations would require information on visit content, device continuation, and downstream health outcomes. The findings therefore speak to healthcare utilization rather than welfare: more specialist contact is neither necessarily beneficial nor necessarily inefficient.

These qualifications are important for technology assessment and capacity planning. The resource consequences of AID extend beyond the acquisition cost of the device if its use also requires additional specialist time, monitoring, and follow-up. Such additional outpatient care may be worthwhile if it improves glycemic outcomes or prevents acute events and long-run complications. Indeed, previous evidence suggests that AID adoption may reduce emergency and inpatient utilization (Manjelievskaia et al., 2025). Our results nevertheless indicate that savings in routine specialist care should not be assumed when assessing the overall resource implications of the technology.

More broadly, the effects of medical technology on healthcare use depend on the institutional environment in which it is deployed. As emphasized by Baker (2001), utilization responses reflect not only a technology’s clinical properties but also the incentives and organizational arrangements surrounding its adoption. In a publicly financed system in which routine diabetes care carries little or no point-of-use cost for eligible chronic patients and prescribing specialists retain responsibility for treatment after adoption, patient-facing automation may complement rather than replace professional input. Whether the same pattern extends to remote monitoring, algorithm-supported treatment, and other forms of technology-enabled chronic care remains an open empirical question. For publicly financed health systems, the broader implication is that medical automation may change the composition of clinical work and increase demands on specialist capacity even when it successfully automates tasks performed by patients.

\section*{Funding}

This research was co-funded by the Italian Ministry of University and Research (MUR) under the Complementary National Plan to the National Recovery and Resilience Plan (PNC-I.1), ``Research initiatives for innovative technologies and pathways in the health and welfare sector'', within the initiative DARE -- Digital Lifelong Prevention (Project code PNC0000002).

\section*{Acknowledgements}

The research is the result of a joint project of the Local Health Authority (LHA) of Romagna and the Department of Economics, University of Bologna on the Evaluation and Monitoring of Health Policies. The authors gratefully acknowledge Paolo Di Bartolo, Head of the Unit ``Diabetology'', and Roberto Grilli, Head of the Unit ``Evaluative Research and Health Services Policy'' of the LHA for their valuable contributions to the design of the research project within which this study was developed, and Simona Rosa, School of Health Policy, University of Bologna, for her assistance with data extraction. The views expressed remain exclusively those of the authors.

\section*{Declaration of competing interest}

The authors declare that they have no competing interests.

\section*{Data availability}

The data used in this study were made available by the Romagna Local Health Authority under specific data-sharing agreements and are not publicly available because of privacy and data-protection restrictions. The agreements do not permit the authors to share or publicly release the underlying individual-level data. Aggregated data supporting the findings are available from the corresponding author upon reasonable request. Interested parties may request the data directly from the data controller subject to its eligibility requirements, approval procedures, and applicable data-protection legislation.

\section*{Compliance with ethical standards}

The study was conducted in accordance with applicable national and regional data-protection regulations and with the principles of the Declaration of Helsinki. It relied on routinely collected administrative, clinical, and device data from the Local Health Authority of Romagna. Clinical and device information recorded in the Meteda and Auxilium systems was collected in the course of routine care in accordance with the applicable consent and data-protection requirements. Before being made available to the authors, administrative data and device information were anonymized by the Local Health Authority, and each patient was assigned a unique study identifier. The researchers had no access to directly identifying information and could not trace the identity of individual patients. The study involved the secondary analysis of anonymized routinely collected healthcare data and did not require any additional intervention or direct contact with patients. On this basis, no additional individual patient consent was obtained for the present analysis, and the study was conducted under the data-governance arrangements applicable to retrospective analyses within the Local Health Authority.

\section*{Declaration of generative AI and AI-assisted technologies in the manuscript preparation process} During the preparation of this work, the authors used OpenAI ChatGPT and Anthropic Claude to assist with language editing, improving clarity and organization, and refining the presentation of the manuscript. After using these tools, the authors reviewed and edited the content as needed and take full responsibility for the content of the publication.

\section*{CRedit authorship contribution statement}

\textbf{Moslem Rashidi:} Conceptualization, Data curation, Formal analysis, Investigation, Methodology, Software, Visualization, Writing -- original draft, Writing -- review and editing.

\textbf{Cristina Ugolini:} Conceptualization, Investigation, Supervision, Writing -- review and editing.

\textbf{Gianluca Fiorentini:} Project administration, Conceptualization, Investigation, Supervision, Writing -- review and editing.

\section*{References}
\begin{description}

\item Acemoglu, D., Restrepo, P., 2019. Automation and new tasks: how technology displaces and reinstates labor. \emph{Journal of Economic Perspectives}, 33, 3--30. \\ \doi{https://doi.org/10.1257/jep.33.2.3}

\item Agarwal, N., Moehring, A., Rajpurkar, P., Salz, T., 2023. Combining human expertise with artificial intelligence: experimental evidence from radiology. NBER Working Paper No.\ 31422. \doi{https://doi.org/10.3386/w31422}

\item Agha, L., 2014. The effects of health information technology on the costs and quality of medical care. \emph{Journal of Health Economics}, 34, 19--30. \\
\doi{https://doi.org/10.1016/j.jhealeco.2013.12.005}

\item Ashenfelter, O., 1978. Estimating the effect of training programs on earnings. \emph{Review of Economics and Statistics}, 60, 47--57. \doi{https://doi.org/10.2307/1924332}

\item Autor, D.H., Levy, F., Murnane, R.J., 2003. The skill content of recent technological change: an empirical exploration. \emph{Quarterly Journal of Economics}, 118, 1279--1333. \doi{https://doi.org/10.1162/003355303322552801}

\item Baicker, K., Mullainathan, S., Schwartzstein, J., 2015. Behavioral hazard in health insurance. \emph{Quarterly Journal of Economics}, 130, 1623--1667. \\
\doi{https://doi.org/10.1093/qje/qjv029}

\item Baker, L.C., 2001. Managed care and technology adoption in health care:
evidence from magnetic resonance imaging. \emph{Journal of Health Economics},
20, 395–-421. \\ \doi{https://doi.org/10.1016/S0167-6296(01)00072-8}

\item B\"ockerman, P., Laine, L.T., Nurminen, M., Saxell, T., 2025a. Information integration, coordination failures, and quality of prescribing. \emph{Journal of Human Resources}, 60, 1055--1093. \doi{https://doi.org/10.3368/jhr.0921-11910R2}

\item B\"ockerman, P., Kortelainen, M., Laine, L.T., Nurminen, M., Saxell, T., 2025b. Information technology, improved access, and use of prescription drugs. \emph{Journal of the European Economic Association}, 23, 396--430. \doi{https://doi.org/10.1093/jeea/jvae034}

\item Burnside, M.J., Lewis, D.M., Crocket, H.R., Meier, R.A., Williman, J.A., Sanders, O.J., Jefferies, C.A., Faherty, A.M., Paul, R.G., Lever, C.S., Price, S.K.J., Frewen, C.M., Jones, S.D., Gunn, T.C., Lampey, C., Wheeler, B.J., de Bock, M.I., 2022. Open-source automated insulin delivery in type 1 diabetes. \emph{New England Journal of Medicine}, 387, 869--881. \doi{https://doi.org/10.1056/NEJMoa2203913}

\item Callaway, B., Sant'Anna, P.H.C., 2021. Difference-in-differences with multiple time periods. \emph{Journal of Econometrics}, 225, 200--230. \doi{https://doi.org/10.1016/j.jeconom.2020.12.001}

\item Chandra, A., Skinner, J., 2012. Technology growth and expenditure growth in health care. \emph{Journal of Economic Literature}, 50, 645--680. \doi{https://doi.org/10.1257/jel.50.3.645}

\item de Chaisemartin, C., D'Haultf\oe{}uille, X., 2020. Two-way fixed effects estimators with heterogeneous treatment effects. \emph{American Economic Review}, 110, 2964--2996. \\
\doi{https://doi.org/10.1257/aer.20181169}

\item Chen, J., and Roth, J. (2024). Logs with Zeros? Some Problems and Solutions. \emph{Quarterly Journal of Economics}, 139, 891--936. \doi{10.1093/qje/qjad054}

\item Dobkin, C., Finkelstein, A., Kluender, R., Notowidigdo, M.J., 2018. The economic consequences of hospital admissions. \emph{American Economic Review}, 108, 308--352. \\
\doi{https://doi.org/10.1257/aer.20161038}

\item Freyaldenhoven, S., Hansen, C., Pérez Pérez, J., \& Shapiro, J. M., 2025a. Visualization, Identification, and Estimation in the Linear Panel Event-Study Design. In V. Chernozhukov, J. Hörner, E. La Ferrara, \& I. Werning (eds.), Advances in Economics and Econometrics: Twelfth World Congress, Vol. 2, pp. 225–268. \emph{Cambridge University Press}. \doi{10.1017/9781009589727.011}

\item Freyaldenhoven, S., Hansen, C.B., Pérez, J.P., Shapiro, J.M. and Carreto, C., 2025b. xtevent: Estimation and visualization in the linear panel event-study design. \emph{The Stata Journal}, 25, pp.97--135. \doi{10.1177/1536867X251322964}

\item Gil, J., Li Donni, P., Zucchelli, E., 2019. Uncontrolled diabetes and
health care utilisation: a bivariate latent Markov model approach.
\emph{Health Economics}, 28, 1262–-1276. \doi{https://doi.org/10.1002/hec.3939}

\item Goodman-Bacon, A., 2021. Difference-in-differences with variation in treatment timing. \emph{Journal of Econometrics}, 225, 254--277. \doi{https://doi.org/10.1016/j.jeconom.2021.03.014}

\item Grossman, M., 1972. On the concept of health capital and the demand for health. \emph{Journal of Political Economy}, 80, 223--255. \doi{https://doi.org/10.1086/259880}

\item Heckman, J.J., Smith, J.A., 1999. The pre-programme earnings dip and the determinants of participation in a social programme: implications for simple programme evaluation strategies. \emph{Economic Journal}, 109, 313--348. \doi{https://doi.org/10.1111/1468-0297.00451}

\item Horn, D., Sacarny, A., Zhou, A., 2022. Technology adoption and market allocation: the case of robotic surgery. \emph{Journal of Health Economics}, 86, 102672. \\ \doi{https://doi.org/10.1016/j.jhealeco.2022.102672}

\item Iacus, S.M., King, G., Porro, G., 2012. Causal inference without balance checking: coarsened exact matching. \emph{Political Analysis}, 20, 1--24. \doi{https://doi.org/10.1093/pan/mpr013}

\item Karter, A.J., Parker, M.M., Moffet, H.H., Gilliam, L.K., Dlott, R., 2021. Association of real-time continuous glucose monitoring with glycemic control and acute metabolic events among patients with insulin-treated diabetes. \emph{JAMA}, 325, 2273-–2284. \doi{https://doi.org/10.1001/jama.2021.6530}

\item Laudicella, M., Li Donni, P., Olsen, K.R., Gyrd-Hansen, D., 2022. Age,
morbidity, or something else? A residual approach using microdata to
measure the impact of technological progress on health care expenditure.
\emph{Health Economics}, 31, 1184–-1201. \doi{https://doi.org/10.1002/hec.4500}

\item Lee, J., McCullough, J.S., Town, R.J., 2013. The impact of health information technology on hospital productivity. \emph{RAND Journal of Economics}, 44, 545--568. \\
\doi{https://doi.org/10.1111/1756-2171.12030}

\item Manjelievskaia, J., Shah, V.N., Carlson, A.L., Isaacs, D., Wang, S.M., Pinsker, J.E., Messer, L.H., McDermott, K.W., Lavelle, K., Wall, S., Brixner, D., Malone, D.C., Stemple, C.A., Vaidya, N., Patel, B.V., 2025. Retrospective analysis of impact of automated insulin delivery technology on acute care utilisation from 2019 to 2021. \emph{Diabetic Medicine}, 42, e70147. \doi{https://doi.org/10.1111/dme.70147}

\item Peltzman, S., 1975. The effects of automobile safety regulation. \emph{Journal of Political Economy}, 83, 677--725. \doi{https://doi.org/10.1086/260352}

\item Pintaudi, B., Bruttomesso, D., Candido, R., Girelli, A., Indelicato, L., Mannucci, E., Pizzini, A., Schiaffini, R., Spandonaro, F., Speese, K., Stara, R., Targher, G., \& Vitale, M., 2022. La terapia del diabete mellito di tipo 1. Linea Guida della Associazione dei Medici Diabetologi (AMD), della Società Italiana di Diabetologia (SID) e della Società Italiana di Endocrinologia e Diabetologia Pediatrica (SIEDP). \emph{Journal of AMD (JAMD)}, 25, 45–-54. \doi{https://doi.org/10.36171/jamd22.25.1.9}

\item Pintaudi, B., Bruttomesso, D., Girelli, A., Indelicato, L., Mannucci, E., Pizzini, A., Anelli, V., Romeo, E.L., Schiaffini, R., Spandonaro, F., Migliore, A., Orso, M., D'Angela, D., Polistena, B., Speese, K., Stara, R., Targher, G., Vitale, M., Candido, R., 2026. Italian guidelines for the treatment of type 1 diabetes. \emph{Acta Diabetologica}, 63, 759--773. \doi{https://doi.org/10.1007/s00592-025-02569-1}

\item Rambachan, A., Roth, J., 2023. A more credible approach to parallel trends. \emph{Review of Economic Studies}, 90, 2555–2591. \doi{https://doi.org/10.1093/restud/rdad018}

\item Regione Emilia-Romagna, Commissione Regionale Dispositivi Medici. (2024). \textit{Linee di indirizzo regionali sull’appropriato utilizzo dei dispositivi medici per l’automonitoraggio e l’autogestione del Diabete mellito}. Direzione Generale Cura della Persona, Salute e Welfare, Bologna, dicembre 2024. \\
Available at: https://salute.regione.emilia-romagna.it

\item Rodrigues, D., Kreif, N., Darzi, A., Barahona, M., Mayer, E., 2025.
Digitalization of access to primary care: is there an equity–efficiency
trade-off? \emph{Health Economics}, 34, 1943-–1962.
\doi{https://doi.org/10.1002/hec.70014}

\item Roth, J., 2022. Pretest with caution: event-study estimates after testing for parallel trends. \emph{American Economic Review: Insights}, 4, 305–322. \doi{https://doi.org/10.1257/aeri.20210236}

\item Roubos, L.A.C., Westland, H., Hulstein-Brink, N.L., Visser, R.C., van den Berg, J.W.K., Leenen, J.P.L., 2025. Real-world comparison of telemonitoring versus conventional care in patients with chronic obstructive pulmonary disease and those with asthma: impact on clinical outcomes and patient characteristics. Retrospective cohort study. \emph{Journal of Medical Internet Research}, 27, e66743. \doi{https://doi.org/10.2196/66743}

\item Schmid, C., 2015. Consumer health information and the demand for
physician visits. \emph{Health Economics}, 24, 1619–1631.
\doi{https://doi.org/10.1002/hec.3117}

\item St{\aa}hl, F., Hellman, J., Ekelund, C., 2026. Automated insulin delivery associated with superior glycemic outcomes in type 1 diabetes: a Swedish national registry analysis. \emph{Diabetes Technology \& Therapeutics}. \doi{https://doi.org/10.1177/15209156251414976}

\item Sun, L., Abraham, S., 2021. Estimating dynamic treatment effects in event studies with heterogeneous treatment effects. \emph{Journal of Econometrics}, 225, 175--199. \\
\doi{https://doi.org/10.1016/j.jeconom.2020.09.006}

\item Zeltzer, D., Einav, L., Rashba, J., Waisman, Y., Haimi, M., Balicer,
R.D., 2023. Adoption and utilization of device-assisted telemedicine.
\emph{Journal of Health Economics}, 90, 102780.
\doi{https://doi.org/10.1016/j.jhealeco.2023.102780}

\end{description}
\clearpage

\section*{Tables and Figures}

\begin{table}[!htbp]
\centering
\caption{Timing of first observed AID adoption}
\label{tab:adoption_timing}
\begin{threeparttable}
\begin{tabular*}{\textwidth}{@{\extracolsep{\fill}}rrrrr}
\toprule
Semester & Year & Half & Treated patients & Share treated \\
\midrule
8  & 2019 & 2 & 1  & 0.004 \\
10 & 2020 & 2 & 1  & 0.004 \\
11 & 2021 & 1 & 16 & 0.057 \\
12 & 2021 & 2 & 7  & 0.025 \\
13 & 2022 & 1 & 97 & 0.343 \\
14 & 2022 & 2 & 37 & 0.131 \\
15 & 2023 & 1 & 24 & 0.085 \\
16 & 2023 & 2 & 19 & 0.067 \\
17 & 2024 & 1 & 20 & 0.071 \\
18 & 2024 & 2 & 10 & 0.035 \\
19 & 2025 & 1 & 31 & 0.110 \\
20 & 2025 & 2 & 20 & 0.071 \\
\bottomrule
\end{tabular*}
\begin{tablenotes}[flushleft]
\footnotesize
\item Shares are computed relative to all ever-treated patients.
\item The semester 20 (2025H2) cohort coincides with the semester flagged
as potentially incomplete at data extraction
(Tables~\ref{tab:sample_construction} and~\ref{tab:a6a_sample_device_trends}). Its own event-time observation
at $m=0$ falls outside the calendar window used in the trend-adjusted
event studies, which end at 2025H1.
\end{tablenotes}
\end{threeparttable}
\end{table}


\begin{table}[!htbp]
\centering
\caption{AID Adoption and Diabetes-Care Processes:
Raw Sun--Abraham Event-Study Estimates through 2025H1}
\label{tab:aid-sa-raw-process-2025h1}

\begin{threeparttable}
\small
\setlength{\tabcolsep}{6pt}
\renewcommand{\arraystretch}{1.05}
\def\sym#1{\ifmmode^{#1}\else\(^{#1}\)\fi}

\begin{tabular}{lccc}
\toprule
& (1) & (2) & (3) \\
& \shortstack{Any diabetologist\\visit}
& \shortstack{HbA1c\\measured}
& \shortstack{Log process-of-care\\index} \\
\midrule

Event time \(m=-4\)
    &  0.137\sym{***} &  0.016         &  0.084         \\
    & (0.044)         & (0.033)        & (0.059)        \\

Event time \(m=-3\)
    &  0.094\sym{**}  &  0.051\sym{**} &  0.062         \\
    & (0.044)         & (0.025)        & (0.052)        \\

Event time \(m=-2\)
    &  0.151\sym{***} &  0.033         &  0.066         \\
    & (0.041)         & (0.022)        & (0.050)        \\

Event time \(m=0\)
    &  0.011           &  0.046\sym{*}  &  0.046         \\
    & (0.043)          & (0.027)        & (0.056)        \\

Event time \(m=1\)
    &  0.056           &  0.000         &  0.018         \\
    & (0.043)          & (0.029)        & (0.055)        \\

Event time \(m=2\)
    &  0.022           &  0.027         &  0.051         \\
    & (0.052)          & (0.027)        & (0.057)        \\

Event time \(m=3\)
    &  0.045           & -0.056\sym{*}  & -0.049         \\
    & (0.054)          & (0.032)        & (0.064)        \\

Event time \(m=4\)
    &  0.123\sym{**}  &  0.029         &  0.099         \\
    & (0.056)         & (0.031)        & (0.065)        \\

\addlinespace
\midrule
Observations
    & 19,012 & 19,012 & 19,012 \\

Patient--risk-set units
    & 1,096 & 1,096 & 1,096 \\

Unique matched patients
    & 692 & 692 & 692 \\

Matched AID adopters
    & 181 & 181 & 181 \\

Outcome mean at \(m=-1\)
    & 0.552 & 0.298 & 0.935 \\

Joint pre-period F-test \(p\)-value
    & 0.001 & 0.158 & 0.435 \\

Joint post-period \(p\)-value
    & 0.269 & 0.053 & 0.338 \\

\bottomrule
\end{tabular}
\begin{tablenotes}[flushleft]
\footnotesize
\item \emph{Notes:}
Sun--Abraham estimates use risk-set CEM ATT weights, patient-by-risk-set fixed effects, and common calendar-semester fixed effects. Standard errors are clustered by patient. The panel ends in 2025H1, and $m=-1$ is omitted. Column (3) is $\log(1+\text{six-component process count})$. $^{}p<0.10$, $^{}p<0.05$, and $^{}p<0.01$.
\end{tablenotes}
\end{threeparttable}
\end{table}


\begin{table}[!htbp]
\centering
\caption{AID Adoption and Diabetes-Care Processes:
Trend-Adjusted Sun--Abraham Event-Study Estimates through 2025H1}
\label{tab:aid-sa-trend-process-2025h1}

\begin{threeparttable}
\small
\setlength{\tabcolsep}{6pt}
\renewcommand{\arraystretch}{1.05}
\def\sym#1{\ifmmode^{#1}\else\(^{#1}\)\fi}

\begin{tabular}{@{}lccc@{}}
\toprule
& (1) & (2) & (3) \\
& \shortstack{Any diabetologist\\visit}
& \shortstack{HbA1c\\measured}
& \shortstack{Log process-of-care\\index} \\
\midrule

Event time \(m=-4\)
    & -0.024          & -0.062\sym{***} & -0.014         \\
    & (0.023)         & (0.016)         & (0.030)        \\

Event time \(m=-3\)
    & -0.014          & -0.001          & -0.003         \\
    & (0.043)         & (0.026)         & (0.053)        \\

Event time \(m=-2\)
    &  0.097\sym{***} &  0.007          &  0.033         \\
    & (0.037)         & (0.021)         & (0.044)        \\

Event time \(m=0\)
    &  0.065          &  0.072\sym{*}   &  0.079         \\
    & (0.055)         & (0.038)         & (0.076)        \\

Event time \(m=1\)
    &  0.163\sym{**}  &  0.052          &  0.083         \\
    & (0.072)         & (0.052)         & (0.097)        \\

Event time \(m=2\)
    &  0.184\sym{*}   &  0.106\sym{*}   &  0.149         \\
    & (0.099)         & (0.063)         & (0.125)        \\

Event time \(m=3\)
    &  0.261\sym{**}  &  0.048          &  0.082         \\
    & (0.117)         & (0.080)         & (0.151)        \\

Event time \(m=4\)
    &  0.393\sym{***} &  0.159\sym{*}   &  0.262         \\
    & (0.139)         & (0.093)         & (0.178)        \\

\addlinespace
\midrule
Observations
    & 19,012 & 19,012 & 19,012 \\

Patient--risk-set units
    & 1,096 & 1,096 & 1,096 \\

Unique matched patients
    & 692 & 692 & 692 \\

Matched AID adopters
    & 181 & 181 & 181 \\

Outcome mean at \(m=-1\)
    & 0.552 & 0.298 & 0.935 \\

Joint adjusted F-test pre-period \(p\)-value
    & 0.027 & \(<0.001\) & 0.834 \\

Joint adjusted post-period \(p\)-value
    & 0.044 & 0.043 & 0.265 \\

\bottomrule
\end{tabular}

\begin{tablenotes}[flushleft]
\footnotesize
\item \emph{Notes:}
Trend-adjusted Sun--Abraham estimates use a GMM linear pre-trend over $m=-3,-2,-1$, risk-set CEM ATT weights, patient-by-risk-set fixed effects, and common calendar-semester fixed effects. Standard errors are clustered by patient. The panel ends in 2025H1, and $m=-1$ is normalized to zero. Column (3) is $\log(1+\text{six-component process count})$. $^{}p<0.10$, $^{}p<0.05$, and $^{}p<0.01$.
\end{tablenotes}

\end{threeparttable}
\end{table}

\begin{table}[!htbp]
\centering
\caption{Sample construction}
\footnotesize
\begin{threeparttable}
\begin{adjustbox}{max width=\textwidth}
\begin{tabular}{p{0.40\textwidth}rp{0.42\textwidth}}
\toprule
Sample-construction step & Count/value & Definition or comment \\
\midrule
Total theoretical patient-semester rows & 32,160 & Balanced patient-semester frame before exposure restrictions \\
Rows with positive exposure or active follow-up & 28,056 & Rows used to describe active clinical follow-up \\
Effective person-semesters & 27,159.289 & Sum of exposure shares across patient-semester rows \\
Rows in Auxilium/device era & 19,296 & 2020 onward, per c\_aux2020 \\
Effective person-semesters in Auxilium/device era & 17,615.236 & Exposure-weighted support in the device era \\
Rows with observed current device status & 27,802 & Nonmissing current AID/device status \\
Unique linked patients & 1,608 & Patients linked across clinical, device, and registry sources \\
Patients with at least one active semester & 1,608 & Patient-level active-follow-up support \\
Patients active at least once in Auxilium/device era & 1,608 & Patient-level device-era support \\
Patients ever observed with AID & 283 & Ever-treated patients \\
Patients never observed with AID & 1,325 & Never-treated comparison patients \\
Patients with identified first AID semester & 283 & Staggered-adoption cohorts with nonmissing adoption timing \\
Rows in potentially incomplete final semester & 1,608 & Flagged as 2025H2 \\
\bottomrule
\end{tabular}
\end{adjustbox}
\begin{tablenotes}[flushleft]
\footnotesize
\item Notes: The table distinguishes theoretical patient-semester rows from active follow-up and exposure-weighted person-semesters. Panel period: 2016--2025. Patient-level denominators coincide at 1,608 because no patient is dropped entirely by the active-follow-up or Auxilium-era restrictions; only within-patient exposure varies.
\end{tablenotes}
\end{threeparttable}
\label{tab:sample_construction}
\end{table}

\begin{table}[!htbp]
\centering
\caption{Year-half sample and device trends}
\label{tab:a6a_sample_device_trends}
\scriptsize
\setlength{\tabcolsep}{3.5pt}
\renewcommand{\arraystretch}{0.90}
\begin{adjustbox}{width=\textwidth}
\begin{threeparttable}
\begin{tabular}{lrrrrrrrr}
\toprule
Year-half 
& Rows 
& \makecell{Active\\rows} 
& \makecell{Person-\\semesters} 
& \makecell{Share\\active} 
& \makecell{Device\\observed} 
& \makecell{Incomplete\\tail} 
& \makecell{AID\\on} 
& \makecell{Any\\device} \\
\midrule
2016H1 & 1,608 & 1,131 & 908.615   & 0.703 & 0.655 & 0 & 0.000 & 0.000 \\
2016H2 & 1,608 & 1,194 & 1,098.337 & 0.743 & 0.706 & 0 & 0.000 & 0.000 \\
2017H1 & 1,608 & 1,249 & 1,169.884 & 0.777 & 0.747 & 0 & 0.000 & 0.000 \\
2017H2 & 1,608 & 1,275 & 1,219.217 & 0.793 & 0.767 & 0 & 0.000 & 0.000 \\
2018H1 & 1,608 & 1,303 & 1,245.497 & 0.810 & 0.784 & 0 & 0.000 & 0.000 \\
2018H2 & 1,608 & 1,328 & 1,273.924 & 0.826 & 0.800 & 0 & 0.000 & 0.000 \\
2019H1 & 1,608 & 1,349 & 1,302.481 & 0.839 & 0.817 & 0 & 0.000 & 0.000 \\
2019H2 & 1,608 & 1,369 & 1,326.098 & 0.851 & 0.835 & 0 & 0.001 & 0.001 \\
2020H1 & 1,608 & 1,382 & 1,352.071 & 0.859 & 0.867 & 0 & 0.001 & 0.019 \\
2020H2 & 1,608 & 1,417 & 1,376.451 & 0.881 & 0.889 & 0 & 0.001 & 0.277 \\
2021H1 & 1,608 & 1,441 & 1,418.127 & 0.896 & 0.906 & 0 & 0.012 & 0.558 \\
2021H2 & 1,608 & 1,463 & 1,440.815 & 0.910 & 0.919 & 0 & 0.017 & 0.816 \\
2022H1 & 1,608 & 1,489 & 1,459.492 & 0.926 & 0.930 & 0 & 0.082 & 0.869 \\
2022H2 & 1,608 & 1,500 & 1,476.402 & 0.933 & 0.937 & 0 & 0.107 & 0.893 \\
2023H1 & 1,608 & 1,505 & 1,492.050 & 0.936 & 0.942 & 0 & 0.121 & 0.898 \\
2023H2 & 1,608 & 1,521 & 1,504.864 & 0.946 & 0.951 & 0 & 0.127 & 0.912 \\
2024H1 & 1,608 & 1,538 & 1,517.379 & 0.956 & 0.961 & 0 & 0.137 & 0.890 \\
2024H2 & 1,608 & 1,544 & 1,528.196 & 0.960 & 0.966 & 0 & 0.136 & 0.907 \\
2025H1 & 1,608 & 1,533 & 1,529.812 & 0.953 & 0.960 & 0 & 0.156 & 0.879 \\
2025H2 & 1,608 & 1,525 & 1,519.576 & 0.948 & 0.952 & 1 & 0.166 & 0.500 \\
\bottomrule
\end{tabular}
\begin{tablenotes}[flushleft]
\footnotesize
\item[] \textit{Notes:} Device variables are exposure-weighted means. The final semester may be incomplete.
\end{tablenotes}
\end{threeparttable}
\end{adjustbox}
\end{table}

\begin{table}[!htbp]
\centering
\footnotesize
\caption{Mortality and non-death attrition during the study window}
\label{tab:mortality_attrition_summary}
\begin{threeparttable}
\begin{tabular*}{\textwidth}{@{\extracolsep{\fill}}p{0.40\textwidth}rrr}
\toprule
Category & All patients & Ever-AID & Never-AID \\
\midrule
Patients in linked panel with active follow-up in the study window & 1,608 (100.0) & 283 (100.0) & 1,325 (100.0) \\
Died during study window & 83 (5.2) & 5 (1.8) & 78 (5.9) \\
Alive and observed through final usable semester (2025H1) & 1,525 (94.8) & 278 (98.2) & 1,247 (94.1) \\
\bottomrule
\end{tabular*}
\begin{tablenotes}[flushleft]
\footnotesize
\item Notes: Cell entries are N (\% of the column's patient group). Study window: 1 January 2020 to 30 June 2025. Final usable semester: 2025H1. All 1,608 patients in the linked panel have at least one active semester overlapping the study window, so the active-follow-up row coincides with the full panel. The two rows reported account for the entire panel in each column (e.g., 83 + 1,525 = 1,608); non-death attrition (active follow-up that ends before 2025H1 without a recorded death) and loss of all active follow-up are both exactly zero for every patient in every group, so every patient is observed either to die within the window or to survive to the final usable semester. Active follow-up uses c\_active when available; otherwise effective exposure and observed activity are used.
\end{tablenotes}
\end{threeparttable}
\end{table}


\begin{table}[!htbp]
\centering
\caption{Primary outcome definitions and availability}
\label{tab:main_outcomes}
\footnotesize
\setlength{\tabcolsep}{4pt}
\renewcommand{\arraystretch}{1.15}

\begin{threeparttable}
\begin{tabularx}{\textwidth}{
>{\raggedright\arraybackslash}p{0.24\textwidth}
>{\raggedright\arraybackslash}X
>{\centering\arraybackslash}p{0.14\textwidth}
>{\centering\arraybackslash}p{0.08\textwidth}
}
\toprule
Outcome & Definition & Active semesters & Mean \\
\midrule
Any diabetologist visit
& Equal to one if at least one diabetologist visit is recorded during the semester.
& 28,056 & 0.598 \\
HbA1c measured
& Equal to one if at least one HbA1c measurement is recorded during the semester.
& 28,056 & 0.490 \\
Composite process-of-care index (six components)
& Count of recommended diabetes-monitoring activities recorded during the semester:
HbA1c, LDL cholesterol, albuminuria, kidney-function testing, eye examination, and
diabetologist visit.
& 28,056 & 2.959 \\
\bottomrule
\end{tabularx}
\begin{tablenotes}[flushleft]
\scriptsize
\item Notes: The table reports the primary outcomes used to estimate the effect of AID adoption on routine outpatient engagement. Availability is computed among active patient-semester observations. For visit and measurement indicators, absence of a recorded activity is coded as zero; therefore, the mean equals the share of active semesters with the corresponding recorded activity. Means are weighted by effective person-semester exposure. \item This six-component index includes the visit indicator and excludes blood pressure and BMI. It is distinct from the seven-component clinical indicator count reported as a baseline descriptive in Table C1, which excludes visits and includes blood pressure and BMI; the two should not be read as the same construction.
\end{tablenotes}
\end{threeparttable}
\end{table}

\begin{table}[!htbp]
\centering
\caption{Balance before and after risk-set CEM}
\label{tab:cem_balance}
\scriptsize
\begin{threeparttable}
\begin{adjustbox}{max width=\textwidth}
\begin{tabular}{lrrrrrr}
\toprule
Variable & Treat before & Control before & Std. diff. before & Treat after & Control after & Std. diff. after \\
\midrule
\multicolumn{7}{l}{\textit{Panel A: Exact-match variables}}\\
Female & 0.393 & 0.578 & -0.376 & 0.425 & 0.425 & 0.000 \\
Age (continuous) & 42.216 & 51.489 & -0.624 & 42.381 & 42.915 & -0.039 \\
Age group (CEM matching cell) & 2.638 & 3.137 & -0.632 & 2.648 & 2.648 & 0.000 \\
\textit{Mean |std. diff.|} & & & \textit{0.544} & & & \textit{0.013} \\
\multicolumn{7}{l}{\textit{Panel B: Coarsened-exact-match clinical history}}\\
Mean HbA1c, pre-adoption window & 7.661 & 7.808 & -0.125 & 7.569 & 7.702 & -0.130 \\
Mean eGFR, pre-adoption window & 100.902 & 94.952 & 0.259 & 107.362 & 105.628 & 0.092 \\
Mean seven-item measured-process count, pre-adoption window & 2.594 & 2.942 & -0.225 & 2.528 & 2.566 & -0.024 \\
Mean seven-item covered-process count, pre-adoption window & 3.904 & 4.448 & -0.316 & 3.816 & 3.909 & -0.052 \\
Pump active, pre-adoption window & 0.082 & 0.014 & 0.322 & 0.036 & 0.036 & 0.000 \\
Any complication, pre-adoption window & 0.097 & 0.156 & -0.176 & 0.073 & 0.073 & 0.000 \\
Any acute care, pre-adoption window & 0.490 & 0.475 & 0.030 & 0.477 & 0.477 & 0.000 \\
\textit{Mean |std. diff.|} & & & \textit{0.208} & & & \textit{0.043} \\
\multicolumn{7}{l}{\textit{Panel C: Pre-adoption levels of outcome variables}}\\
Any visit, baseline semester & 0.549 & 0.542 & 0.013 & 0.565 & 0.610 & -0.092 \\
HbA1c measured, baseline semester & 0.374 & 0.500 & -0.256 & 0.316 & 0.361 & -0.095 \\
LDL measured, baseline semester & 0.564 & 0.551 & 0.028 & 0.549 & 0.549 & 0.001 \\
Blood pressure measured, baseline semester & 0.389 & 0.392 & -0.006 & 0.420 & 0.465 & -0.092 \\
BMI measured, baseline semester & 0.529 & 0.512 & 0.034 & 0.539 & 0.540 & -0.001 \\
Albuminuria measured, baseline semester & 0.393 & 0.395 & -0.005 & 0.347 & 0.312 & 0.075 \\
Kidney function measured, baseline semester & 0.381 & 0.496 & -0.233 & 0.321 & 0.367 & -0.097 \\
Eye exam, baseline semester\tnote{c} & 0.066 & 0.114 & -0.167 & 0.062 & 0.128 & -0.225 \\
\textit{Mean |std. diff.|} & & & \textit{0.093} & & & \textit{0.085} \\
\bottomrule
\end{tabular}
\end{adjustbox}
\makeatletter
\renewcommand{\TPT@hsize}{\hsize\textwidth \@parboxrestore}
\makeatother
\begin{tablenotes}[flushleft]
\footnotesize
\item Notes: The table reports balance between AID adopters and eligible comparison patients before and after risk-set coarsened exact matching. Before-matching statistics use all eligible not-yet-treated or never-treated controls in each adoption risk set and are unweighted. After-matching statistics restrict the sample to retained matched strata and weight controls so that, within each matched stratum, the weighted control distribution represents the treated patients. The sample is stacked by adoption risk set; therefore, a patient who has not yet adopted AID may appear as a control for more than one earlier adoption cohort. Panel A variables are matched exactly within coarsened cells, so remaining imbalance in continuous age reflects within-cell variation. Panel C reports baseline values of outcome variables; these variables are not direct matching targets. Standardized differences use pooled standard deviations, and mean absolute standardized differences summarize balance within each panel.
\end{tablenotes}
\end{threeparttable}
\end{table}

\clearpage
\begin{figure}[!htbp]
\centering
\includegraphics[width=0.90\textwidth]{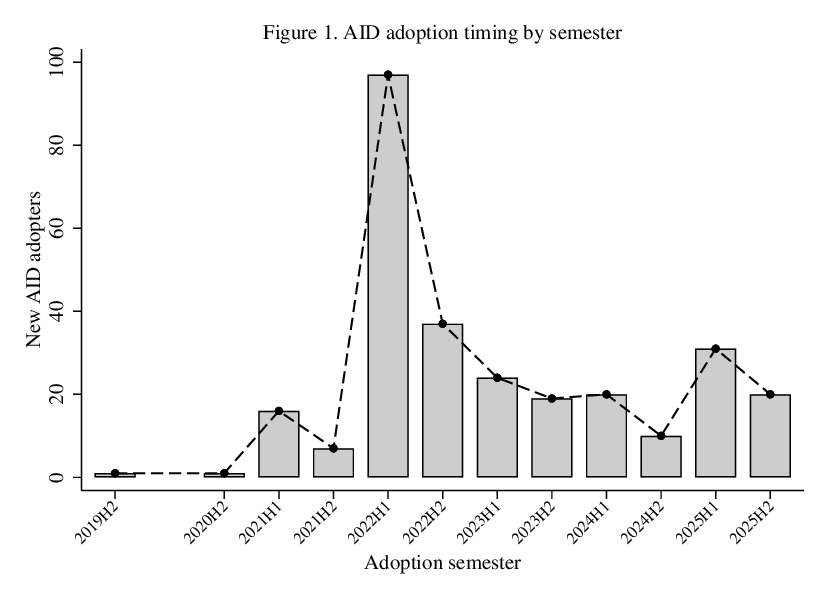}
\caption{AID adoption timing by semester}
\label{fig:aid_adoption_timing}
\end{figure}
\begin{figure}[!htbp]
\centering
\includegraphics[width=0.90\textwidth]{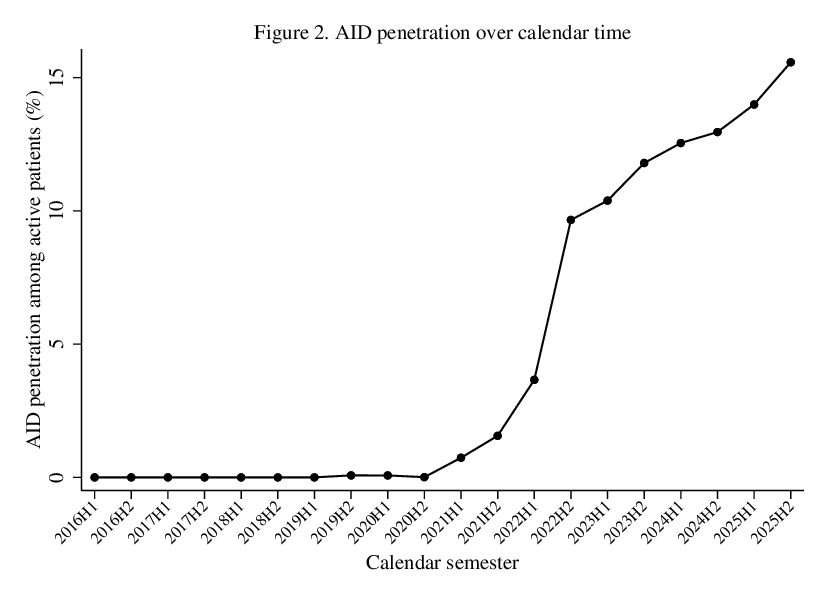}
\caption{AID penetration over calendar time}
\label{fig:aid_penetration_calendar}
\end{figure}
\begin{figure}[!htbp]
\centering
\includegraphics[width=0.95\textwidth]{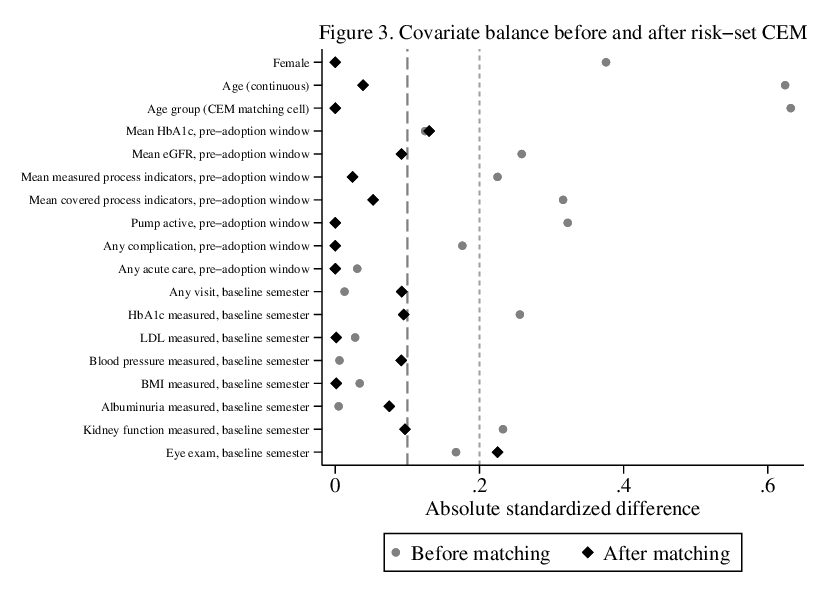}
\caption{Covariate balance before and after risk-set CEM}
\label{fig:loveplot_cem}
\end{figure}

\begin{figure}[!htbp]
\centering
\includegraphics[width=0.75\textwidth]{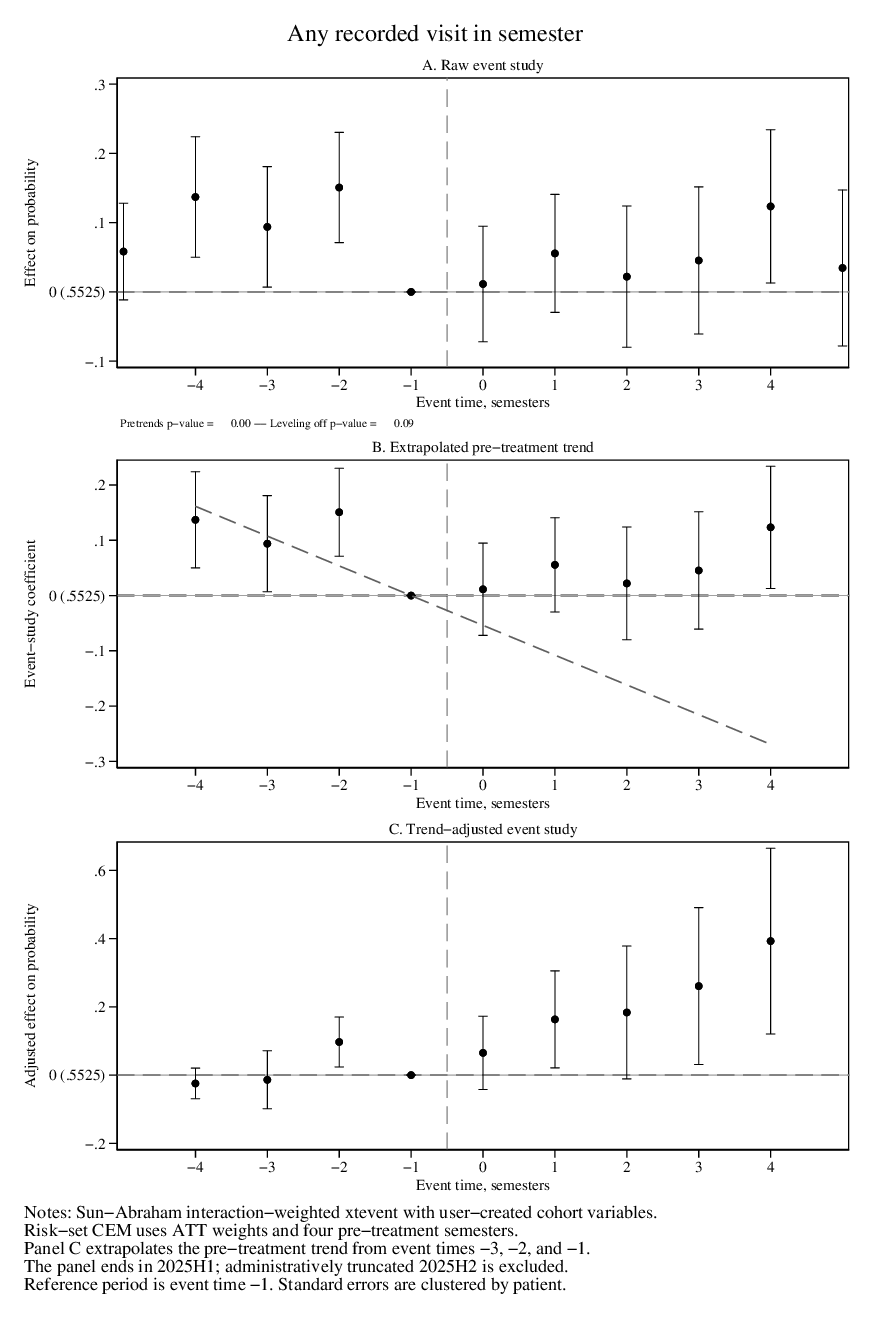}
\caption{Dynamic effect of AID adoption on recorded diabetologist visits}
\label{fig:aid_visit_eventstudy}
\begin{minipage}{0.90\textwidth}
\footnotesize
\emph{Notes:} The figure reports Sun--Abraham event-study estimates for AID
adoption and any recorded diabetologist visit. Event time is in semesters
relative to adoption, with $m=-1$ omitted. Panel A shows raw estimates,
Panel B the extrapolated pre-trend, and Panel C the trend-adjusted estimates.
Risk-set CEM weights are used; standard errors are clustered by patient.
\end{minipage}
\end{figure}

\begin{figure}[!htbp]
\centering
\includegraphics[width=0.90\textwidth]{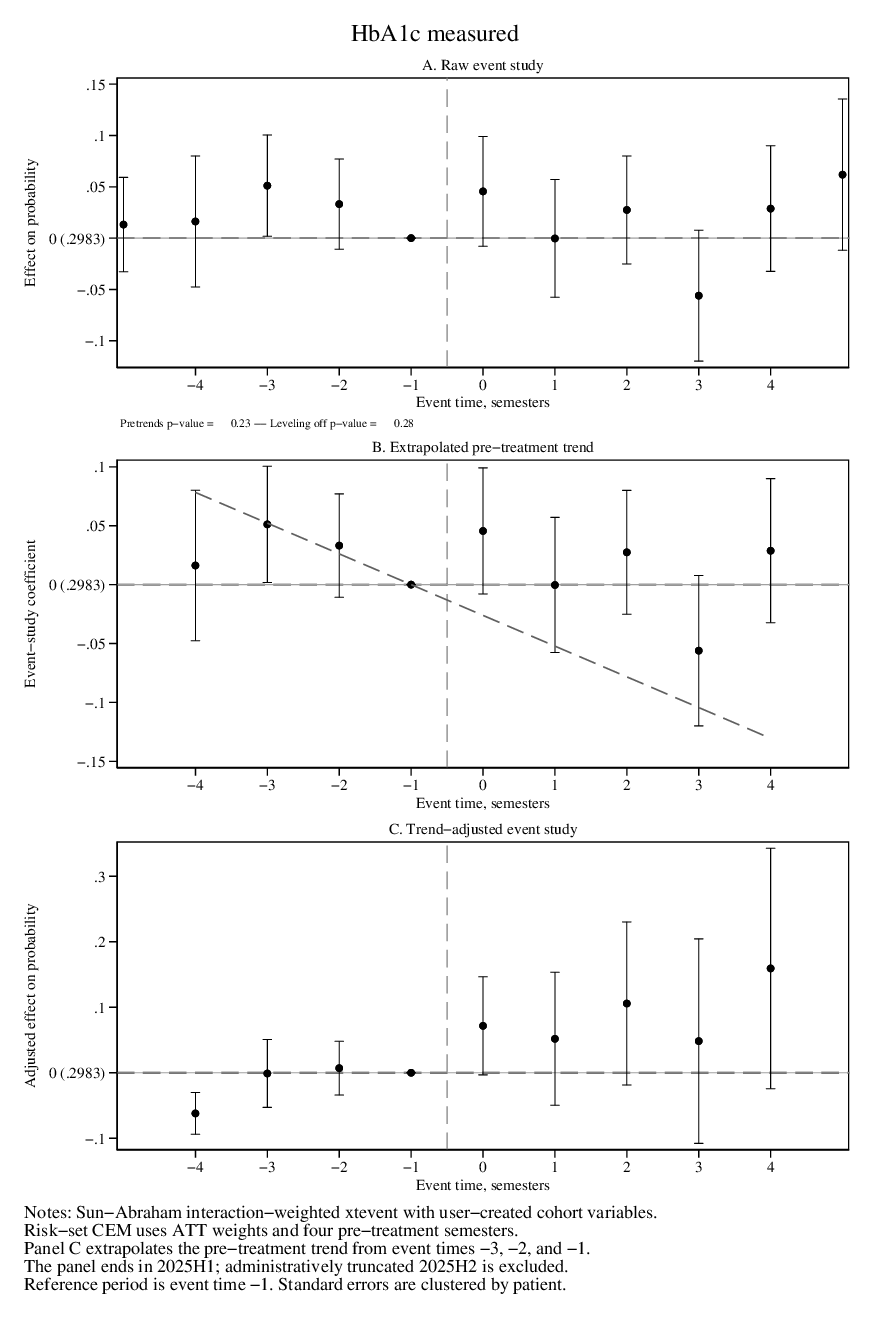}
\caption{Dynamic effect of AID adoption on HbA1c measurement}
\label{fig:aid_hba1c_measured_eventstudy}
\begin{minipage}{0.90\textwidth}
\footnotesize
\emph{Notes:} The figure reports Sun--Abraham event-study estimates for AID
adoption and HbA1c measurement. Event time is in semesters relative to
adoption, with $m=-1$ omitted. Panel A shows raw estimates, Panel B the
extrapolated pre-trend, and Panel C the trend-adjusted estimates. Risk-set
CEM weights are used; standard errors are clustered by patient.
\end{minipage}
\end{figure}

\begin{figure}[!htbp]
\centering
\includegraphics[width=0.75\textwidth]{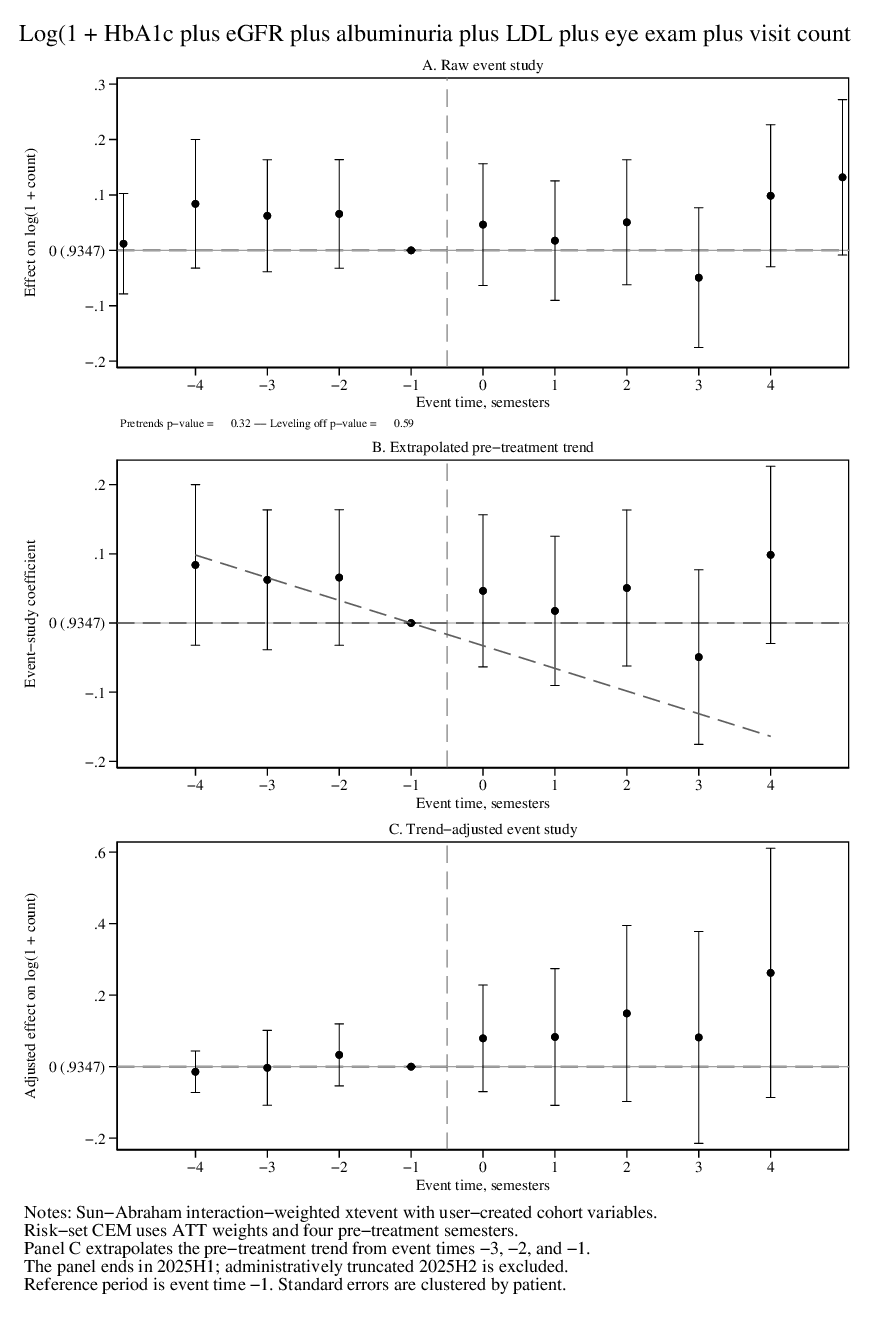}
\caption{Dynamic effect of AID adoption on the composite process-of-care index}
\label{fig:aid_process_index_eventstudy}
\begin{minipage}{0.90\textwidth}
\footnotesize
\emph{Notes:} The figure reports Sun--Abraham event-study estimates for AID
adoption and the composite process-of-care index, defined as
$\log(1+\text{recorded monitoring activities})$. Event time is in semesters
relative to adoption, with $m=-1$ omitted. Panel A shows raw estimates,
Panel B the extrapolated pre-trend, and Panel C the trend-adjusted estimates.
Risk-set CEM weights are used; standard errors are clustered by patient.
\end{minipage}
\end{figure}


\begin{figure}[!htbp]
\centering
\includegraphics[width=0.90\textwidth]{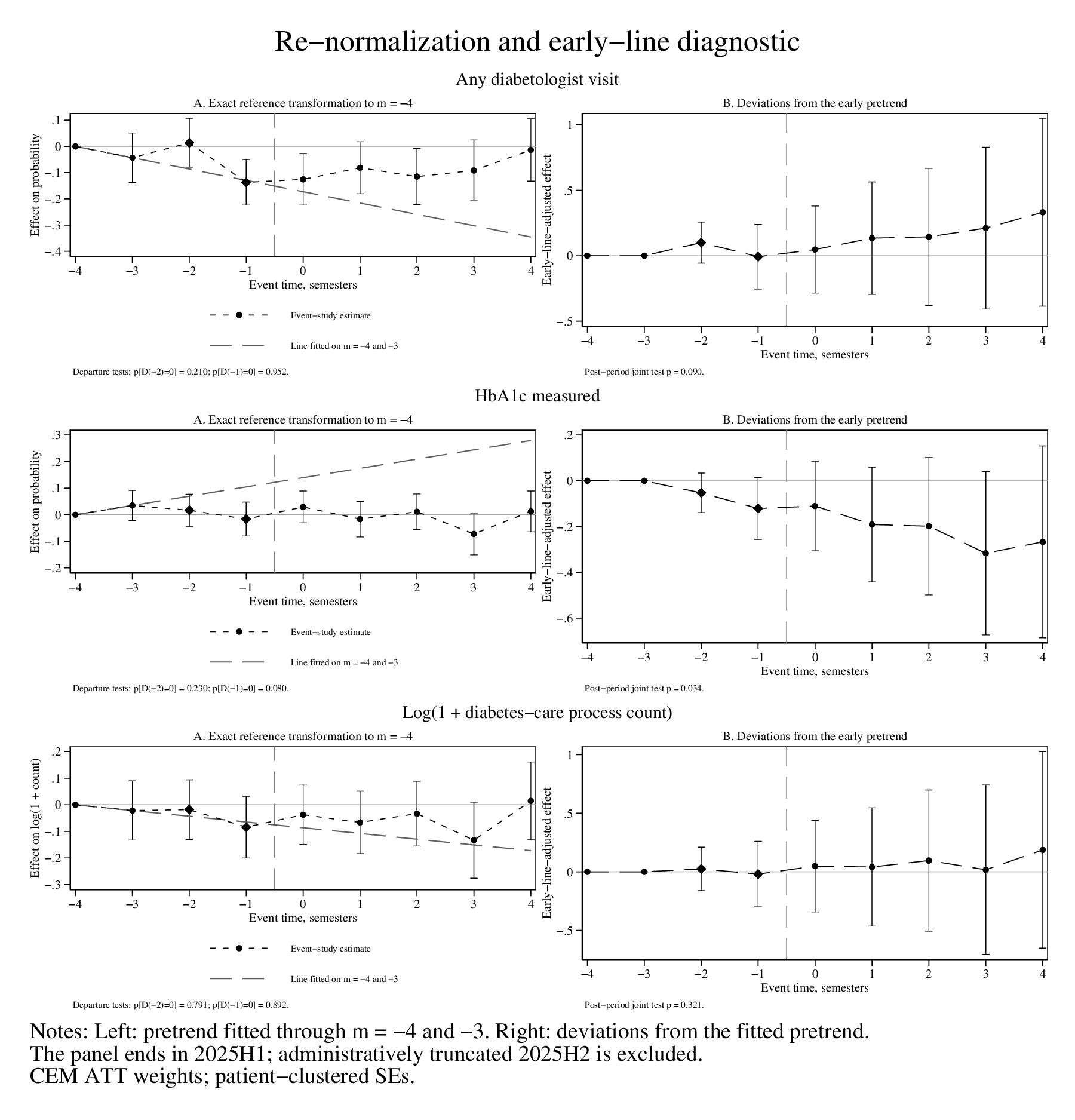}
\caption{Re-normalization and early-pre-trend diagnostic. Left panels re-normalize the event-study estimates to event time \(m=-4\) and overlay the linear trend fitted through \(m=-4\) and \(m=-3\). Right panels report deviations from this early pre-adoption trend. Rows correspond to diabetologist visits, HbA1c measurement, and the composite diabetes-care process index.}
\label{fig:renormalization_diagnostic}
\end{figure}

\begin{figure}[!htbp]
\centering
\includegraphics[width=0.90\textwidth]{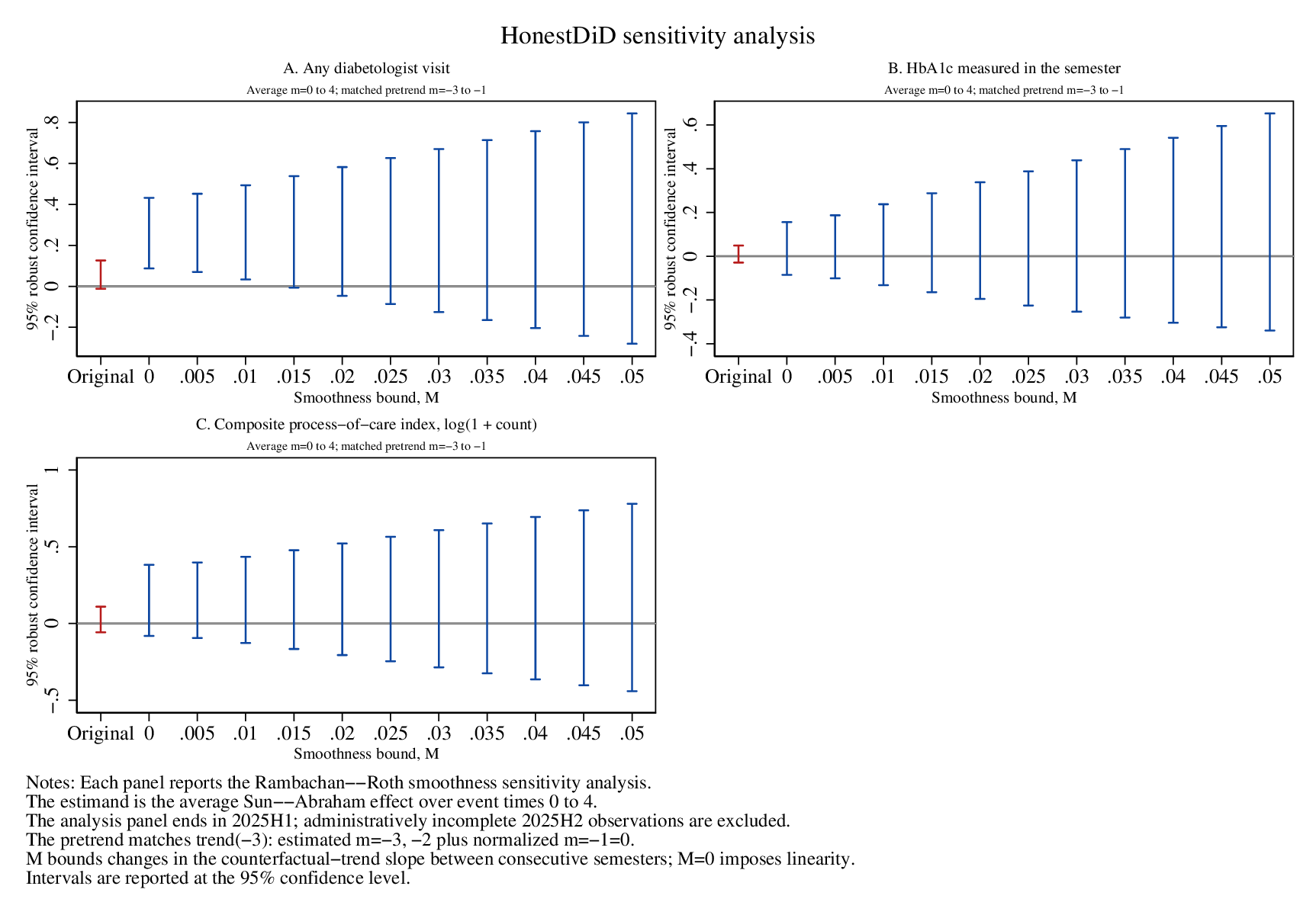}
\caption{HonestDiD sensitivity analysis of the effects of AID adoption on healthcare monitoring}
\label{fig:aid_honestdid_three_outcomes}
\begin{minipage}{0.90\textwidth}
\footnotesize
\emph{Notes:} The three panels report Rambachan--Roth smoothness-based
sensitivity analyses for any diabetologist visit, HbA1c measurement, and the
composite process-of-care index, respectively. The estimand is the average
Sun--Abraham treatment effect over event times $m=0,\ldots,4$. The
counterfactual pre-adoption trend is estimated using event times $m=-3$ and
$m=-2$, together with the normalized reference period $m=-1$. The sensitivity
parameter $M$ bounds the maximum change in the slope of the untreated potential
outcome between consecutive semesters; $M=0$ therefore imposes a linear
continuation of the pre-adoption trend. Points show estimated effects and bars
report 95\% confidence intervals. Estimates use risk-set CEM ATT weights, with
standard errors clustered at the patient level.
\end{minipage}
\end{figure}

\clearpage


\begin{figure}[!htbp]
\centering
\includegraphics[width=0.75\textwidth]{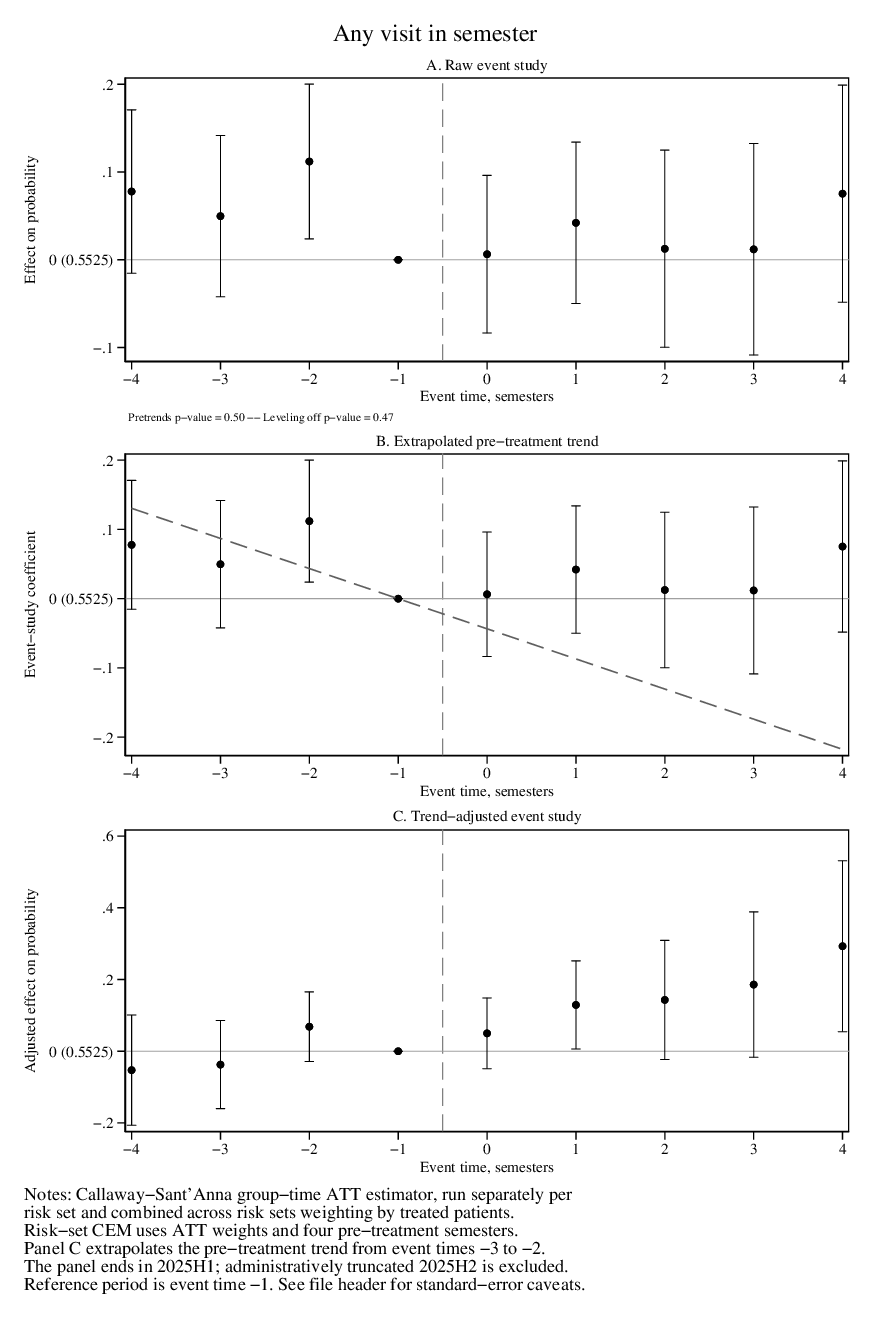}
\caption{Callaway--Sant'Anna estimates of the dynamic effect of AID adoption
on diabetologist visits}
\label{fig:cs_visit}
\begin{minipage}{0.90\textwidth}
\footnotesize
\emph{Notes:} The figure reports Callaway--Sant'Anna group-time ATT estimates for the effect of AID adoption on having any diabetologist visit during the semester. Event time is measured in semesters relative to adoption, with $m=-1$ omitted. Panel A presents the unadjusted estimates, Panel B shows the pre-treatment trend extrapolated from event times $m=-3$, $m=-2$ and $m=-1$, and Panel C presents the trend-adjusted estimates. Risk-set CEM weights are used throughout. Group-time ATTs are estimated separately within each risk set and then aggregated across risk sets using treated-patient counts as weights. Standard errors are clustered at the patient level.
\end{minipage}
\end{figure}


\begin{figure}[!htbp]
\centering
\includegraphics[width=0.75\textwidth]{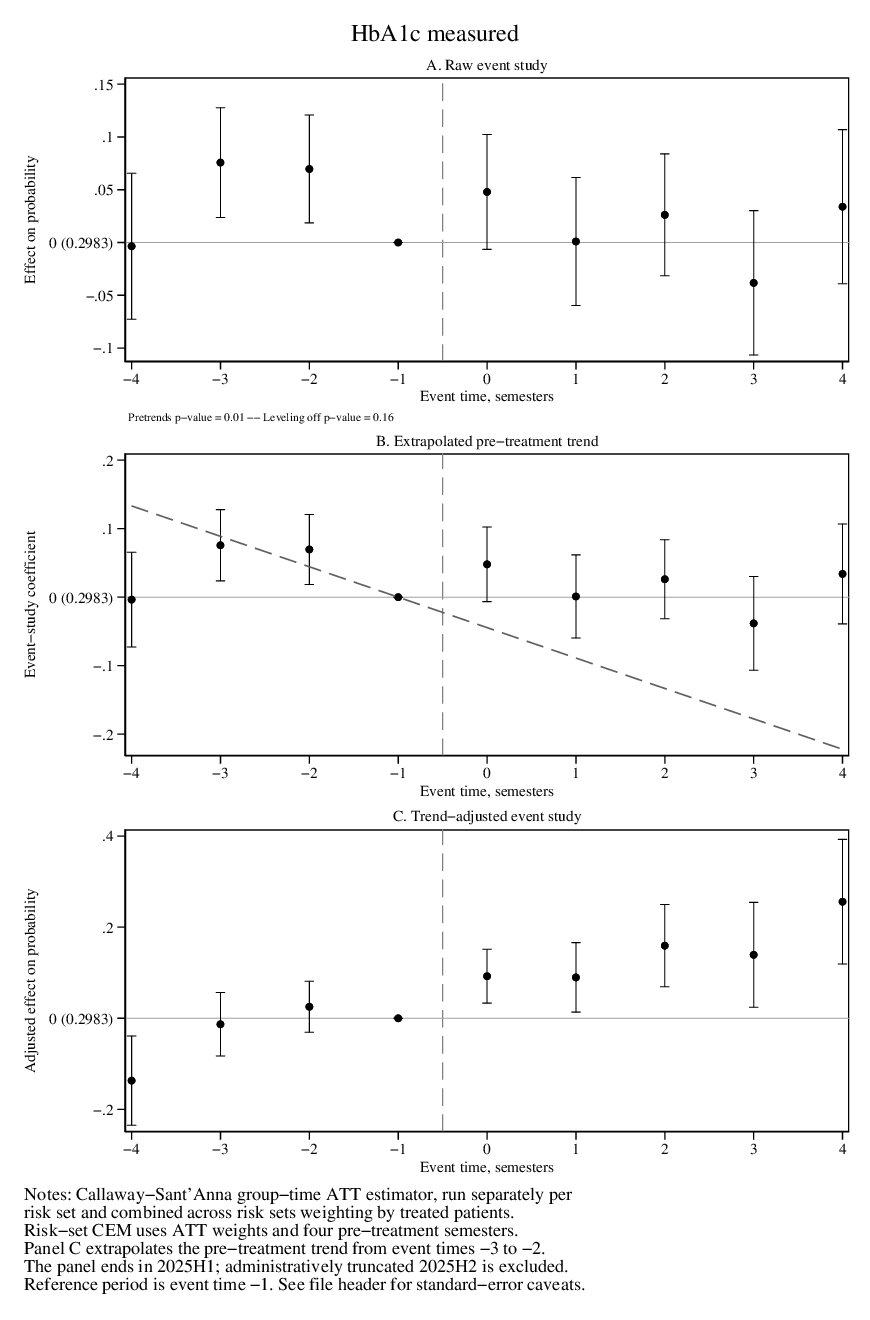}
\caption{Callaway--Sant'Anna estimates of the dynamic effect of AID adoption
on HbA1c measurement}
\label{fig:cs_hba}
\begin{minipage}{0.90\textwidth}
\footnotesize
\emph{Notes:} The figure reports Callaway--Sant'Anna group-time ATT estimates
for the effect of AID adoption on whether HbA1c was measured during the
semester. Event time is measured in semesters relative to adoption, with
$m=-1$ omitted. Panel A presents the unadjusted estimates, Panel B shows the
pre-treatment trend extrapolated from event times $m=-3$, $m=-2$ and $m=-1$, and Panel C presents the trend-adjusted estimates. Risk-set CEM weights are used throughout. Group-time ATTs are estimated separately within each risk set and then aggregated across risk sets using treated-patient counts as weights. Standard errors are clustered at the patient level.
\end{minipage}
\end{figure}


\begin{figure}[!htbp]
\centering
\includegraphics[width=0.75\textwidth]{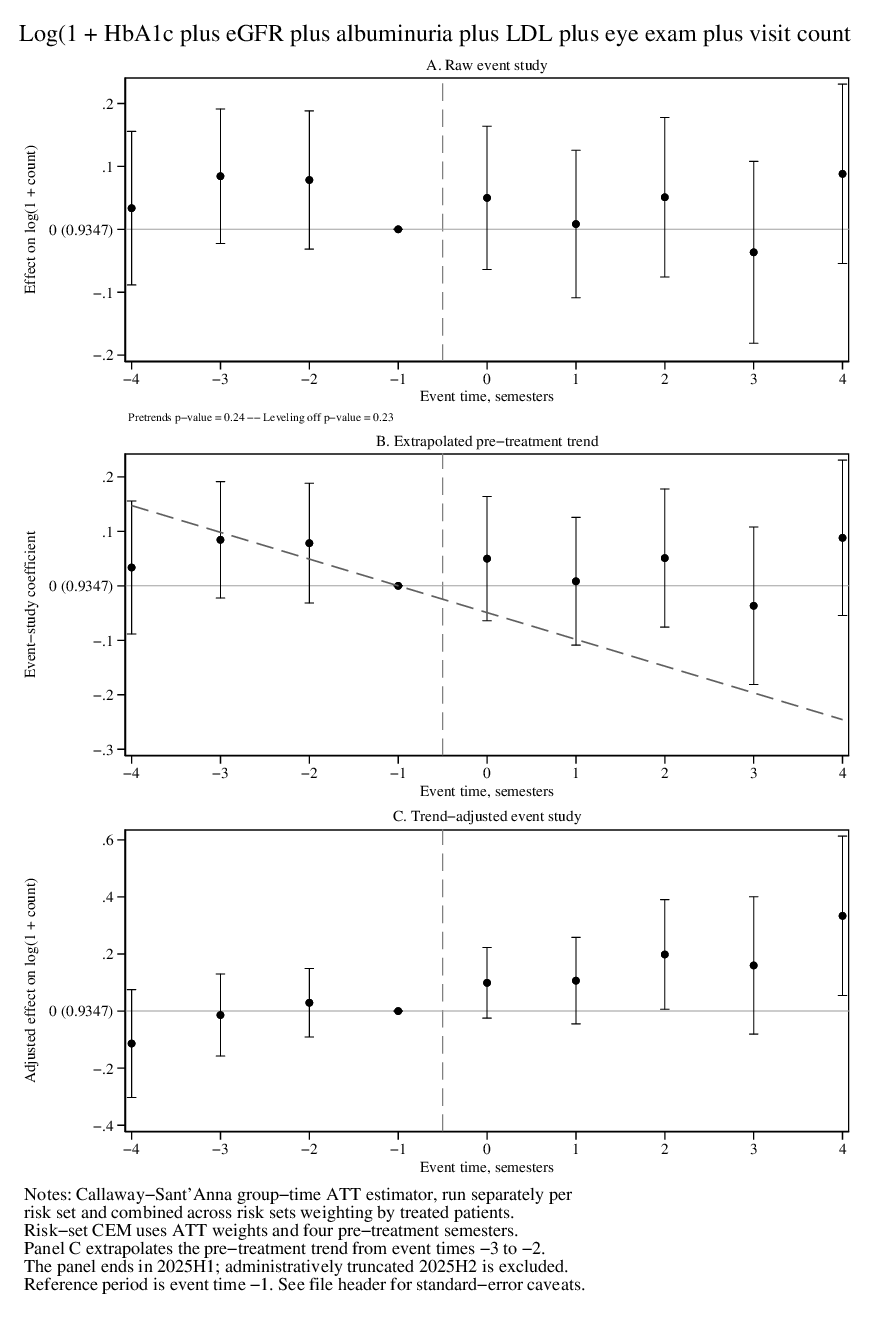}
\caption{Callaway--Sant'Anna estimates of the dynamic effect of AID adoption
on the composite process-of-care index}
\label{fig:cs_process}
\begin{minipage}{0.90\textwidth}
\footnotesize
\emph{Notes:} The figure reports Callaway--Sant'Anna group-time ATT estimates for the effect of AID adoption on the composite process-of-care index, defined as $\log(1+\text{recorded process-of-care activities})$. Event time is measured in semesters relative to adoption, with $m=-1$ omitted. Panel A presents the unadjusted estimates, Panel B shows the pre-treatment trend extrapolated from event times $m=-3$, $m=-2$ and $m=-1$, and Panel C presents the trend-adjusted estimates. Risk-set CEM weights are used throughout. Group-time ATTs are estimated separately within each risk set and then aggregated across risk sets using treated-patient counts as weights. Standard errors are clustered at the patient level.
\end{minipage}
\end{figure}




\begin{figure}[!htbp]
\centering
\includegraphics[width=0.75\textwidth]{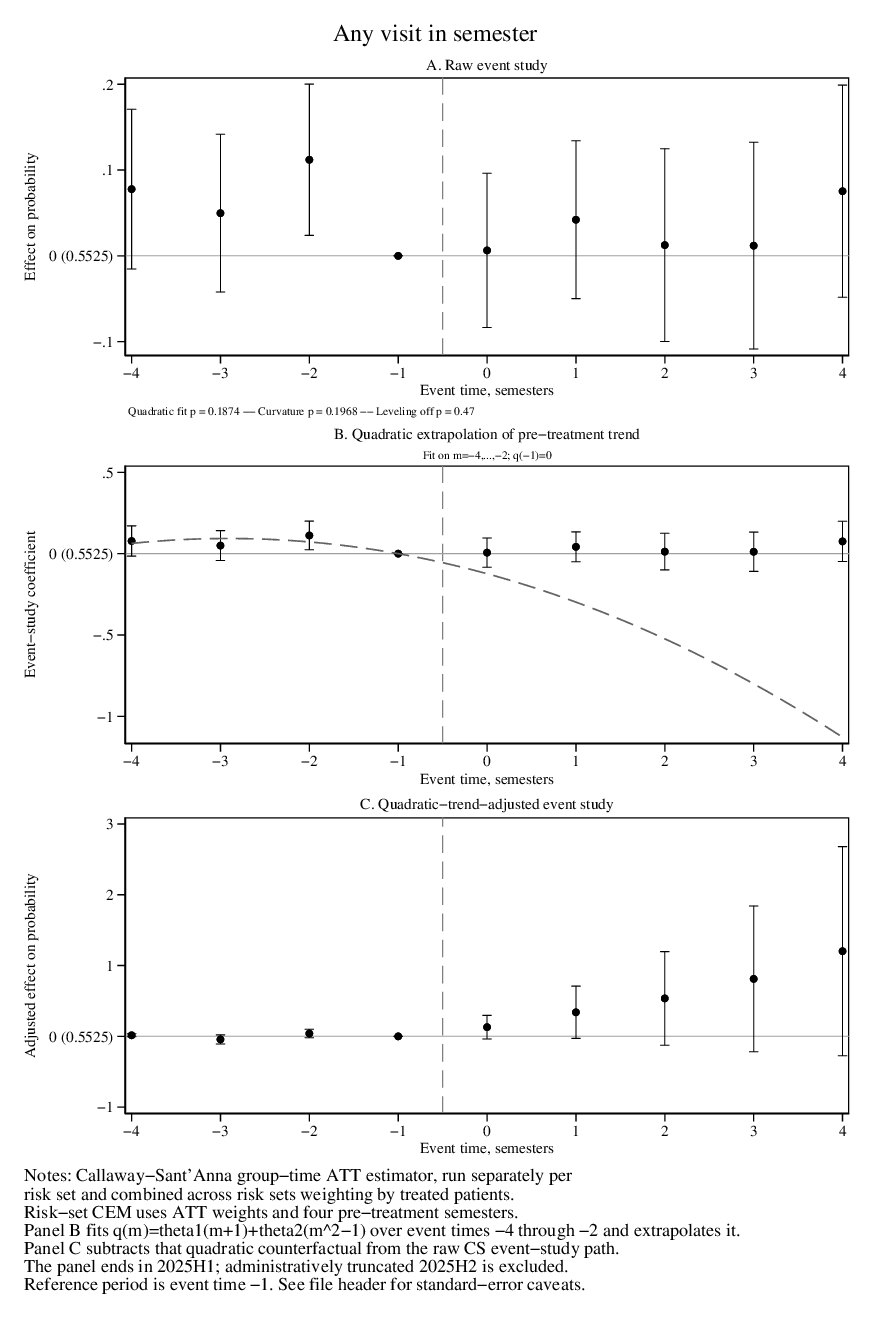}
\caption{Robustness to quadratic trend extrapolation: probability of a
diabetology visit}
\label{fig:cs_quad_visit}
\begin{minipage}{0.90\textwidth}
\footnotesize
\emph{Notes:} Callaway--Sant'Anna group-time ATT estimates for the
probability of a diabetology visit, by semesters relative to adoption
($m=-1$ omitted). Panel A: unadjusted estimates. Panel B: quadratic
pre-trend fitted over $m=-4,-3,-2$, normalized to zero at $m=-1$.
Panel C: quadratic-adjusted estimates. Risk-set CEM weights; standard
errors clustered at the patient level.
\end{minipage}
\end{figure}

\begin{figure}[!htbp]
\centering
\includegraphics[width=0.75\textwidth]{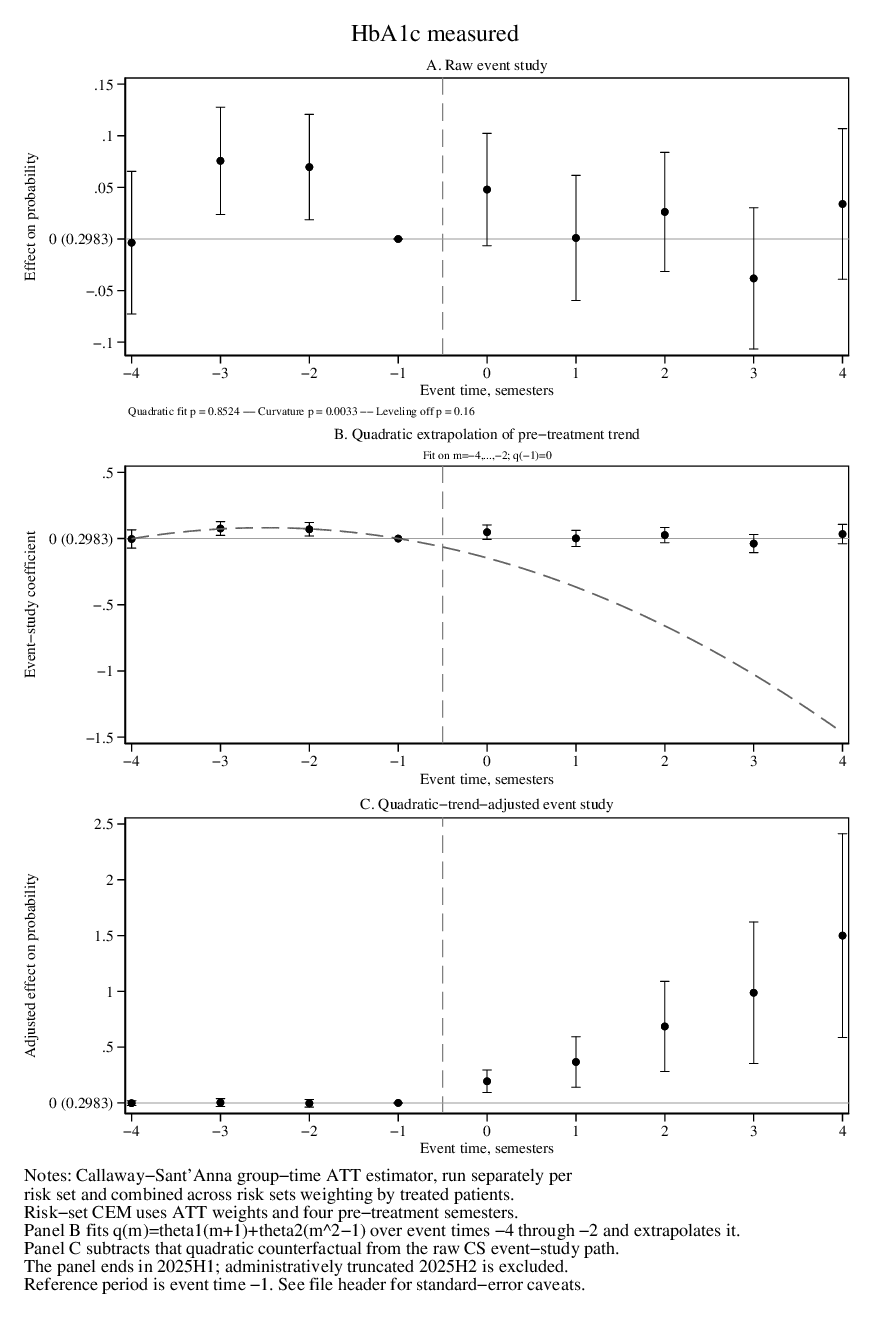}
\caption{Robustness to quadratic trend extrapolation: HbA1c measurement}
\label{fig:cs_quad_hba}
\begin{minipage}{0.90\textwidth}
\footnotesize
\emph{Notes:} Callaway--Sant'Anna group-time ATT estimates for HbA1c, by
semesters relative to adoption ($m=-1$ omitted). Panel A: unadjusted
estimates. Panel B: quadratic pre-trend fitted over $m=-4,-3,-2$,
normalized to zero at $m=-1$. Panel C: quadratic-adjusted estimates.
Risk-set CEM weights; standard errors clustered at the patient level.
\end{minipage}
\end{figure}

\begin{figure}[!htbp]
\centering
\includegraphics[width=0.75\textwidth]{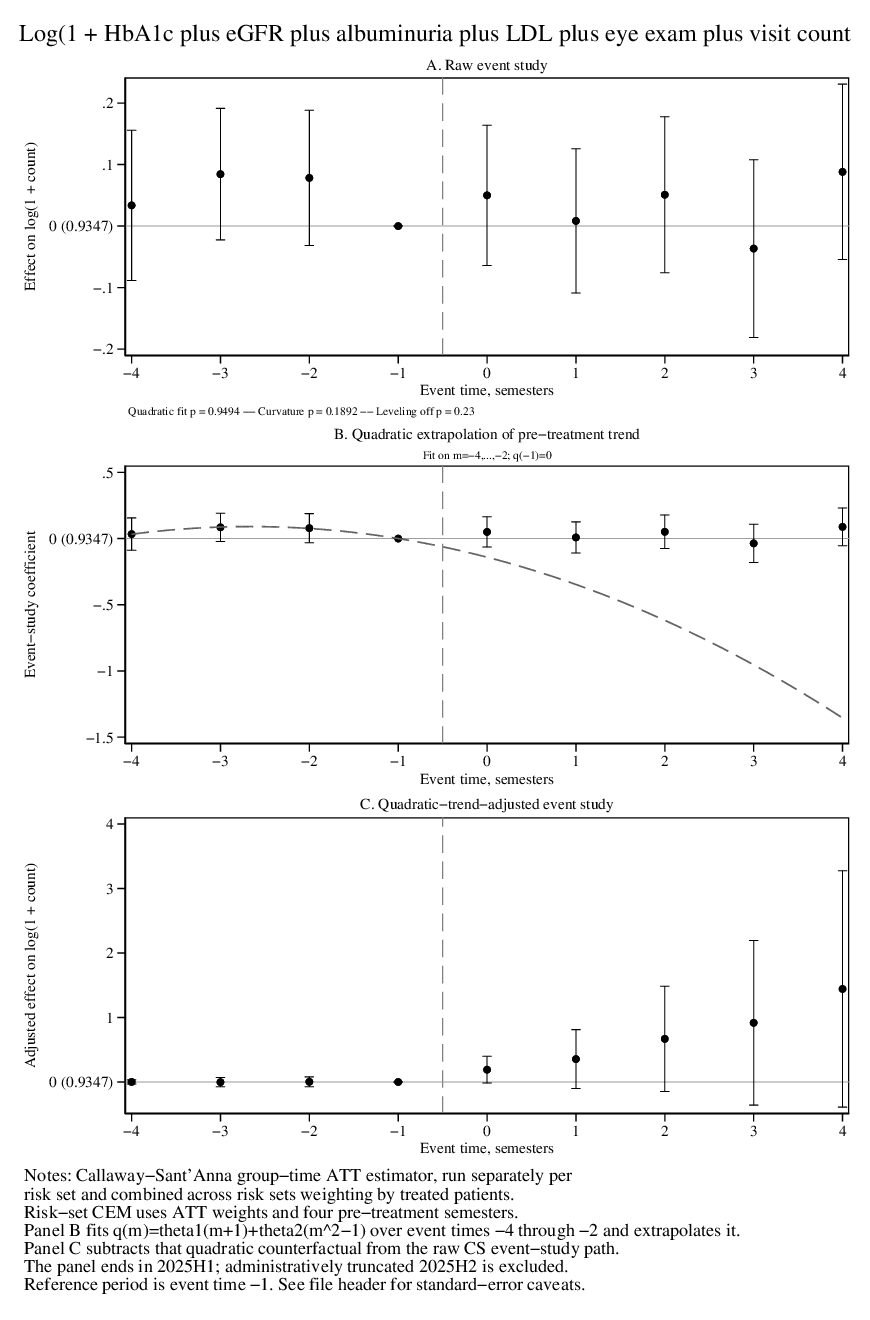}
\caption{Robustness to quadratic trend extrapolation: composite
process-of-care index}
\label{fig:cs_quad_process}
\begin{minipage}{0.90\textwidth}
\footnotesize
\emph{Notes:} Callaway--Sant'Anna group-time ATT estimates for the
composite process-of-care index ($\log(1+\text{recorded
process-of-care activities})$), by semesters relative to adoption
($m=-1$ omitted). Panel A: unadjusted estimates. Panel B: quadratic
pre-trend fitted over $m=-4,-3,-2$, normalized to zero at $m=-1$.
Panel C: quadratic-adjusted estimates. Risk-set CEM weights; standard
errors clustered at the patient level.
\end{minipage}
\end{figure}

\clearpage

\renewcommand{\thefigure}{F\arabic{figure}}
\renewcommand{\thetable}{F\arabic{table}}
\renewcommand{\theHfigure}{F\arabic{figure}}
\renewcommand{\theHtable}{F\arabic{table}}
\setcounter{figure}{0}
\setcounter{table}{0}


\appendix
\section*{Appendix}

\renewcommand{\thefigure}{A\arabic{figure}}
\renewcommand{\thetable}{A\arabic{table}}
\renewcommand{\theHfigure}{A\arabic{figure}}
\renewcommand{\theHtable}{A\arabic{table}}
\setcounter{figure}{0}
\setcounter{table}{0}


\renewcommand{\thesection}{\Alph{section}}
\section{Derivations for the task-based model}
\label{app:task_model}

This appendix derives the task-based framework summarized in
Section~\ref{subsec:conceptual_framework}: the clinical-demand
equation~\eqref{eq:clinical_demand_simple}, the
decomposition~\eqref{eq:net_clinical_demand}, the comparative statics
(Proposition~\ref{prop:comparative_statics}), and the dynamic prediction
(Corollary~\ref{cor:impact-dynamics}) tested in
Section~\ref{sec:empirical_strategy}.

\subsection*{Environment}
\label{app:environment}

A patient's glycemic control $H$, the inverse of glycemic risk, is
produced from a continuum of management tasks indexed by $x\in[N-1,N]$,
aggregated with a Cobb--Douglas technology,
\begin{equation}
H
=
\exp\left\{
\int_{N-1}^{N}\ln y(x)\,dx
\right\},
\label{eq:app_aggregator}
\end{equation}
so every task is essential under this aggregator: dosing, monitoring, and
periodic review enter as complements rather than substitutes. We take this
as a modeling assumption that captures the idea that diabetes management
tasks are complementary, not as a property established independently by
the clinical literature. Each task can be produced with device input $d(x)$ or
clinical input $m(x)$, the specialist visits and laboratory monitoring our
outcomes record, up to an automation frontier $I\in(N-1,N)$:
\begin{equation}
y(x)
=
\begin{cases}
A_D\,\gamma_D(x)\,d(x)+A_M\,\gamma_M(x)\,m(x),
& x\in[N-1,I],\\[4pt]
A_M\,\gamma_M(x)\,m(x),
& x\in(I,N],
\end{cases}
\label{eq:app_technology}
\end{equation}
where $A_D$, $A_M$ are input-specific productivities and $\gamma_D(x)$,
$\gamma_M(x)$ are task-specific productivity schedules. The patient's own
time is folded into the device side of \eqref{eq:app_technology}
deliberately, since the data record only the clinical half of the human
input. One consequence follows directly: because every task above $I$
draws on clinical input alone, the model has no channel through which a
rise in patient time-productivity acts within a fixed task range;
own-time substitution of the kind Grossman (1972) describes enters only
insofar as it shifts the frontier itself, $\Delta I$. The substitution
hypothesis is therefore tested through the net task-content and scale
margins in equation~\eqref{eq:net_clinical_demand}, not through a
separate patient-time input.

Throughout, we restrict attention to configurations and shifts that
preserve an interior frontier,
\begin{equation}
N-1<I<N,
\qquad\text{equivalently}\qquad
N-I\in(0,1),
\label{eq:app_interiority}
\end{equation}
at every point evaluated, including the shifted point $(I+\Delta I,\,
N+\Delta N)$ in Proposition~\ref{prop:comparative_statics}(c) and each
point $(I_m,N_m)$ along the adoption path in
Section~\ref{app:dynamics}, so that $c_H(I,N)$ in \eqref{eq:app_cH} and
the linearization underlying Proposition~\ref{prop:comparative_statics}
and Corollary~\ref{cor:impact-dynamics} remain well defined.

\begin{assumption}[Comparative advantage]
\label{ass:comparative_advantage}
The ratio $\gamma_M(x)/\gamma_D(x)$ is continuous and strictly increasing
in $x$.
\end{assumption}

Tasks are ordered by the judgment they require. Low-$x$ tasks are routine
and codifiable, such as basal-rate adjustments and correction doses.
High-$x$ tasks call for interpretation, such as reading glucose
variability and revising treatment in response to device data. Let $r_M$
denote the shadow cost of a unit of clinical input and $p_D$ the effective
per-unit cost of device services. Diabetes carries a chronic-disease
exemption from the standard NHS co-payment (\emph{ticket}) for services
tied to its diagnosis and management, so the patients in our sample face
little or no point-of-use monetary cost for routine diabetes care.
Consequently $r_M$ reflects clinic capacity and the opportunity cost of
specialist time rather than a household budget, not the absence of
point-of-use pricing across the NHS generally.

\begin{assumption}[Cost advantage below the frontier]
\label{ass:cost_advantage}
At the automation frontier,
\begin{equation}
\frac{p_D}{A_D\,\gamma_D(I)}
\;\leq\;
\frac{r_M}{A_M\,\gamma_M(I)}.
\label{eq:app_cost_advantage}
\end{equation}
\end{assumption}

Under Assumptions~\ref{ass:comparative_advantage}
and~\ref{ass:cost_advantage}, cost minimization assigns every task
$x\leq I$ to the device and every task $x>I$ to clinical input: the
device is cheaper whenever $\gamma_M(x)/\gamma_D(x)\leq r_M A_D/(p_D A_M)$,
a ratio that rises in $x$ by Assumption~\ref{ass:comparative_advantage}
and satisfies this inequality at $x=I$ by
Assumption~\ref{ass:cost_advantage}, hence for every $x\leq I$; tasks
above $I$ cannot be automated by \eqref{eq:app_technology} and use
clinical input instead. Unit task cost is therefore
\begin{equation}
q(x)
=
\begin{cases}
\dfrac{p_D}{A_D\,\gamma_D(x)},
& x\in[N-1,I],\\[8pt]
\dfrac{r_M}{A_M\,\gamma_M(x)},
& x\in(I,N].
\end{cases}
\label{eq:app_unit_task_cost}
\end{equation}

\subsection*{Cost, demand, and the decomposition of $\Delta\ln M$}
\label{app:cost_demand}

Minimizing $\int q(x)y(x)\,dx$ subject to the aggregator constraint gives
the pointwise condition $q(x)y(x)=\lambda$ for every $x$: expenditure is
equalized across tasks, the defining property of the Cobb--Douglas
aggregator. Substituting back and using the fact that the task range has
measure one,
\begin{equation}
\lambda=c_H\,H,
\qquad
c_H(I,N)\equiv\exp\left\{
\int_{N-1}^{I}\ln\frac{p_D}{A_D\gamma_D(x)}\,dx
+
\int_{I}^{N}\ln\frac{r_M}{A_M\gamma_M(x)}\,dx
\right\},
\label{eq:app_cH}
\end{equation}
so $c_H$ is the minimized unit cost of glycemic control.

For tasks $x\in(I,N]$, expenditure on clinical input is
$r_M\,m(x)=\lambda=c_HH$, so total clinical demand integrates over the
clinical task range:
\begin{equation}
M
=
\int_{I}^{N} m(x)\,dx
=
(N-I)\,\frac{c_H\,H}{r_M}.
\label{eq:app_M}
\end{equation}
Evaluated at the targeted level $H=H^{*}$, this is
equation~\eqref{eq:clinical_demand_simple} in the main text.

We interpret $H^{*}$ as the treatment intensity that the care team
chooses given the AUSL's resource constraints, weighing the clinical
benefit of tighter control against the resource cost of producing it,
rather than as a household demand decision by a patient who does not bear
$c_H$ directly at the point of use. Under this interpretation the
targeted level of control solves $B'(H^{*})=c_H$ for a strictly concave
benefit function $B(\cdot)$, which gives the elasticity
\begin{equation}
\varepsilon
\;\equiv\;
-\frac{d\ln H^{*}}{d\ln c_H}
=
-\frac{B'(H^{*})}{H^{*}B''(H^{*})}
\;>\;0,
\label{eq:app_epsilon}
\end{equation}
equation~\eqref{eq:control_elasticity_simple} in the main text: the pace
at which the care team tightens the plan once tighter targets become
attainable, not a household price response. Since $\varepsilon>0$ for any
strictly concave $B$, $H^{*}$ itself always rises when $c_H$ falls,
whatever the value of $\varepsilon$; what depends on whether
$\varepsilon\gtrless1$ is only whether the resource component $c_H H^{*}$,
and hence $M$, rises or falls. Under a constant-elasticity specification
$H^{*}=K\,c_H^{-\varepsilon}$, differencing $\ln M$ between the pre- and
post-adoption states gives
\begin{equation}
\Delta\ln M
=
\ln\!\left(\frac{N_1-I_1}{N_0-I_0}\right)
+
(1-\varepsilon)\,\Delta\ln c_H
-
\Delta\ln r_M,
\label{eq:app_decomposition}
\end{equation}
equation~\eqref{eq:net_clinical_demand} in the main text. The
decomposition is exact under constant elasticity and holds to first order
otherwise, with $\varepsilon$ evaluated at the pre-adoption cost.

\subsection*{Comparative statics}
\label{app:comparative_statics}

Define $\pi_I\equiv\ln[p_D/(A_D\gamma_D(I))]-\ln[r_M/(A_M\gamma_M(I))]\leq0$
and $\pi_N\equiv\ln[r_M/(A_M\gamma_M(N))]-\ln[p_D/(A_D\gamma_D(N-1))]$, the
marginal log-cost effects of the frontier and of the task range on unit
cost. We avoid calling $\pi_I,\pi_N$ elasticities, since $I$ and $N$ enter
in levels rather than logs; they are semi-elasticities of $c_H$ with
respect to the task boundaries. Here $\pi_I\leq0$
follows from Assumption~\ref{ass:cost_advantage}: automation
is adopted only where it weakly lowers cost. Neither
Assumption~\ref{ass:comparative_advantage} nor
Assumption~\ref{ass:cost_advantage} restricts $\pi_N$, since both compare
device and clinical input at the same task $x=I$, while $\pi_N$ compares
a clinical task at $N$ against a device task at $N-1$.

\begin{proposition}[Comparative statics of clinical demand]
\label{prop:comparative_statics}
\label{prop:displacement}
\label{prop:reinstatement}
\label{prop:net} 
(a) Holding $N$ fixed,
\begin{equation}
\frac{\partial\ln M}{\partial I}
=
-\frac{1}{N-I}
+
(1-\varepsilon)\,\pi_I,
\label{eq:app_prop1}
\end{equation}
negative without qualification except through the productivity term $(1-\varepsilon)\pi_I$, which can offset it when $\varepsilon>1$. (b) Holding $I$ fixed,
\begin{equation}
\frac{\partial\ln M}{\partial N}
=
\frac{1}{N-I}
+
(1-\varepsilon)\,\pi_N,
\label{eq:app_prop2}
\end{equation}
positive whenever $\varepsilon\leq1$ and $\pi_N\geq0$. More generally, define
\begin{equation}
\varepsilon^{*}
\equiv
1+\frac{1}{(N-I)\,\pi_N},
\qquad \pi_N\neq0.
\label{eq:app_epsilon_star}
\end{equation}
Expression~\eqref{eq:app_prop2} is positive if and only if $\varepsilon<\varepsilon^{*}$ when $\pi_N>0$, and if and only if $\varepsilon>\varepsilon^{*}$ when $\pi_N<0$; it is positive for every $\varepsilon$ when $\pi_N=0$. Since $\pi_N$ is unsigned by Assumptions~\ref{ass:comparative_advantage}--\ref{ass:cost_advantage} (Section~\ref{subsec:conceptual_framework}), part (b) is not positive without qualification: raising $N$ retires the lowest-index task in the range, at $N-1$, and replaces it with a new highest-index task at $N$, where clinical input holds the comparative advantage; whether that compositional shift raises clinical demand depends jointly on the sign of $\pi_N$ and on $\varepsilon$. (c) For a simultaneous shift $(\Delta I,\Delta N)>0$, to first order
\begin{equation}
\Delta\ln M
=
\frac{\Delta N-\Delta I}{N-I}
+
(1-\varepsilon)\bigl(\pi_I\,\Delta I+\pi_N\,\Delta N\bigr).
\label{eq:app_prop3}
\end{equation}
\end{proposition}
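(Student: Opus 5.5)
The plan is to differentiate the log of equation~\eqref{eq:app_M} evaluated at $H=H^{*}$:
\[
\ln M=\ln(N-I)+\ln c_H(I,N)+\ln H^{*}(c_H)-\ln r_M .
\]
Because $r_M$ is held fixed, the last term drops out. By the chain rule and the definition~\eqref{eq:app_epsilon} of $\varepsilon$, the middle two terms combine as $d\ln c_H+d\ln H^{*}=(1-\varepsilon)\,d\ln c_H$. Every part of the proposition therefore reduces to two tasks: computing $\partial\ln(N-I)$ and computing the partials of $\ln c_H$ with respect to $I$ and $N$.

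For the second task, I apply Leibniz's rule to the explicit integral in~\eqref{eq:app_cH}. Differentiating in $I$ moves the upper limit of the device integral and the lower limit of the clinical integral. This gives $\partial\ln c_H/\partial I=\ln[p_D/(A_D\gamma_D(I))]-\ln[r_M/(A_M\gamma_M(I))]$, which is exactly the $\pi_I$ defined before the proposition. Its sign, $\pi_I\le 0$, is simply~\eqref{eq:app_cost_advantage} in logs.

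Differentiating in $N$ moves the lower limit $N-1$ of the device integral and the upper limit $N$ of the clinical integral. This gives $\ln[r_M/(A_M\gamma_M(N))]-\ln[p_D/(A_D\gamma_D(N-1))]=\pi_N$. Continuity of the integrands is needed for Leibniz's rule. It is supplied by continuity of $\gamma_M/\gamma_D$ in Assumption~\ref{ass:comparative_advantage}, together with the implicit continuity of $\gamma_D$ and $\gamma_M$ themselves, which I would state explicitly. Interiority~\eqref{eq:app_interiority} guarantees that $\ln(N-I)$ is differentiable and that the piecewise assignment in~\eqref{eq:app_unit_task_cost} is unchanged locally.

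Part (a) then follows from $\partial\ln(N-I)/\partial I=-1/(N-I)$. The qualitative remark also follows: with $\pi_I\le0$, the second term $(1-\varepsilon)\pi_I$ is nonpositive when $\varepsilon\le1$ and nonnegative only when $\varepsilon>1$.

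Part (b) uses $\partial\ln(N-I)/\partial N=+1/(N-I)$. The sign analysis is a rearrangement of the inequality $1/(N-I)+(1-\varepsilon)\pi_N>0$, which is equivalent to $(\varepsilon-1)\pi_N<1/(N-I)$. If $\pi_N>0$, dividing by $\pi_N$ preserves the direction and gives $\varepsilon<\varepsilon^{*}$. If $\pi_N<0$, the direction flips and gives $\varepsilon>\varepsilon^{*}$. If $\pi_N=0$, the expression equals $1/(N-I)>0$. The sufficient condition "$\varepsilon\le1$ and $\pi_N\ge0$" is the special case in which both terms are nonnegative and the first is strictly positive.

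Part (c) is the first-order Taylor expansion of $\ln M(I,N)$: $\Delta\ln M\approx(\partial_I\ln M)\Delta I+(\partial_N\ln M)\Delta N$. Substituting (a) and (b) yields~\eqref{eq:app_prop3}. Interiority at the shifted point is what keeps the expansion well defined along the segment.

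The main obstacle is being careful about what $\varepsilon$ means in the derivative. $H^{*}$ depends on $(I,N)$ only through $c_H$, so the chain-rule step requires that $B$ and the targeting rule $B'(H^{*})=c_H$ do not depend on $I$ or $N$ directly, and I would state this explicitly. The remaining points are routine: under a non-constant elasticity, (c) holds only to first order with $\varepsilon$ evaluated at the base point, matching the remark after~\eqref{eq:app_decomposition}; and the sign claims need only Assumption~\ref{ass:cost_advantage}.
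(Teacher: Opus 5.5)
Your proposal is correct and follows the paper's proof essentially step for step: log-differentiate $M=(N-I)c_HH^{*}/r_M$, obtain $\pi_I$ and $\pi_N$ from Leibniz's rule on \eqref{eq:app_cH}, and combine the $\ln c_H$ and $\ln H^{*}$ terms through the chain rule into $(1-\varepsilon)\,d\ln c_H$. The remaining steps also match: you rearrange the inequality in (b), and you sum the weighted partials for (c); your added remarks on continuity of $\gamma_D,\gamma_M$ and on $H^{*}$ depending on $(I,N)$ only through $c_H$ are sensible refinements that the paper leaves implicit.
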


\begin{proof}
Differentiate $\ln M$ term by term. The task-content term gives $\mp1/(N-I)$ depending on which boundary moves. Leibniz's rule applied to \eqref{eq:app_cH} gives $\partial\ln c_H/\partial I=\pi_I$ and $\partial\ln c_H/\partial N=\pi_N$: moving $I$ reassigns the marginal task between device and clinical production, while moving $N$ retires the lowest-index task from the range and admits a new highest-index task in its place. The chain rule gives $\partial\ln H^{*}/\partial I=-\varepsilon\pi_I$ and $\partial\ln H^{*}/\partial N=-\varepsilon\pi_N$. Summing gives (a) and (b); (c) sums (a) weighted by $\Delta I$ and (b) weighted by $\Delta N$. For (b), solving $1/(N-I)+(1-\varepsilon)\pi_N>0$ for $\varepsilon$ gives threshold~\eqref{eq:app_epsilon_star} and the two sign conditions stated above, obtained by dividing through by $\pi_N$ and flipping the inequality when $\pi_N<0$. Since $N>I$ by construction, $N-I>0$, so $\varepsilon^{*}>1$ whenever $\pi_N>0$, and the sufficient condition $\varepsilon\leq1$ together with $\pi_N\geq0$ always lies inside the region where~\eqref{eq:app_prop2} is positive.
\end{proof}

To first order, part (c) is positive if and only if task creation, net
of the scale response through $c_H$, outweighs displacement, a statement
about the linear approximation's sign, not the exact, finite-shift
change. The omitted terms are second order in $(\Delta I,\Delta N)$ and
shrink only as the shift itself vanishes, not merely because it is
small: the exact task-content change is $\ln[(N_1-I_1)/(N_0-I_0)]$
rather than $(\Delta N-\Delta I)/(N-I)$, and the exact change in $c_H$
integrates $\pi_I$ and $\pi_N$ along the path from $(I_0,N_0)$ to
$(I_1,N_1)$ rather than evaluating them once at the pre-adoption point.
We treat the linear approximation as reasonable for a shift reallocating
one task among the many in the unit task range, abrupt in calendar time
though it is, but not as exact. Because $\gamma_D(\cdot)$ and
$\gamma_M(\cdot)$ are unobserved, the linearization cannot be tested
directly.

\subsection*{Adoption dynamics}
\label{app:dynamics}

Proposition~\ref{prop:comparative_statics}(c) compares two fixed points.
The event study in Section~\ref{sec:empirical_strategy} recovers a full
dynamic path $\{\tau_m\}_{m\geq 0}$, so the model needs to say what that
path looks like. The automation margin moves the moment the algorithm is
switched on, $I_m=I_0+\Delta I$ for every $m\geq 0$, while the clinical
margin moves more slowly, so that
\begin{equation}
N_m
=
N_0
+
\Delta N^{\mathrm{impact}}
+
\Delta N^{\mathrm{learning}}
\left(1-e^{-\Psi m}\right),
\qquad m\geq 0,
\quad \Psi>0,
\label{eq:app_Nm}
\end{equation}
with $I_m=I_0$ and $N_m=N_0$ for $m<0$. The parameter $\Psi$ governs the
rate at which complementary clinical tasks accumulate after adoption. We
refer to this as a learning process for convenience, since converting a
continuous glucose and insulin trace into treatment changes is a skill
clinicians plausibly build through repeated encounters with a given
patient's device data, but the reduced form in (25) does not by itself identify clinician learning as opposed to other channels operating at a similar pace, such as administrative reauthorization or the gradual accumulation of patient-specific monitoring history. In the empirical implementation, $m=0$ corresponds to the first observed AID-related device supply rather than to confirmed activation of the control algorithm, so the model's instantaneous automation margin is an upper bound on how sharply exposure begins in the data.

\paragraph{Formal statement of Corollary~\ref{cor:impact-dynamics}.}
Define
\[
A
=
\frac{1}{N_0-I_0}
+
(1-\varepsilon)\pi_N,
\qquad
D
=
\frac{1}{N_0-I_0}
-
(1-\varepsilon)\pi_I,
\]
where $A$ is the marginal contribution of a newly created task and $D$
is the marginal contribution of an automated task. The two sufficient
conditions referenced in Corollary~\ref{cor:impact-dynamics} are:
\begin{enumerate}[label=(C\arabic*)]
    \item $A>0$, the condition under which
    Proposition~\ref{prop:reinstatement} delivers a positive net effect
    of task creation; and
    \item
    \[
    A\,\Delta N^{\mathrm{impact}} \leq D\,\Delta I,
    \]
    so that task creation at initiation is small relative to
    displacement.
\end{enumerate}
Under $\varepsilon\leq 1$,
$
D
\geq
\tfrac{1}{N_0-I_0}
>0 .
$
If, in addition, $\pi_I+\pi_N\leq 0$, so that an equal marginal expansion
of the frontier and the task range ($dI=dN$) weakly reduces the unit cost
of control, then condition~\textnormal{(C2)} is implied by the simpler
restriction
$
\Delta N^{\mathrm{impact}}\leq\Delta I .
$
In the special case $\varepsilon=1$ the scale term disappears entirely,
$A=D$, and condition~\textnormal{(C2)} reduces directly to
$
\Delta N^{\mathrm{impact}}\leq\Delta I .
$

\begin{proof}[Proof of Corollary~\ref{cor:impact-dynamics}]
Linearizing Proposition~\ref{prop:comparative_statics}(c) around the
pre-adoption state $(I_0,N_0)$ and substituting the dynamic path
$\Delta N_m=N_m-N_0$ from equation~\eqref{eq:app_Nm} together with the
constant $\Delta I=I_m-I_0$ gives
\begin{equation}
\tau_m
\;\approx\;
A\,\bigl[\Delta N^{\mathrm{impact}}
+\Delta N^{\mathrm{learning}}\bigl(1-e^{-\Psi m}\bigr)\bigr]
\;-\;
D\,\Delta I,
\label{eq:app_taupath}
\end{equation}
with $A$ and $D$ as defined above, fixed at their pre-adoption values,
and $\Delta I>0$.

\emph{Impact.} At $m=0$ the learning term vanishes and
$\tau_0=A\Delta N^{\mathrm{impact}}-D\Delta I$. Condition (C2) states
directly that $A\Delta N^{\mathrm{impact}}\leq D\Delta I$, hence
$\tau_0\leq 0$.

\emph{Monotonicity and convergence.} Differentiating
equation~\eqref{eq:app_taupath},
\[
\frac{\partial\tau_m}{\partial m}
=
\Psi\,e^{-\Psi m}\,\Delta N^{\mathrm{learning}}\,A
>0,
\]
by (C1) and $\Delta N^{\mathrm{learning}}>0$. Since $m$ indexes semesters,
the relevant object is the discrete difference, which carries the same
sign,
\[
\tau_{m+1}-\tau_m
=
A\,\Delta N^{\mathrm{learning}}\,e^{-\Psi m}\bigl(1-e^{-\Psi}\bigr)
>0,
\]
so the derivative above is shorthand for this discrete monotonicity
result. The path rises strictly in $m$ and converges exponentially at
rate $\Psi$ to the long-run level
$A\bigl(\Delta N^{\mathrm{impact}}+\Delta N^{\mathrm{learning}}\bigr)
-D\Delta I$ implied by the first-order approximation to
Proposition~\ref{prop:comparative_statics}(c).

\emph{Sufficient conditions.} Assumption~\ref{ass:cost_advantage} gives
$\pi_I\leq 0$, so $\varepsilon\leq 1$ implies
$D\geq \tfrac{1}{N_0-I_0}>0$. Moreover,
\[
D-A
=
-(1-\varepsilon)\bigl(\pi_I+\pi_N\bigr)
\;\geq\; 0
\qquad\text{whenever }\varepsilon\leq 1
\text{ and }\pi_I+\pi_N\leq 0.
\]
In that case $\Delta N^{\mathrm{impact}}<\Delta I$ yields
$A\Delta N^{\mathrm{impact}}<A\Delta I\leq D\Delta I$, so (C2) holds. When
$\varepsilon=1$, $A=D=\tfrac{1}{N_0-I_0}$ and (C2) reduces directly to
$\Delta N^{\mathrm{impact}}\leq\Delta I$.
\end{proof}
\renewcommand{\thefigure}{B\arabic{figure}}
\renewcommand{\thetable}{B\arabic{table}}
\renewcommand{\theHfigure}{B\arabic{figure}}
\renewcommand{\theHtable}{B\arabic{table}}
\setcounter{figure}{0}
\setcounter{table}{0}
\renewcommand{\thesection}{\Alph{section}}
\section{Identification with event-time trend extrapolation}
\label{app:identification}

This appendix states why an unrestricted time-varying confound leaves the post-adoption path unidentified, how Assumption~\ref{ass:event_time_extrapolation} restores identification, how an onboarding effect in the reference period enters the trend-adjusted estimates, and how the diagnostics reported in Section~\ref{subsec:diagnostics} are defined. Throughout, $\beta_{gm}$, $\tau_{gm}$, $\widetilde w_{gm}$, $\theta_m$, $\tau_m$, and $c(m)$ are the population objects defined in Section~\ref{subsec:trend_extrapolation}.

\subsection{Non-identification under an unrestricted confound}
\label{app:unrestricted_not_identified}

Consider the stacked event-study regression \eqref{eq:eventstudy}. The term $C_{it}$ is not observed, so the population coefficient on $D^m_{itg}$ absorbs any part of $C_{it}$ that evolves differently for cohort-$g$ adopters and their CEM-weighted controls. Let $\mathrm E_\omega$ denote an expectation taken with the CEM weights $\omega_{ig}$ over the risk-set observation set $\mathcal O_g$. Define the residual bias
\begin{equation}
b_{gm}
=
\mathrm E_\omega\!\big[C_{i,g+m}-C_{i,g-1}\,\big|\,E_i=g,\ i\in\mathcal M_g\big]
-
\mathrm E_\omega\!\big[C_{i,g+m}-C_{i,g-1}\,\big|\,i\in\mathcal R_g,\ E_i>g+m\big],
\label{eq:app_bias}
\end{equation}
the difference between adopters and controls in how the confound changes from the reference semester to event time $m$. Then $\beta_{gm}=\tau_{gm}+b_{gm}$. Aggregating with the cohort shares $\widetilde w_{gm}$ gives equation~\eqref{eq:decomposition},
\begin{equation*}
\theta_m=\tau_m+c(m),
\qquad
c(m)=\sum_{g\in\mathcal G_m}\widetilde w_{gm}\,b_{gm}.
\end{equation*}

The data identify $\{\theta_m\}$ but not its two components separately. To see this, take any candidate treatment path $\{\tau^\ast_m\}$ and set $c^\ast(m)=\theta_m-\tau^\ast_m$. The pair $(\tau^\ast,c^\ast)$ produces exactly the same $\theta_m$ at every $m$ as the true pair $(\tau,c)$. Because nothing in the data distinguishes the two pairs, the path $\{\tau_m:m\ge0\}$ is not identified unless $c(\cdot)$ is restricted. Identification is therefore an assumption about the confound and not a property of the event-study design.

\subsection{Identification under the event-time restriction}
\label{app:low_dimensional_confounding}

Two restrictions together identify the post-adoption path.

\paragraph{No anticipation.} $\tau_{gm}=0$ for all $g$ and all $m<0$. Adoption cannot move outcomes before it happens.

\paragraph{Event-time extrapolation.} Assumption~\ref{ass:event_time_extrapolation}: $c(m)=\phi'\widetilde f(m)$ with $\widetilde f(m)=f(m)-f(-1)$. The baseline uses $f(m)=m$, so $\widetilde f(m)=m+1$ and
\begin{equation}
c(m)=\rho\,(m+1).
\label{eq:app_linear}
\end{equation}

\noindent\textit{Step 1 (normalization).} Since $\beta_{g,-1}=0$ and, under no anticipation, $\tau_{g,-1}=0$, we have $b_{g,-1}=0$ for every $g$ and hence $c(-1)=0$. Equation~\eqref{eq:app_linear} satisfies this for any $\rho$, so the normalization places no restriction on $\rho$.

\noindent\textit{Step 2 (pre-adoption periods identify $\rho$).} For $m<0$, no anticipation gives $\tau_m=0$, so $\theta_m=c(m)=\rho(m+1)$. Any single pre-adoption period with $m\le-2$ identifies $\rho=\theta_m/(m+1)$, because $m+1\neq0$. With several such periods, $\rho$ is overidentified.

\noindent\textit{Step 3 (post-adoption periods).} For $m\ge0$, $\theta_m=\tau_m+\rho(m+1)$. Subtracting the known quantity $\rho(m+1)$ gives $\tau_m=\theta_m-\rho(m+1)$, which is identified.

\noindent\textit{Estimation.} The baseline fits $\rho$ on $\theta_{-3}$ and $\theta_{-2}$, with the line constrained to equal zero at $m=-1$, jointly with the event-study coefficients by GMM (Section~\ref{subsec:trend_extrapolation}). Let $x=(x_{-3},x_{-2})'=(-2,-1)'$ collect the values of $m+1$ at the fitted points, $\widehat\theta_{\mathrm{pre}}=(\widehat\theta_{-3},\widehat\theta_{-2})'$, and $W$ the $2\times2$ weighting matrix implied by the estimator. The slope estimate is
\begin{equation}
\widehat\rho=\frac{x'W\widehat\theta_{\mathrm{pre}}}{x'Wx},
\label{eq:app_rho}
\end{equation}
and the trend-adjusted coefficient at any modeled event time is
\begin{equation}
\widehat\tau_m=\widehat\theta_m-\widehat\rho\,(m+1),
\qquad m\neq-1.
\label{eq:app_tau}
\end{equation}
For $m\ge0$ this is equation~\eqref{eq:trendadjusted}. For $m<0$ it is the residual used in the adjusted pre-period test of Section~\ref{app:diagnostic_tests}. The coefficient at $m=-4$ does not enter \eqref{eq:app_rho}, so $\widehat\tau_{-4}$ is a held-out residual.

\subsection{An onboarding effect in the reference period}
\label{app:admin_effect}

Suppose training and documentation before device supply raise the outcome of adopters by $\delta$ in the semester before recorded adoption, and in no other semester. Because onboarding adds visits and tests, $\delta\ge0$. This is a violation of no anticipation at $m=-1$ only. Maintain no anticipation at $m\le-2$ and a linear confound. We derive how $\delta$ enters $\widehat\tau_m$.

\noindent\textit{Step 1 (coefficients relative to a contaminated reference).} Every coefficient is a contrast with $m=-1$. Adopters' reference-period level is raised by $\delta$, so the contrast at every $m\neq-1$ falls by $\delta$:
\begin{equation}
\theta_m=\tau_m+\rho(m+1)-\delta,
\qquad m\neq-1,
\label{eq:app_admin_theta}
\end{equation}
where $\rho$ is the true slope of the confound. The coefficient at $m=-1$ remains zero by normalization.

\noindent\textit{Step 2 (pre-adoption coefficients).} For $m\le-2$, $\tau_m=0$, so $\theta_{\mathrm{pre}}=\rho\,x-\delta\,\iota$, with $\iota=(1,1)'$.

\noindent\textit{Step 3 (fitted slope).} Substituting into \eqref{eq:app_rho},
\begin{equation*}
\widehat\rho
=
\frac{x'W(\rho x-\delta\iota)}{x'Wx}
=
\rho-\delta\,\frac{x'W\iota}{x'Wx}
=
\rho+\kappa\,\delta,
\qquad
\kappa\equiv-\frac{x'W\iota}{x'Wx}.
\end{equation*}

\noindent\textit{Step 4 (bias in the adjusted estimates).} For $m\ge0$, combine \eqref{eq:app_tau}, \eqref{eq:app_admin_theta}, and Step~3:
\begin{equation}
\widehat\tau_m
=
\tau_m+\rho(m+1)-\delta-(\rho+\kappa\delta)(m+1)
=
\tau_m-\delta\big[1+\kappa(m+1)\big].
\label{eq:app_admin_bias}
\end{equation}

\noindent\textit{Step 5 (the value of $\kappa$).} With a diagonal weighting matrix $W=\mathrm{diag}(w_{-3},w_{-2})$, $w_{-3},w_{-2}>0$, we have $x'Wx=4w_{-3}+w_{-2}$ and $x'W\iota=-(2w_{-3}+w_{-2})$, so
\begin{equation*}
\kappa=\frac{2w_{-3}+w_{-2}}{4w_{-3}+w_{-2}}.
\end{equation*}
The numerator is smaller than the denominator, so $\kappa<1$. Twice the numerator, $4w_{-3}+2w_{-2}$, exceeds the denominator, so $\kappa>1/2$. With equal weights, $\kappa=3/5$ and the bias equals $-\delta(8+3m)/5$. For a non-diagonal $W$, $\kappa$ follows from the estimated weighting matrix.

\noindent\textit{Implications.} Whenever $\kappa>0$, the bias in \eqref{eq:app_admin_bias} is non-positive and grows in absolute value with $m$. An onboarding effect in the reference period therefore pushes the trend-adjusted estimates down, by more at longer horizons, and cannot by itself generate a positive, rising post-adoption path. For a given $\kappa$, the size of the bias is governed by $\delta$.

\paragraph{When the effect cancels.} The effect cancels only if the reference-period point plays no role in the fit. Fit an unconstrained line $a+s\,m$ on periods $m\le-2$ and take $\widetilde\tau_m=\widehat\theta_m-(\widehat a+\widehat s\,m)$. By \eqref{eq:app_admin_theta}, every fitted point equals $\rho(m+1)-\delta=(\rho-\delta)+\rho\,m$, which lies exactly on a line with intercept $\rho-\delta$ and slope $\rho$. Any fit with at least two distinct points therefore returns $\widehat a=\rho-\delta$ and $\widehat s=\rho$ in the population. For $m\ge0$, $\widetilde\tau_m=\tau_m+\rho(m+1)-\delta-(\rho-\delta)-\rho m=\tau_m$, so $\delta$ cancels exactly. The baseline differs because its line is forced through zero at $m=-1$, a point that is itself shifted by $\delta$ relative to the others.

\paragraph{Estimating $\delta$ by re-normalization.} Re-estimate the raw event study with reference period $m=-4$. Coefficients are now contrasts with $m=-4$, which onboarding does not affect. Under no anticipation at $m\le-2$ and a linear confound, the coefficient at $m$ is $\theta^{(-4)}_m=\rho(m+4)+\delta\,\mathbf 1\{m=-1\}$ for $m<0$. A line through zero at $m=-4$, fitted on periods before $m=-2$, recovers $\rho$. The departure of $\theta^{(-4)}_{-1}$ from that line estimates $\delta$, and the departure of $\theta^{(-4)}_{-2}$ should be zero. A nonzero departure at $m=-2$ is inconsistent with an effect confined to $m=-1$. It could instead reflect curvature in the confound or anticipation beyond the reference period, and the exercise cannot separate these two explanations. If a departure appears only at $m=-1$, \eqref{eq:app_admin_bias} with the estimated $\delta$ gives the implied bias under the baseline window. If a departure appears at $m=-2$ as well, the fitting window should move to periods before $m=-2$ with a free intercept, where the cancellation result above applies.

\subsection{Diagnostic test definitions}
\label{app:diagnostic_tests}

This subsection defines the diagnostics described in Section~\ref{subsec:trend_extrapolation} and reported in Sections~\ref{sec:results} and~\ref{subsec:diagnostics}. All tests are Wald tests based on the patient-clustered variance-covariance matrix of the jointly estimated coefficients.

\textit{Raw pre-trends test.} $H_0:\theta_m=0$ for $m\in\{-4,-3,-2\}$. Flat pre-trends are compatible with Assumption~\ref{ass:event_time_extrapolation}, as the case $\rho=0$, but the assumption does not require them. Failing to reject is not sufficient for the design to be valid, since it says nothing about the post-adoption continuation of the confound. Rejecting shows that the unadjusted event study would be misleading. It does not by itself show that the extrapolation design is invalid.

\textit{Adjusted joint pre-period test.} $H_0:\tau_m=0$ for $m\in\{-4,-3,-2\}$, evaluated at $\widehat\tau_m$ from \eqref{eq:app_tau}, with the variance-covariance matrix propagated through the trend-fitting step. One slope is estimated, so under $H_0$ the statistic has two degrees of freedom. Within the fitting window the test has little power against curvature, because one slope is fitted to two coefficients, and much of its content comes from the held-out residual at $m=-4$. A rejection means the constrained linear path does not describe the pre-adoption coefficients. That could reflect curvature in the confound, an onboarding effect in the reference period, or a departure confined to the held-out period. The Rambachan and Roth (2023) analysis in Section~\ref{subsec:diagnostics} complements this test by reporting how large a change in the slope of the differential trend the post-adoption conclusions can tolerate. It is applied to the raw coefficients and does not bound the specific departure detected here.

\textit{Leveling-off test.} $H_0:\bar\tau_{\mathcal L}=\bar\tau_{\mathcal E}$, where $\bar\tau_{\mathcal E}$ and $\bar\tau_{\mathcal L}$ are the averages of $\widehat\tau_m$ over an early set $\mathcal E=\{0,1\}$ and a late set $\mathcal L=\{3,4\}$ of post-adoption event times. Failing to reject is consistent with a plateau, but confidence intervals widen at longer horizons and the test loses power there. A failure to reject is therefore inconclusive and does not confirm that the path has flattened. This test concerns the dynamic prediction of the task-based model and not the validity of the extrapolated counterfactual.

\renewcommand{\thefigure}{C\arabic{figure}}
\renewcommand{\thetable}{C\arabic{table}}
\renewcommand{\theHfigure}{C\arabic{figure}}
\renewcommand{\theHtable}{C\arabic{table}}
\setcounter{figure}{0}
\setcounter{table}{0}


\section*{Supplementary Tables and Figures}

\begin{table}[!htbp]
\centering
\caption{Baseline characteristics and pre-adoption care levels by ever-AID status}
\label{tab:baseline_characteristics}
\scriptsize
\begin{threeparttable}
\begin{adjustbox}{max width=\textwidth}
\begin{tabular}{lrrrrrr}
\toprule
Variable & N never & N ever & Mean never & Mean ever & Diff. & Std. diff. \\
\midrule
\multicolumn{7}{l}{\textit{Panel A: Demographic and geographic covariates}} \\
Female & 1,325 & 283 & 0.601 & 0.392 & -0.209 & -0.426 \\
Age & 1,325 & 283 & 49.491 & 41.816 & -7.675 & -0.510 \\
Age 65+ & 1,325 & 283 & 0.193 & 0.032 & -0.161 & -0.528 \\
Age 75+ & 1,325 & 283 & 0.075 & 0.000 & -0.075 & -0.404 \\
Mountain area & 1,306 & 282 & 0.006 & 0.007 & 0.001 & 0.012 \\
Hill area & 1,306 & 282 & 0.116 & 0.149 & 0.033 & 0.096 \\
Plain area & 1,306 & 282 & 0.877 & 0.844 & -0.034 & -0.097 \\
Urban city & 1,306 & 282 & 0.377 & 0.472 & 0.095 & 0.193 \\
Suburb / town & 1,306 & 282 & 0.495 & 0.447 & -0.048 & -0.096 \\
Rural area & 1,306 & 282 & 0.129 & 0.082 & -0.047 & -0.154 \\
\midrule
\multicolumn{7}{l}{\textit{Panel B: Clinical risk and prior technology use}} \\
Any Meteda complication & 1,325 & 283 & 0.041 & 0.035 & -0.005 & -0.028 \\
Any hospitalization & 1,325 & 283 & 0.071 & 0.053 & -0.018 & -0.074 \\
Any ER/UCC visit & 1,325 & 283 & 0.198 & 0.166 & -0.032 & -0.082 \\
Any acute use & 1,325 & 283 & 0.215 & 0.191 & -0.024 & -0.060 \\
Hospitalizations plus ER/UCC contacts & 1,325 & 283 & 0.371 & 0.279 & -0.091 & -0.114 \\
Pump active\tnote{a} & 1,276 & 282 & 0.001 & 0.074 & 0.074 & 0.394 \\
Any device active\tnote{a} & 1,276 & 282 & 0.085 & 0.755 & 0.670 & 1.845 \\
\midrule
\multicolumn{7}{l}{\textit{Panel C: Pre-adoption levels of outcome variables}\tnote{b}} \\
Any visit in semester & 1,325 & 283 & 0.395 & 0.544 & 0.149 & 0.301 \\
HbA1c measured & 1,325 & 283 & 0.401 & 0.364 & -0.037 & -0.076 \\
LDL measured & 1,325 & 283 & 0.399 & 0.548 & 0.148 & 0.300 \\
Blood pressure measured & 1,325 & 283 & 0.226 & 0.396 & 0.170 & 0.373 \\
BMI measured & 1,325 & 283 & 0.372 & 0.523 & 0.151 & 0.307 \\
Albuminuria measured & 1,325 & 283 & 0.255 & 0.382 & 0.127 & 0.274 \\
eGFR measured & 1,325 & 283 & 0.368 & 0.378 & 0.010 & 0.020 \\
Eye exam performed & 1,325 & 283 & 0.066 & 0.064 & -0.002 & -0.008 \\
Seven-item measured-process count & 1,325 & 283 & 2.087 & 2.654 & 0.567 & 0.284 \\
\bottomrule
\end{tabular}
\end{adjustbox}
\begin{tablenotes}[flushleft]
\footnotesize
\item Notes: Baseline is the nearest active pre-AID semester for ever-AID patients and the first active device-era semester for never-AID patients. Standardized differences use pooled standard deviations.
\item[a] Pump and device activity at baseline are mechanically elevated among AID adopters because AID is typically prescribed to patients already on pump or CGM therapy (Section~\ref{subsec:background}); these rows should not be read as evidence of an unbalanced confound.
\item[b] \textit{Notes:} Panel C reports pre-adoption levels of individual
process-of-care measures and the broader seven-item measured-process
count. The seven-item count is used to characterize pre-adoption care
intensity and differs from the six-component composite outcome defined
in Section~\ref{subsec:outcomes}: it includes blood-pressure and BMI measurement
and excludes diabetologist visits..
\end{tablenotes}
\end{threeparttable}
\end{table}

\begin{table}[!htbp]
\centering
\caption{Event-time support among treated observations}
\label{tab:a5_eventtime_support}
\scriptsize
\renewcommand{\arraystretch}{0.90}
\begin{adjustbox}{width=\textwidth, max totalheight=0.82\textheight, keepaspectratio}
\begin{threeparttable}
\begin{tabular}{rrrr}
\toprule
Event time & Rows & Person-semesters & Main window \\
\midrule
-19 & 10  & 7.753   & 0 \\
-18 & 34  & 27.152  & 0 \\
-17 & 42  & 35.424  & 0 \\
-16 & 60  & 51.912  & 0 \\
-15 & 77  & 66.825  & 0 \\
-14 & 95  & 85.283  & 0 \\
-13 & 125 & 114.797 & 0 \\
-12 & 203 & 184.472 & 0 \\
-11 & 218 & 205.504 & 0 \\
-10 & 238 & 224.300 & 0 \\
-9  & 244 & 236.298 & 0 \\
-8  & 247 & 243.224 & 0 \\
-7  & 248 & 245.440 & 0 \\
-6  & 255 & 248.280 & 0 \\
-5  & 254 & 251.538 & 0 \\
-4  & 262 & 254.985 & 1 \\
-3  & 267 & 260.039 & 1 \\
-2  & 272 & 267.619 & 1 \\
-1  & 276 & 274.394 & 1 \\
0   & 283 & 280.881 & 1 \\
1   & 263 & 263.000 & 1 \\
2   & 232 & 231.354 & 1 \\
3   & 221 & 220.228 & 1 \\
4   & 200 & 198.841 & 1 \\
5   & 179 & 179.000 & 0 \\
6   & 155 & 154.196 & 0 \\
7   & 117 & 117.000 & 0 \\
8   & 23  & 23.000  & 0 \\
9   & 17  & 17.000  & 0 \\
10  & 2   & 2.000   & 0 \\
11  & 1   & 1.000   & 0 \\
12  & 1   & 1.000   & 0 \\
\bottomrule
\end{tabular}
\begin{tablenotes}[flushleft]
\footnotesize
\item[] \textit{Notes:} Main window equals one for event times from -4 to +4 semesters.
\end{tablenotes}
\end{threeparttable}
\end{adjustbox}
\end{table}

\begin{table}[!htbp]
\centering
\caption{Year-half process and utilization trends}
\label{tab:a6b_process_utilization_trends}
\scriptsize
\setlength{\tabcolsep}{4pt}
\renewcommand{\arraystretch}{0.90}
\begin{adjustbox}{width=\textwidth}
\begin{threeparttable}
\begin{tabular}{lrrrrrr}
\toprule
Year-half 
& \makecell{Person-\\semesters} 
& \makecell{Covered\\indicators} 
& \makecell{Measured\\indicators} 
& Hospitalization 
& ER/UCC 
& \makecell{Any acute\\use} \\
\midrule
2016H1 & 908.615   & 4.034 & 3.581 & 0.045 & 0.169 & 0.184 \\
2016H2 & 1,098.337 & 4.607 & 3.227 & 0.045 & 0.176 & 0.193 \\
2017H1 & 1,169.884 & 4.839 & 3.423 & 0.055 & 0.175 & 0.198 \\
2017H2 & 1,219.217 & 4.841 & 3.272 & 0.039 & 0.185 & 0.194 \\
2018H1 & 1,245.497 & 4.851 & 3.283 & 0.054 & 0.184 & 0.205 \\
2018H2 & 1,273.924 & 4.776 & 3.234 & 0.052 & 0.178 & 0.197 \\
2019H1 & 1,302.481 & 4.792 & 3.311 & 0.050 & 0.172 & 0.189 \\
2019H2 & 1,326.098 & 4.673 & 2.904 & 0.059 & 0.194 & 0.211 \\
2020H1 & 1,352.071 & 4.370 & 2.026 & 0.044 & 0.141 & 0.152 \\
2020H2 & 1,376.451 & 4.067 & 2.085 & 0.056 & 0.154 & 0.170 \\
2021H1 & 1,418.127 & 3.788 & 2.640 & 0.050 & 0.139 & 0.151 \\
2021H2 & 1,440.815 & 3.985 & 2.805 & 0.063 & 0.169 & 0.184 \\
2022H1 & 1,459.492 & 4.261 & 3.166 & 0.061 & 0.168 & 0.191 \\
2022H2 & 1,476.402 & 4.342 & 3.003 & 0.056 & 0.168 & 0.182 \\
2023H1 & 1,492.050 & 4.477 & 3.252 & 0.063 & 0.183 & 0.200 \\
2023H2 & 1,504.864 & 4.488 & 3.027 & 0.073 & 0.169 & 0.197 \\
2024H1 & 1,517.379 & 4.553 & 3.344 & 0.057 & 0.193 & 0.212 \\
2024H2 & 1,528.196 & 4.527 & 3.126 & 0.056 & 0.223 & 0.236 \\
2025H1 & 1,529.812 & 4.565 & 3.084 & 0.061 & 0.203 & 0.220 \\
2025H2 & 1,519.576 & 4.338 & 1.819 & 0.042 & 0.226 & 0.241 \\
\bottomrule
\end{tabular}
\begin{tablenotes}[flushleft]
\footnotesize
\item[] \textit{Notes:} Means are weighted by effective person-semester exposure.
\end{tablenotes}
\end{threeparttable}
\end{adjustbox}
\end{table}

\begin{table}[!htbp]
\centering
\caption{AID adoption cohorts: size and demographics}
\label{tab:aid_cohort_demographics}
\scriptsize
\setlength{\tabcolsep}{4.5pt}
\renewcommand{\arraystretch}{0.95}
\begin{threeparttable}
\begin{tabular*}{\textwidth}{@{\extracolsep{\fill}}lrrrrr}
\toprule
AID cohort 
& New adopters 
& Cumulative adopters 
& Female (\%) 
& Age mean 
& Age $<18$ (\%) \\
\midrule
Jul-Dec 2019 & 1   & 1   & 100.0 & 39.50 & 0.0 \\
Jul-Dec 2020 & 1   & 2   & 0.0   & 28.50 & 0.0 \\
Jan-Jun 2021 & 16  & 18  & 43.8  & 42.50 & 0.0 \\
Jul-Dec 2021 & 7   & 25  & 28.6  & 34.07 & 14.3 \\
Jan-Jun 2022 & 97  & 122 & 33.0  & 42.78 & 0.0 \\
Jul-Dec 2022 & 37  & 159 & 35.1  & 41.04 & 0.0 \\
Jan-Jun 2023 & 24  & 183 & 33.3  & 42.42 & 0.0 \\
Jul-Dec 2023 & 19  & 202 & 26.3  & 38.87 & 0.0 \\
Jan-Jun 2024 & 20  & 222 & 50.0  & 43.85 & 0.0 \\
Jul-Dec 2024 & 10  & 232 & 70.0  & 46.10 & 10.0 \\
Jan-Jun 2025 & 31  & 263 & 61.3  & 44.68 & 0.0 \\
Jul-Dec 2025 & 20  & 283 & 35.0  & 41.95 & 0.0 \\
All adopters & 283 & 283 & 39.2  & 42.31 & 0.7 \\
\bottomrule
\end{tabular*}
\begin{tablenotes}
\scriptsize
\item Notes: A cohort is defined by the first semester in which a patient is observed receiving automated insulin delivery (AID). Percentages are computed among non-missing observations.
\end{tablenotes}
\end{threeparttable}
\end{table}


\begin{table}[!htbp]
\centering
\caption{AID adoption cohorts: technology use and preventive-care coverage}
\label{tab:aid_cohort_preventive_care}
\scriptsize
\setlength{\tabcolsep}{4.0pt}
\renewcommand{\arraystretch}{0.95}
\begin{threeparttable}
\begin{adjustbox}{max width=\textwidth}
\begin{tabular}{lrrrrrrr}
\toprule
AID cohort 
& Pump active 
& AID share 
& Proc. index 
& HbA1c 
& LDL 
& BP 
& Eye \\
& (\%) 
& (\%) 
& mean 
& cov. (\%) 
& cov. (\%) 
& cov. (\%) 
& cov. (\%) \\
\midrule
Jul-Dec 2019 & 0.0  & .     & 5.00 & 100.0 & 0.0  & 100.0 & 0.0 \\
Jul-Dec 2020 & 0.0  & 11.41 & 2.00 & 0.0   & 0.0  & 100.0 & 0.0 \\
Jan-Jun 2021 & 0.0  & 58.98 & 4.62 & 68.8  & 93.8 & 62.5  & 25.0 \\
Jul-Dec 2021 & 0.0  & 79.89 & 4.86 & 85.7  & 57.1 & 85.7  & 14.3 \\
Jan-Jun 2022 & 0.0  & 39.65 & 3.16 & 38.1  & 68.0 & 46.4  & 12.4 \\
Jul-Dec 2022 & 0.0  & 67.05 & 3.57 & 43.2  & 78.4 & 59.5  & 16.2 \\
Jan-Jun 2023 & 8.3  & 28.13 & 3.50 & 50.0  & 66.7 & 50.0  & 20.8 \\
Jul-Dec 2023 & 0.0  & 50.97 & 3.58 & 47.4  & 68.4 & 52.6  & 15.8 \\
Jan-Jun 2024 & 10.0 & 50.03 & 4.15 & 40.0  & 90.0 & 70.0  & 30.0 \\
Jul-Dec 2024 & 0.0  & 44.62 & 3.50 & 30.0  & 80.0 & 70.0  & 20.0 \\
Jan-Jun 2025 & 22.6 & 43.49 & 4.26 & 51.6  & 90.3 & 67.7  & 22.6 \\
Jul-Dec 2025 & 5.0  & 67.53 & 4.85 & 65.0  & 90.0 & 70.0  & 30.0 \\
All adopters & 4.2  & 48.34 & 3.72 & 46.6  & 76.0 & 57.6  & 18.4 \\
\bottomrule
\end{tabular}
\end{adjustbox}
\begin{tablenotes}[flushleft]
\scriptsize
\item[] \textit{Notes}: Pump active and AID share are measured in the adoption semester. Previous-semester coverage variables are measured immediately before AID adoption. The preventive-care coverage index counts covered process indicators among HbA1c, LDL, blood pressure, BMI, albuminuria, eGFR, and eye-exam coverage.
\end{tablenotes}
\end{threeparttable}
\end{table}

\begin{table}[!htbp]
\centering
\caption{AID adoption cohorts: prior health risk}
\label{tab:aid_cohort_prior_risk}
\scriptsize
\setlength{\tabcolsep}{5.5pt}
\renewcommand{\arraystretch}{0.95}
\begin{threeparttable}
\begin{tabular*}{\textwidth}{@{\extracolsep{\fill}}lrrr}
\toprule
AID cohort 
& New adopters 
& Prior acute use (\%) 
& Prior complication (\%) \\
\midrule
Jul-Dec 2019 & 1   & 100.0 & 0.0 \\
Jul-Dec 2020 & 1   & 100.0 & 0.0 \\
Jan-Jun 2021 & 16  & 81.2  & 18.8 \\
Jul-Dec 2021 & 7   & 57.1  & 0.0 \\
Jan-Jun 2022 & 97  & 82.5  & 32.0 \\
Jul-Dec 2022 & 37  & 89.2  & 48.6 \\
Jan-Jun 2023 & 24  & 66.7  & 20.8 \\
Jul-Dec 2023 & 19  & 68.4  & 15.8 \\
Jan-Jun 2024 & 20  & 85.0  & 20.0 \\
Jul-Dec 2024 & 10  & 80.0  & 40.0 \\
Jan-Jun 2025 & 31  & 90.3  & 35.5 \\
Jul-Dec 2025 & 20  & 100.0 & 20.0 \\
All adopters & 283 & 82.7  & 29.3 \\
\bottomrule
\end{tabular*}
\begin{tablenotes}[flushleft]
\scriptsize
\item[] \textit{Notes}: Prior acute use and prior complications are measured before the first AID semester. Prior acute use includes any hospitalization or ER/UCC contact before adoption.
\end{tablenotes}
\end{threeparttable}
\end{table}


\begin{table}[!htbp]
\centering
\scriptsize
\caption{Timing of deaths and non-death exits by semester}
\label{tab:death_attrition_timing}
\begin{threeparttable}
\begin{tabular*}{\textwidth}{@{\extracolsep{\fill}}llrrr}
\toprule
Semester & Event & All patients & Ever AID & Never AID \\
\midrule
2020H2 & Death & 2 & 0 & 2 \\
2021H1 & Death & 7 & 0 & 7 \\
2021H2 & Death & 6 & 0 & 6 \\
2022H1 & Death & 9 & 1 & 8 \\
2022H2 & Death & 13 & 0 & 13 \\
2023H1 & Death & 6 & 0 & 6 \\
2023H2 & Death & 6 & 1 & 5 \\
2024H1 & Death & 14 & 2 & 12 \\
2024H2 & Death & 12 & 1 & 11 \\
2025H1 & Death & 8 & 0 & 8 \\
\bottomrule
\end{tabular*}
\begin{tablenotes}[flushleft]
\footnotesize
\item Notes: Deaths are assigned to the semester containing the registry death date. Non-death attrition is assigned to the patient's last active follow-up semester. Non-death attrition excludes patients who die during the study window.
\end{tablenotes}
\end{threeparttable}
\end{table}

\clearpage
\renewcommand{\thefigure}{D\arabic{figure}}
\renewcommand{\thetable}{D\arabic{table}}
\renewcommand{\theHfigure}{D\arabic{figure}}
\renewcommand{\theHtable}{D\arabic{table}}
\setcounter{figure}{0}
\setcounter{table}{0}

\begin{figure}[!htbp]
\centering
\includegraphics[width=0.95\textwidth]{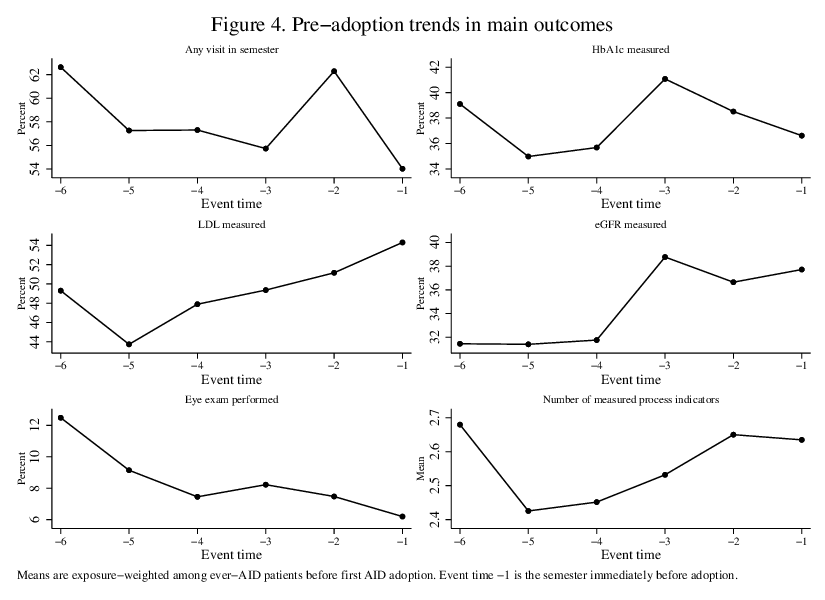}
\caption{Pre-adoption trends in main outcomes}
\label{fig:preadoption_trends}
\end{figure}
\clearpage

\renewcommand{\thefigure}{E\arabic{figure}}
\renewcommand{\thetable}{E\arabic{table}}
\renewcommand{\theHfigure}{E\arabic{figure}}
\renewcommand{\theHtable}{E\arabic{table}}
\setcounter{figure}{0}
\setcounter{table}{0}

\end{document}